\newcolumntype{C}[1]{>{\centering\let\newline\\\arraybackslash$}m{#1}<{$\centering}} 
\DeclareMathAlphabet{\mathscr}{OT1}{pzc}{m}{it}
\newcommand{\C}{\mathbb C}
\newcommand{\R}{\mathbb R}
\newcommand{\Z}{\mathbb Z}
\newcommand{\f}{\mathfrak}
\newcommand{\ca}{\mathscr}
\renewcommand{\atop}[2]{{\genfrac{}{}{0pt}{}{#1}{#2}}}
\newcommand{\newfn}{\DeclareMathOperator}
\newfn{\Tr}{Tr}
\newfn{\End}{End}
\newfn{\Hom}{Hom}
\newfn{\Sp}{Sp}
\newfn{\sgn}{sgn}
\newfn{\ii}{i}
\newfn{\e}{e}
\newfn{\dcm}{d}
\newfn{\n}{n}
\renewcommand{\L}{\mathrm L}
\newcommand{\dd}{\, \mathrm d}
\newcommand{\Vac}{\psi_{\emptyset}}
\newcommand{\set}[2]{\left\{ \, {#1} \, \middle| \, {#2}  \, \right\} }
\newcommand{\inner}[2]{\left\langle {#1}, {#2} \right \rangle }
\newcommand{\innerrnd}[2]{\left({#1}, {#2}\right) }
\newtheorem{thm}{Theorem}[section]
\newtheorem{cor}[thm]{Corollary}
\newtheorem{lem}[thm]{Lemma}
\newtheorem{prop}[thm]{Proposition}
\newtheorem{rem}[thm]{Remark}
\newtheorem{defn}[thm]{Definition}
\newcommand{\rft}[1]{Thm. \ref{#1}}
\newcommand{\rfl}[1]{Lemma \ref{#1}}
\newcommand{\rfp}[1]{Prop. \ref{#1}}
\def\cprime{$'$}
\title{A non-symmetric Yang-Baxter Algebra for the Quantum Nonlinear Schr\"odinger Model}
\author{Bart Vlaar\thanks{E-mail: B.H.M.Vlaar@uva.nl} \\ Korteweg de Vries Institute for Mathematics \\ University of Amsterdam, Netherlands}
\date{\today}
\begin{document}

\maketitle

\begin{abstract}
We study certain non-symmetric wavefunctions associated to the quantum nonlinear Schr\"odinger model, introduced by Komori and Hikami using Gutkin's propagation operator, which involves representations of the degenerate affine Hecke algebra. 
We highlight how these functions can be generated using a vertex-type operator formalism similar to the recursion defining the symmetric (Bethe) wavefunction in the quantum inverse scattering method. 
Furthermore, some of the commutation relations encoded in the Yang-Baxter equation for the relevant monodromy matrix are generalized to the non-symmetric case.
\end{abstract}

\section{Introduction}

The quantum nonlinear Schr\"odinger (QNLS) or Lieb-Liniger model was introduced in 1963 \cite{LiebLi} and has been studied extensively since, e.g \cite{Dorlas, Emsiz, EmsizOS, Gaudin1971-1, Gaudin1971-2, Gaudin1971-3, Gaudin1983, Gutkin1982, Gutkin1985, Gutkin1988, HeckmanOpdam1997, Hikami, KomoriHikami, Korepin, Sklyanin1989, YangYang}. 
It describes a system of $N$ spinless (in particular, bosonic) particles restricted to a circle or an infinite line with pairwise contact interaction whose strength is determined by a constant $\gamma \in \R$; most of the theory deals with the repulsive case ($\gamma>0$). 
In many ways the QNLS model is a prototypical integrable model; it was introduced \cite{LiebLi} as the first example of a parameter-dependent boson gas for which eigenstates and eigenvalues of the quantum Hamiltonian can be calculated exactly. 
Earlier, Girardeau \cite{Girardeau} studied a related system without a (nontrivial) parameter but which can be obtained from the QNLS model in the limit $\gamma \to \infty$. 
Furthermore, there has been experimental interest; the low energy eigenstates of a certain gas of three-dimensional particles in a long cylinder are described by the QNLS model \cite{LiebSY,SeiringerY} and such systems have been manufactured \cite{Amerongen,AmerongenEWKD} by magnetically trapping and cooling rubidium-85 atoms. \\

Consider the standard Euclidean basis of $\R^N$ consisting of the vectors $\bm e_1,\ldots,\bm e_N$. 
Assume the particle coordinates are given by $\bm x = (x_1,\ldots,x_N) \in J^N$ where $J=[x^+,x^-] \subset \R$ and write $\partial_j = \frac{\partial}{\partial x_j}$.
In convenient units, the Hamiltonian for the QNLS model is formally given by 
\[ H^\gamma_{(N)} = - \sum_{j=1}^N \partial_j^2 + 2\gamma \sum_{1 \leq j < k \leq N} \delta(x_j-x_k). \]
Its eigenfunctions will be referred to as \emph{wavefunctions}; the eigenvalue problem for $H^\gamma_{(N)}$ can be made rigorous \cite{LiebLi} by replacing it with a Helmholtz equation and imposing certain jump conditions on the derivatives of the candidate functions. \\

We note that a physically acceptable wavefunction $\Psi$ describing a bosonic system must be symmetric: $\Psi(x_1,\ldots,x_N) = \Psi(x_{w1},\ldots,x_{wN})$ for all $w \in S_N$. 
Notwithstanding this, we will consider the non-symmetric\footnote{Where suitable we will denote non-symmetric objects by lowercase letters and their symmetric counterparts by the corresponding capital letters.} eigenfunctions $\psi$ of the QNLS Hamiltonian, brought into the theoretical picture of the QNLS model by Komori and Hikami \cite{Hikami,KomoriHikami} by means of Gutkin's \emph{propagation operator} \cite{EmsizOS,Gutkin1982}, which intertwines two representations of the A-type \emph{degenerate affine Hecke algebra} (dAHA) \cite{EmsizOS,Gutkin1982,HeckmanOpdam1997}, in analogy to the non-symmetric Jack polynomials in the Calogero-Sutherland-Moser model \cite{BakerForrester1997,BernardGHP,Hikami1996,Polychronakos}. 
From the non-symmetric wavefunctions the symmetric ones are obtained by symmetrization: 
$\Psi(\bm x)  = \frac{1}{N!} \sum_{w \in S_N} \psi(x_{w 1},\ldots,x_{w N})$. 
Our main result is a recursive formula generating the non-symmetric wavefunctions with vertex-type operators, akin to the formula recursively defining the \emph{symmetric} (Bethe) wavefunction in the quantum inverse scattering method \cite{Gutkin1988,KorepinBI,Sklyanin1982}, thus closely tying this method to the Hecke algebra approach, and emphasizing the importance of the non-symmetric wavefunctions in the theoretical understanding of the QNLS model. 
Similarly, a recursive structure is known for the non-symmetric Jack polynomials featuring in the Calogero-Sutherland-Moser model \cite{BakerForrester1997,KnopSahi}.

\subsection{Outline}

We will recall how to treat the QNLS Hamiltonian eigenvalue problem more rigorously in Sect. \ref{sec:2}. We will also discuss the history of the solution methods of the QNLS model, in particular we will review Lieb and Liniger's solution and briefly discuss the quantum inverse scattering method (QISM) for the QNLS model.
In Sect. \ref{sec:3} of the present paper we will discuss aspects of the dAHA approach for the pertinent case (A$_{N\!-\!1}$-type) in more detail. 
In particular, we will review the \emph{propagation operator} and the \emph{non-symmetric eigenfunctions} $\psi_{\bm \lambda}$ alluded to earlier. 
In Sect. \ref{sec:4} we will define two \emph{non-symmetric creation operators} $b^\pm_\mu$ that can be used 
to generate the $\psi_{\bm \lambda}$ recursively, by virtue of convenient commutation relations with the propagation operator.
In Sect. \ref{sec:5} for the case of the QNLS problem on the circle we will define operators $a^\pm_\mu,c^\pm_\mu$ that together with $b^\pm_\mu$ satisfy certain commutation relations.
The connection between the ``non-symmetric'' operators $a^\pm_\mu,b^\pm_\mu,c^\pm_\mu$ and their established ``symmetric'' equivalents, the operators $A_\mu$, $B_\mu$, $C_\mu$, $D_\mu$ from the QISM for the QNLS model is made in Sect. \ref{sec:6}. 
Some well-known commutation relations of these symmetric operators are recovered. 
We conclude in Sect. \ref{sec:7} by summarizing the main results, unresolved issues and possible applications and generalizations.

\subsection*{Acknowledgements}

This paper is based on the author's PhD thesis \cite{VlaarThesis} supervised by Christian Korff at the University of Glasgow, undertaken with financial support from the Engineering and Physical Sciences Research Council.
This work was partially supported by a free competition grant of the Netherlands Organization for Scientific Research (NWO).
The author would like to thank C. Korff, J. Stokman and N. Reshetikhin for useful discussions and suggestions. 
The author is grateful to the referees for their comments.

\section{Rigorous definition of the QNLS model and solution methods } \label{sec:2}

Throughout this paper we will let $\gamma \in \R$, $N \in \Z_{\geq 0}$ and $J = [x^+,x^-] \subset \R$ be fixed but arbitrary. 
The interval $J$ may be unbounded; if it is bounded we will denote $L=x^--x^+>0$.

\subsection{Hyperplanes and derivative jump conditions}

To place $H^\gamma_{(N)}$ on a more rigorous footing, consider the standard A$_{N\!-\!1}$ hyperplane arrangement 
\[ \left\{ V_{j \,k} := (\bm e_j-\bm e_k)^\perp = \set{\bm x \in \R^N}{x_j = x_k} \right\}_{1 \leq j < k \leq N} \]
and\footnote{The vectors $\bm e_j - \bm e_k$, $1 \! \leq \! j \! < \! k \! \leq \! N$, realize a positive system of a finite root system of A$_{N-1}$-type, spanning the subset of $\R^N$ of vectors whose coordinates sum to zero (corresponding to studying the $N$-particle system in the centre-of-mass frame). It is a peculiarity of A-type root systems that their natural realizations do not span the whole coordinate space. In this paper, we will work with Weyl group actions on the whole $\R^N$. }
the associated set of \emph{regular vectors}
\[ \R^N_\text{reg} := \R^N \setminus \bigcup_{1 \leq j ,k \leq N} V_{j \, k} = \set{\bm x \in \R^N}{x_j \ne x_k \text{ if } j \ne k}. \] 
$S_N$ is the Weyl group associated to the collection of hyperplanes $V_{j \, k}$ in the following way. 
Given $1 \leq j \ne k \leq N$, write $s_{j \, k}$ for the transposition swapping $j$ and $k$, and for $j=1,\ldots,N-1$
write $s_j := s_{j \, j+1}$.
Then the orthogonal reflection in the hyperplane $V_{j \, k}$ is given by
\[ s_{j \, k} (x_1,\ldots, x_N) = (x_1,\ldots, \underset{(j)}{x_k}, \ldots, \underset{(k)}{x_j}, \ldots, x_N), \]
which can be extended to a left action of $S_N$ on $\R^N$.
$\R^N_\text{reg}$ is an invariant subset and this action carries over to the set of its connected components, the \emph{alcoves}.
In fact, we have
\begin{equation} \R^N_\text{reg} = \bigcup_{w \in S_N} w \R^N_+, \qquad \R^N_+ = \set{\bm x \in \R^N}{x_1 > \ldots > x_N}. \label{eqn:fundamentalalcove} \end{equation}

Let $U \subset \C^N$. We use the standard notations $\ca F(U),\ca C(U),\ca C^r(U)$ for the vector spaces of functions, continuous functions and, for $r \in \Z_{\geq 1} \cup \{ \infty \}$ and open $U$, $r$-times continuously differentiable functions: $U \to \C$, respectively. If $U$ is an $S_N$-invariant subset of $\C^N$ such as $\R^N$ or $\R^N_\text{reg}$ these are left $S_N$-modules through $(w f)(\bm x) = f(w^{-1} \bm x)$.
Furthermore, the following $S_N$-submodules of $\ca C(\R^N)$ have been introduced \cite{EmsizOS,Gutkin1982}:
\begin{align*}
\ca{CB}^1(\R^N) &= \left\{ \, f \in \ca C(\R^N) \, \middle| \, \forall w \, f|_{w \R^N_+} \text{ has a } \ca{C}^1\text{-extension to an open neighbourhood of } \overline{w \R^N_+} \, \right\},  \\ 
\ca{CB}^\infty(\R^N) &= \set{f \in \ca C(\R^N)}{\forall w \, f|_{w \R^N_+} \text{ is smooth}}.
\end{align*}
Given the above action of $S_N$ on invariant subsets $X \subset \ca C(\R^N)$, the set of $S_N$-invariant (i.e. symmetric) elements of $X$ is denoted $X^{S_N}$. 
The action of the \emph{group} $S_N$ can be linearly extended to an action of the \emph{group algebra} $\C S_N$, which is the algebra consisting of formal linear combinations $\sum_{w \in S_N} c_w w$ where each $c_w \in \C$. 
A particularly important element of $\C S_N$ for the present purposes is the \emph{symmetrizer} $\ca S_{(N)} := \frac{1}{N!} \sum_{w \in S_N} w$, which sends elements of invariant subsets $X \subset \ca C(\R^N)$ to elements in $X^{S_N}$. \\

Given $1 \leq j < k \leq N$, call $\bm x \in V_{j \, k}$ \emph{subregular} if $\bm x$ is not on any other hyperplane. 
For subregular $\bm x \in V_{j \, k}$ and $\delta>0$ small enough, $\bm x_{\pm \delta} := \bm x \pm \delta (\bm e_j-\bm e_k)$ is in an alcove, $w^{-1} \R^N$ say, the walls of which are subsets with nonempty interior of the hyperplanes $V_{w^{-1}(r) \, w^{-1}(r+1)}$ where $r=1,\ldots,N-1$. Hence $w(j)=r$ and $w(k)=r+1$ for some $r=1,\ldots,N-1$. We now recall the following key result.
\begin{prop}\cite[Prop. 2.2]{EmsizOS} 
Let $f \in \ca{CB}^1(\R^N)$ and $\gamma,E\in \R$. 
Then $f$ is an eigenfunction of $H^\gamma_{(N)}$ with eigenvalue $E$ precisely if $f$ satisfies
\begin{equation} - \sum_{j=1}^N \partial_j^2 f|_{\R^N_\text{reg}} = E f|_{\R^N_\text{reg}} \label{eqn:QNLSLaplace} \end{equation}  
and the \emph{derivative jump conditions}
\begin{equation}\label{eqn:QNLSjump}  
\lim_{\atop{\delta \to 0}{\delta>0}} \Bigl( \left(\partial_j-\partial_k\right) f(\bm x_\delta) -\left(\partial_j-\partial_k\right) f(\bm x_{-\delta}) \Bigr) =  2\gamma f(\bm x), \end{equation}
for $1 \leq j < k \leq N$ and $\bm x \in V_{j \, k}$ subregular.
A priori, \eqref{eqn:QNLSLaplace} is interpreted distributionally; however, if $f$ also satisfies \eqref{eqn:QNLSjump} then $f \in \ca{CB}^\infty(\R^N)$.
\end{prop}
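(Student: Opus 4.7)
The plan is to interpret the eigenvalue equation $H^\gamma_{(N)} f = E f$ as an equality of distributions on $\R^N$ and, using the hypothesis $f \in \ca{CB}^1(\R^N)$, to decompose both sides cleanly into a regular part on $\R^N_\text{reg}$ and a singular part supported on the union of walls $\bigcup_{j<k} V_{j \, k}$.

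First I would compute $\langle -\sum_i \partial_i^2 f, \phi \rangle$ for a test function $\phi \in \ca C^\infty_\text{c}(\R^N)$ by writing $\int_{\R^N} f (-\Delta \phi) = \sum_{w \in S_N} \int_{w \R^N_+} f (-\Delta \phi)$ and integrating by parts twice on each closed alcove. The first integration by parts produces boundary terms proportional to $f$, which cancel between adjacent alcoves because $f$ is continuous across walls and the outward normals from the two sides are opposite. The second integration by parts produces terms of the form $\int_{\partial A} \phi \, (\nabla f|_A) \cdot \bm n_A \, dS$, which do \emph{not} cancel: summing the contributions from the two alcoves $A_\pm$ adjacent to a subregular point $\bm x \in V_{j \, k}$ (with $A_\pm$ on the side $\pm(x_j - x_k)>0$), and using that the unit outward normal from $A_\pm$ along $V_{j \, k}$ is $\mp(\bm e_j - \bm e_k)/\sqrt{2}$, yields a singular contribution whose density on $V_{j \, k}$ is $\tfrac{1}{\sqrt{2}}\bigl[(\partial_j-\partial_k)f|_{A_-} - (\partial_j-\partial_k)f|_{A_+}\bigr]$. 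The complement of the subregular set in each wall has codimension $\geq 2$ and contributes nothing.

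Next, since $f$ is continuous across $V_{j \, k}$, the product $\delta(x_j - x_k) f$ is a well-defined distribution; the identification $\delta(x_j - x_k) = \tfrac{1}{\sqrt{2}} \delta_{V_{j \, k}}$, with $\delta_{V_{j \, k}}$ the $(N\!-\!1)$-dimensional surface measure (the Jacobian is $\sqrt{2}$ because $(x_j - x_k)/\sqrt{2}$ is an orthonormal coordinate normal to $V_{j \, k}$), produces a singular contribution $\sqrt{2}\gamma f$ times $\delta_{V_{j \, k}}$ from the $\delta$-potential. The distributional equation $H^\gamma_{(N)} f = E f$ then decouples: matching regular parts recovers the classical Helmholtz equation \eqref{eqn:QNLSLaplace} on $\R^N_\text{reg}$, while vanishing of the coefficient of each $\delta_{V_{j \, k}}$ — after stripping the common factor $1/\sqrt{2}$ — yields precisely the jump condition \eqref{eqn:QNLSjump}. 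The two are separated by testing with $\phi$ supported first in $\R^N_\text{reg}$ and then in a small neighbourhood of a subregular point of $V_{j \, k}$. For the final assertion, if \eqref{eqn:QNLSLaplace}--\eqref{eqn:QNLSjump} both hold then on each open alcove $w\R^N_+$ the function $f$ satisfies the homogeneous elliptic equation $(-\Delta - E) f = 0$ with constant coefficients; interior elliptic regularity (Weyl's lemma applied to the Helmholtz operator) upgrades $f|_{w\R^N_+}$ from $\ca C^1$ to $\ca C^\infty$ (in fact real-analytic), placing $f$ in $\ca{CB}^\infty(\R^N)$.

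The hard part will be the bookkeeping of normalisations and signs: correctly identifying $\delta(x_j - x_k)$ with an intrinsic surface measure on $V_{j \, k}$, and tracking orientations of outward normals so that the jump sign matches the one displayed in \eqref{eqn:QNLSjump} rather than its negative. A useful sanity check is $N = 2$, where $f$ effectively depends on $x_1 - x_2$ alone and \eqref{eqn:QNLSjump} reduces to the familiar one-dimensional $\delta$-well matching condition; recovering the standard factor $2\gamma$ there pins down all the constants.
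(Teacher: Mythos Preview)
The paper does not actually prove this proposition: it is stated with a citation to \cite{EmsizOS} and no argument is given in the text. So there is no ``paper's own proof'' to compare against.

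That said, your outline is the standard and correct way to establish the result. The distributional computation is right: integrating by parts twice on each closed alcove, the first boundary terms cancel by continuity of $f$, the second ones combine to a single-layer term on each wall, and your normalisation $\delta(x_j-x_k)=\tfrac{1}{\sqrt 2}\,\delta_{V_{j\,k}}$ together with the outward-normal bookkeeping reproduces \eqref{eqn:QNLSjump} with the correct sign and factor $2\gamma$. Testing against $\phi$ supported in $\R^N_\text{reg}$ versus near a subregular wall point cleanly separates \eqref{eqn:QNLSLaplace} from \eqref{eqn:QNLSjump}. The appeal to interior elliptic regularity (Weyl's lemma for the Helmholtz operator) on each open alcove is exactly what upgrades $\ca{CB}^1$ to $\ca{CB}^\infty$.

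One minor point worth making explicit in a full write-up: the integration by parts on the \emph{closed} alcove $\overline{w\R^N_+}$ uses the $\ca C^1$-extension hypothesis in the definition of $\ca{CB}^1(\R^N)$, and the fact that the non-subregular part of the boundary (the intersections of two or more walls) has zero $(N\!-\!1)$-dimensional measure, so contributes nothing to the surface integrals. You mention this, but it is the one place where a reader might want a sentence more.
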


For $f$ to describe a system of particles on a circle of circumference $L$, say, then the finite hyperplane arrangement $\{ V_{j \, k} \}_{1 \leq j < k \leq N}$ needs to be replaced by the affine hyperplane arrangement 
\[ \left\{ V_{j \, k; m} := \set{\bm x \in \R^N}{x_j-x_k = m L} \right\}_{1 \leq j < k \leq N, \, m \in \Z}. \]
A key role is played by the affine Weyl group $\hat S_N = \langle s_0,\ldots,s_N \rangle$, where the affine simple reflection $s_0$ acts as $s_0 (\bm x) = (x_N+L,x_2,\ldots,x_{N-1},x_1-L)$.
Furthermore, $f$ is required to be $L$-periodic in each variable. We refer to \cite{Emsiz,EmsizOS} for more detail.
We will follow an alternative approach \cite{LiebLi}, where we take a solution $f$ of the QNLS problem on $\R^N$, i.e. satisfying \eqref{eqn:QNLSLaplace}-\eqref{eqn:QNLSjump}, consider its restriction to a hypercube $J^N$, where $J=[x^+,x^-]$ with $x^--x^+=L$, and impose 
\begin{equation} \label{eqn:QNLSperiodic}
\begin{aligned}
f(\bm x)|_{x_j=x^+} &= f(\bm x)|_{x_j=x^-}, \\
\lim_{x_j \to x^+} \partial_j f(\bm x) &= \lim_{x_j \to x^-} \partial_j f(\bm x),
\end{aligned}
\qquad \text{for } j =1,\ldots,N.
\end{equation}

\subsection{The Bethe ansatz}

Lieb and Liniger \cite{LiebLi} solved the QNLS problem (both on the line and the circle) by modifying Bethe's approach for analysing the one-dimensional Heisenberg model \cite{Bethe}, now known as the \emph{(coordinate) Bethe ansatz} (BA). 
Write $\ii = \sqrt{-1}$ and $\inner{\bm w}{\bm z} = \sum_{j=1}^N w_j \bar z_j$ for the Euclidean complex inner product on $\C^N$.
Consider the \emph{plane wave} $\e^{\ii \bm \lambda} \in \ca C^\infty(\R^N)$ with wavevector $\bm \lambda = (\lambda_1,\ldots,\lambda_N) \in \C^N$ defined by $\e^{\ii \bm \lambda}(\bm x) = \e^{\ii \inner{\bm \lambda}{\bm x}}$.
The BA results in
\begin{prop}\cite{LiebLi} \label{prop:Bethewavefn1}
The function $\Psi_{\bm \lambda} \in \ca{CB}^\infty(\R^N)^{S_N}$ defined by
\begin{equation} \label{eqn:Bethewavefn1} \Psi_{\bm \lambda}|_{\R^N_+} = \frac{1}{N!} \sum_{w \in S_N} G^\gamma_{w \bm \lambda} \e^{\ii (w \bm \lambda)}
\end{equation}
satisfies \eqref{eqn:QNLSLaplace}-\eqref{eqn:QNLSjump} with $E=\sum_j \lambda_j^2$ precisely if $G^\gamma_{\bm \lambda} = \prod_{j<k} \frac{\lambda_{j}-\lambda_{k}-\ii \gamma}{\lambda_{j}-\lambda_{k}}$.
If in addition the $\lambda_j$ are distinct and satisfy the \emph{Bethe ansatz equations} (BAEs), viz.
\begin{equation} \label{eqn:BAEintro} \e^{\ii \lambda_j L} = \prod_{\atop{k=1}{k \ne j}}^N \frac{\lambda_j-\lambda_k + \ii \gamma}{\lambda_j-\lambda_k - \ii \gamma}, \quad \text{for } j =1,\ldots, N, 
\end{equation}
then $\Psi_{\bm \lambda}|_{J^N}$ satisfies \eqref{eqn:QNLSperiodic}.
\end{prop}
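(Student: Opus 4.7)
The plan is to verify the three parts of the statement in turn. First, \eqref{eqn:QNLSLaplace} on $\R^N_\text{reg}$ is immediate: each plane wave $\e^{\ii(w\bm\lambda)}$ is an eigenfunction of $-\sum_j \partial_j^2$ with eigenvalue $\sum_j (w\bm\lambda)_j^2 = \sum_j \lambda_j^2 = E$, since the sum of squares is $S_N$-invariant. So the right-hand side of \eqref{eqn:Bethewavefn1} has this eigenvalue on $\R^N_+$ and, by the $S_N$-symmetry of $\Psi_{\bm\lambda}$, everywhere on $\R^N_\text{reg}$.

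For the jump conditions \eqref{eqn:QNLSjump}, by $S_N$-equivariance it suffices to treat one simple wall of $\R^N_+$, say $V_{j\,j+1}$ with adjacent alcove $s_j\R^N_+$. On the adjacent alcove, the symmetry $\Psi_{\bm\lambda}(\bm y) = \Psi_{\bm\lambda}(s_j \bm y)$ combined with the self-adjointness of $s_j$ gives
\[ \Psi_{\bm\lambda}|_{s_j\R^N_+}(\bm y) = \frac{1}{N!}\sum_{w \in S_N} G^\gamma_{s_j w\bm\lambda}\,\e^{\ii(w\bm\lambda)}(\bm y). \]
Applying $\partial_j - \partial_{j+1}$ to the two local formulas, taking $\delta \to 0^+$, and comparing the difference to $2\gamma\Psi_{\bm\lambda}(\bm x)$ reduces the condition---after pairing $w$ with $s_jw$, since $\e^{\ii\langle w\bm\lambda,\bm x\rangle} = \e^{\ii\langle s_jw\bm\lambda,\bm x\rangle}$ on $V_{j\,j+1}$---to the functional equation
\[ \frac{G^\gamma_{s_j\mu}}{G^\gamma_\mu} = \frac{\mu_j - \mu_{j+1} + \ii\gamma}{\mu_j - \mu_{j+1} - \ii\gamma} \qquad (\mu = w\bm\lambda, \; j=1,\ldots,N-1). \]
A short direct computation shows the product formula satisfies this: of the factors in $G^\gamma_\mu$, only the one indexed by $(j,j+1)$ is genuinely altered by $s_j$, while the other factors involving $\mu_j$ or $\mu_{j+1}$ permute in pairs. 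Conversely, this system (consistent with the braid relations) determines $G^\gamma$ up to a symmetric factor, pinning down the product formula under the natural normalization.

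For \eqref{eqn:QNLSperiodic}, the $S_N$-symmetry of $\Psi_{\bm\lambda}$ reduces periodicity to one variable, say $x_N$. Fix $x_1 > \ldots > x_{N-1}$ in the open interior of $J$; then $(x_1,\ldots,x_{N-1},x^+) \in \R^N_+$ whereas $(x_1,\ldots,x_{N-1},x^-) \in \sigma^{-1}\R^N_+$ with $\sigma = (1\,2\,\cdots\,N)$ cyclic, and the same symmetry trick yields
\[ \Psi_{\bm\lambda}|_{\sigma^{-1}\R^N_+}(\bm x) = \frac{1}{N!}\sum_w G^\gamma_{\sigma w\bm\lambda}\,\e^{\ii(w\bm\lambda)}(\bm x). \]
For generic $\bm\lambda$ (distinct components), matching coefficients of $\e^{\ii\sum_{k<N}(w\bm\lambda)_k x_k}$ in $\Psi_{\bm\lambda}|_{x_N=x^+} = \Psi_{\bm\lambda}|_{x_N=x^-}$ gives $G^\gamma_{\sigma w\bm\lambda}/G^\gamma_{w\bm\lambda} = \e^{\ii (w\bm\lambda)_N L}$. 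Evaluating this ratio via the product formula---only the factors involving $\mu_N$ change nontrivially, producing $\prod_{k<N}(\mu_k - \mu_N + \ii\gamma)/(\mu_k - \mu_N - \ii\gamma)$---and then setting $(w\bm\lambda)_N = \lambda_j$ recovers \eqref{eqn:BAEintro}. The derivative periodicity is then automatic, each plane-wave term acquiring the same prefactor $\ii(w\bm\lambda)_N$ on both sides.

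The main technical subtlety lies in the jump calculation: because $\e^{\ii\langle w\bm\lambda,\bm x\rangle}$ and $\e^{\ii\langle s_jw\bm\lambda,\bm x\rangle}$ agree on the wall $V_{j\,j+1}$, coefficient matching must be organized orbit-by-orbit under $\langle s_j\rangle$ before the ratio identity for $G^\gamma$ can be extracted. Once that bookkeeping is in place, both the product formula and the BAEs emerge from elementary algebraic manipulations.
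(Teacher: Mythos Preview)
The paper does not supply its own proof of this proposition; it is stated with a citation to \cite{LiebLi} and no argument follows. So there is no in-paper proof to compare against, and your direct verification is an appropriate way to fill the gap.

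Your argument is essentially the standard one and is correct in outline. A few small points are worth tightening. First, continuity of $\Psi_{\bm\lambda}$ across the walls is not addressed explicitly; it is automatic here because the function is \emph{defined} by $S_N$-symmetrizing a smooth function on $\overline{\R^N_+}$, but it is worth a sentence since membership in $\ca{CB}^\infty(\R^N)$ is part of the claim. Second, the ``precisely if'' in the statement is slightly loose: as you note, the functional equation $G^\gamma_{s_j\bm\mu}/G^\gamma_{\bm\mu} = (\mu_j-\mu_{j+1}+\ii\gamma)/(\mu_j-\mu_{j+1}-\ii\gamma)$ only determines $G^\gamma$ up to an $S_N$-invariant factor in $\bm\lambda$, so the product formula is singled out by an implicit normalization (e.g.\ $G^0_{\bm\lambda}=1$). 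This is a feature of the original statement rather than a defect in your proof. Third, in the periodicity step you should say explicitly why matching coefficients of the exponentials $\e^{\ii\sum_{k<N}(w\bm\lambda)_k x_k}$ is legitimate: for distinct $\lambda_j$ these characters of $\R^{N-1}$ are linearly independent, which is where the hypothesis that the $\lambda_j$ are distinct enters.
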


\subsection{The quantum inverse scattering method}

The \emph{quantum inverse scattering method} (QISM) was developed by the Faddeev school \cite{Faddeev1995, KorepinBI, Sklyanin1982, SklyaninFaddeev, SklyaninTakhtajanFaddeev} after Baxter's pioneering work on exactly solvable models in statistical mechanics and his method of commuting transfer matrices; see \cite{Baxter1989} for a textbook account and references therein.
It turns out \cite{Gutkin1988, ReedSimon2, Oxford} that the QNLS Hamiltonian can be expressed in terms of quantum field operators, certain operator-valued distributions, associated to a non-relativistic quantum field theory (the formalism known as ``second quantization''). 
The corresponding time evolution equation is a quantized version of the classical nonlinear Schr\"odinger equation, whence the name of our quantum model. Using the QISM the QNLS model can be solved, as follows. \\
\newpage

Consider the standard $\L^2$-inner product $\innerrnd{f}{g} = \int_{J^N} \dd^N \bm x f(\bm x) \overline{g(\bm x)}$. 
$S_N$ acts on the \emph{$N$-particle sector} $\f h_N = \L^2(J^N)$ by means of $(w f)(\bm x) = f(w^{-1} \bm x)$, with $\innerrnd{w f}{g} = \innerrnd{f}{w^{-1} g}$ for all $w \in S_N$. 
$\f h_N$ is a Hilbert space with respect to $\innerrnd{}{}$, as are $\ca H_N=\f h_N^{S_N}$ and $\ca H$, the \emph{(bosonic) Fock space}, the subset of the direct sum of all $\ca H_N$ consisting of elements of finite $\L^2$-norm.
The QISM revolves around the so-called \emph{monodromy matrix} \cite{Gutkin1988,KorepinBI,Sklyanin1982}  
\[ \ca T_\mu = \left( \atop{A_\mu}{C_\mu} \atop{B_\mu}{D_\mu} \right) \in \End(\C^2 \otimes \ca H), \]
which satisfies the \emph{Yang-Baxter equation} (YBE) related to the Yangian of $\f{gl}_2$, viz.
\begin{equation} \ca R_{\mu-\nu} (\ca T_\mu \otimes 1) (1 \otimes \ca T_\nu) = (1 \otimes \ca T_\nu)  (\ca T_\mu \otimes 1) \ca R_{\mu-\nu} \in \End(\C^2 \otimes \C^2 \otimes \ca H), \label{eqn:YBE}
\end{equation}
where the \emph{$R$-matrix} $\ca R_{\mu}$ is an element of $\End(\C^2 \otimes \C^2)$ with a meromorphic dependence on $\mu$.
For the QNLS model, the $R$-matrix is given by $\ca R_{\mu} = 1 - \frac{\ii \gamma}{\mu} \ca P$, with $\ca P$ the flip operator$: \C^2 \otimes \C^2 \to \C^2 \otimes \C^2: \bm v_1 \otimes \bm v_2 \mapsto \bm v_2 \otimes \bm v_1$.
The matrix entries $A_\mu, B_\mu , C_\mu, D_\mu $ can be explicitly given as integral operators \cite{Gutkin1988,KorepinBI,Sklyanin1982}.
For any model with the $R$-matrix given as above, the YBE \eqref{eqn:YBE} encodes commutation relations for the operators $A_\mu, B_\mu , C_\mu, D_\mu $:
\begin{equation} \label{eqn:ABCDcommrel}
\begin{array}{!{\vrule width 2pt}C{24mm}|C{24mm}|C{24mm}|C{24mm}!{\vrule width 2pt}}
\noalign{\hrule height 2pt}
[A_\mu,A_\nu ] =0 & [B_\mu ,B_\nu ] = 0  & [C_\mu ,C_\nu ] = 0 & [D_\mu,D_\nu] =0\\
\noalign{\hrule height 2pt}
\multicolumn{2}{!{\vrule width 2pt}>{\hspace{-1mm}}c<{\hspace{-1mm}}|}{\left[A_\mu,B_\nu \right] = \frac{- \ii \gamma}{\mu\!-\!\nu} \left( B_\mu A_\nu \! - \!B_\nu A_\mu \right)} &
\multicolumn{2}{>{\hspace{-1mm}}c<{\hspace{-1mm}}!{\vrule width 2pt}}{\left[A_\mu,C_\nu \right] = \frac{ \ii \gamma}{\mu\!-\!\nu} \left( C_\mu A_\nu \! - \!C_\nu A_\mu \right)} \\
\multicolumn{2}{!{\vrule width 2pt}>{\hspace{-1mm}}c<{\hspace{-1mm}}|}{\left[B_\mu ,A_\nu \right] = \frac{- \ii \gamma}{\mu\!-\!\nu} \left( A_\mu B_\nu \! - \!A_\nu B_\mu \right)} &
\multicolumn{2}{>{\hspace{-1mm}}c<{\hspace{-1mm}}!{\vrule width 2pt}}{\left[C_\mu ,A_\nu \right] = \frac{ \ii \gamma}{\mu\!-\!\nu} \left( A_\mu C_\nu \! - \!A_\nu C_\mu \right)}\\
\hline
\multicolumn{2}{!{\vrule width 2pt}>{\hspace{-1mm}}c<{\hspace{-1mm}}|}{\left[D_\mu,B_\nu \right] = \frac{\ii \gamma}{\mu\!-\!\nu} \left( B_\mu D_\nu \! - \!B_\nu D_\mu \right)} &
\multicolumn{2}{>{\hspace{-1mm}}c<{\hspace{-1mm}}!{\vrule width 2pt}}{\left[D_\mu,C_\nu \right] = \frac{-\ii \gamma}{\mu\!-\!\nu} \left( C_\mu D_\nu \! - \!C_\nu D_\mu \right)} \\
\multicolumn{2}{!{\vrule width 2pt}>{\hspace{-1mm}}c<{\hspace{-1mm}}|}{\left[B_\mu ,D_\nu \right] = \frac{ \ii \gamma}{\mu\!-\!\nu} \left( D_\mu B_\nu \! - \!D_\nu B_\mu \right)} &
\multicolumn{2}{>{\hspace{-1mm}}c<{\hspace{-1mm}}!{\vrule width 2pt}}{\left[C_\mu ,D_\nu \right] = \frac{-\ii \gamma}{\mu\!-\!\nu} \left( D_\mu C_\nu \! - \!D_\nu C_\mu \right)}\\
\noalign{\hrule height 2pt}
\multicolumn{2}{!{\vrule width 2pt}>{\hspace{-1mm}}c<{\hspace{-1mm}}|}{\left[A_\mu,D_\nu \right] = \frac{-\ii \gamma}{\mu\!-\!\nu} \left( B_\mu C_\nu\! - \!B_\nu C_\mu \right)} &
\multicolumn{2}{>{\hspace{-1mm}}c<{\hspace{-1mm}}!{\vrule width 2pt}}{\left[B_\mu ,C_\nu \right] = \frac{-\ii \gamma}{\mu\!-\!\nu} \left( A_\mu D_\nu\! - \!A_\nu D_\mu \right)}\\
\multicolumn{2}{!{\vrule width 2pt}>{\hspace{-1mm}}c<{\hspace{-1mm}}|}{\left[D_\mu ,A_\nu \right] = \frac{-\ii \gamma}{\mu\!-\!\nu} \left( C_\mu B_\nu\! - \!C_\nu B_\mu  \right)} &
\multicolumn{2}{>{\hspace{-1mm}}c<{\hspace{-1mm}} !{\vrule width 2pt}}{\left[C_\mu ,B_\nu \right] = \frac{-\ii \gamma}{\mu\!-\!\nu} \left( D_\mu A_\nu\! - \!D_\nu A_\mu\right)} \\
\noalign{\hrule height 2pt}
\end{array}
\end{equation}
The algebra so generated is called the \emph{Yang-Baxter algebra}.
The relevance of the monodromy matrix lies in the fact that the \emph{transfer matrices} $\Tr_{\C^2} \ca T_\mu = A_\mu+D_\mu $ form a self-adjoint commuting family and are generating functions for the integrals of motion, including the Hamiltonian $H^\gamma_{(N)}$ \cite{Faddeev1995,KorepinBI}. The eigenfunction $\Psi_{\bm \lambda}$ can be recursively generated, viz.
\begin{equation} \label{eqn:Psirecursion} \Psi_{\lambda_1,\ldots,\lambda_N} = B_{\lambda_N} \cdots B_{\lambda_1} \Vac, \end{equation}
where $\Vac = 1 \in  \ca H_0 \cong \C$.
In addition, for bounded $J$, if the BAEs \eqref{eqn:BAEintro} hold, $\Psi_{\bm \lambda}$ is an eigenfunction of the transfer matrix, and hence of $H^\gamma_{(N)}$. This method of constructing $\Psi$ is known as the \emph{algebraic Bethe ansatz} (ABA).

\subsection{Root system generalizations and recursive structure}

An important contribution by Gaudin \cite{Gaudin1971-3} was the realization that the BA approach can be modified to solve certain generalizations of the Lieb-Liniger system in terms of classical (crystallographic reduced) root systems, which have been the subject of further study \cite{Emsiz,EmsizOS,Gutkin1982,GutkinSutherland,HeckmanOpdam1997}. It has been highlighted by Heckman and Opdam \cite{HeckmanOpdam1997} that representations of a certain degeneration of the affine Hecke algebra play an essential role, providing another method for solving the QNLS problem, and in fact one which works for all root systems \cite{EmsizOS}. 
We will review the A-type case in Sect. \ref{sec:3}. \\

Unfortunately, there exists no generalization of the QISM to arbitrary root systems\footnote{However, the C-type analogues, both finite and affine, involving 1 or 2 reflecting boundaries, respectively, can be solved by Sklyanin's \emph{boundary Yang-Baxter equation} formalism \cite{Sklyanin1988}.}. 
On the other hand, the drawback of the dAHA method is the apparent lack of a recursive structure, which we will address in this paper. 
The results of this paper, in particular of Sect. \ref{sec:4}, are an indication of a deep connection between these two solution methods, at least for the A-type case. 
This interplay can be seen as something reminiscent of Schur-Weyl duality \cite{Schur1,Schur2,Weyl}; the Yangian of $\f{gl}_2$, the algebraic object underlying the YBE, is a deformation of the current algebra of $\f{gl}_2$ and its representation theory should be related to that of the degenerate affine Hecke algebra, which is a deformation of the group algebra of $S_N$.

\section{The degenerate affine Hecke algebra} \label{sec:3}

We review the existing theory of the so-called \emph{Dunkl-type operators} associated to the QNLS model, in particular the study of their eigenfunctions as a means to understanding the QNLS model as started by Komori and Hikami \cite{Hikami,KomoriHikami}. 
A running example for the case $N=2$ is provided in Appendix \ref{sec:example}.

\subsection{Dunkl-type operators; degenerate affine Hecke algebra}

Given $w \in S_N$, consider the set
\[ \Sigma_w = \set{(j,k) \in \{1,\ldots,N\}^2}{j<k, \, w(j)>w(k)}, \]
which labels those positive roots in the standard realization of the A$_{N-1}$-type root system which are mapped by $w$ to a negative root; hence the length of $w$ is given by $l(w) = |\Sigma_w|$.
Writing $w = s_{i_1} \cdots s_{i_l}$ with $l=l(w)$ we have
\begin{equation} \label{eqn:Sigmaproperty} \set{w s_{j \, k} }{(j,k) \in \Sigma_w} = \set{s_{i_1} \cdots \hat s_{i_m} \cdots s_{i_l} }{m=1,\ldots,l}, \end{equation}
where $\hat s_{i_m}$ indicates that $s_{i_m}$ is removed from the product.
This follows by induction on $l$ and the equivalence of the statements
\begin{equation} \label{eqn:Sigmaequivalence}
\begin{gathered}
l(w_1w_2) = l(w_1)+l(w_2),  \\
\Sigma_{w_1 w_2} = w_2^{-1} \Sigma_{w_1} \cup \Sigma_{w_2}, \\
\forall(j,k) \in w_2^{-1} \Sigma_{w_1}: \, j<k.
\end{gathered}
\end{equation}
for which see \cite[Eqn. (2.2.4)]{Macdonald2}.
For $j=1,\ldots,N$ we define the auxiliary operator $\Lambda_j\in \End(\ca C^\infty(\R^N_\text{reg}))$ by specifying its action on each alcove:
\begin{equation} \label{eqn:Lambdadefn} \Lambda_j f = \sum_{k: (k,j) \in \Sigma_w} s_{j \, k}f - \sum_{k: (j,k) \in \Sigma_w} s_{j \, k}f \qquad \text{on } w^{-1} \R^N_+,
\end{equation}
for $w \in S_N,f \in \ca C^\infty(\R^N_\text{reg})$. Note that $ (s_{j \, k}f)|_{w^{-1} \R^N_+} = s_{j \, k}(f|_{(ws_{j \, k})^{-1} \R^N_+})$ and for $(j,k) \in \Sigma_w$ we have $l(ws_{j \, k})<l(w)$ as per \eqref{eqn:Sigmaproperty}.

\begin{defn} \emph{\cite{Hikami,MurakamiWadati,Opdam}}
Let $j=1,\ldots,N$.
The \emph{Dunkl-type operator} $\partial_j^\gamma$ is given by
\begin{equation} 
\partial_j^\gamma = \partial_j - \gamma \Lambda_j \in \End(\ca C^\infty(\R^N_\text{reg})).
\end{equation}
In particular, for $f \in \ca{C}^\infty(\R^N_\text{reg})$ we have 
\begin{equation} \qquad \qquad \partial_j^\gamma f = \partial_j f \qquad \text{on } \R^N_+. \label{eqn:Dunklfundamentalalcove} \end{equation}
\end{defn}
It is well-established (see, e.g. \cite{MurakamiWadati,Opdam}) that the simple transpositions $s_j$ ($j=1,\ldots,N-1$) and the Dunkl-type operators $\partial_k^\gamma $ ($k=1,\ldots,N$) satisfy the following relations in $ \End(\ca C^\infty(\R^N_\text{reg}))$:
\begin{gather}
s_j^2 = 1, \qquad s_j s_{j+1} s_j = s_{j+1}  s_j s_{j+1} \qquad \text{and} \qquad s_j s_k = s_k s_j \; \text{for } |j-k|>1; \label{eqn:Dunklrel1}  \\
s_j \partial_j^\gamma - \partial_{j+1}^\gamma  s_j = \gamma \qquad \text{and} \qquad s_j \partial_k^\gamma  = \partial_k^\gamma  s_j  \; \text{for } k \ne j,j+1; \label{eqn:Dunklrel2}\\
\partial_j^\gamma \partial_k^\gamma  = \partial_k^\gamma  \partial_j^\gamma.   \label{eqn:Dunklrel3}
\end{gather}
This implies that $s_1,\ldots,s_{N\!-\!1},-\ii \partial_1^\gamma ,\ldots,-\ii \partial_N^\gamma $ define a representation (to be called the \emph{Dunkl-type representation}) of the \emph{degenerate affine Hecke algebra} (dAHA) $\ca{A}_N^\gamma$, as introduced by Drinfel\cprime d \cite{Drinfeld1986} and Lusztig and Kazhdan \cite{KazhdanLusztig,Lusztig}.
Note that $\ca{A}_N^\gamma \cong \C S_N \otimes \C[X_1,\ldots,X_N]$ as vector spaces; the algebra multiplication is a $\gamma$-deformation of the standard action of $S_N$ on the polynomial algebra, as per \eqref{eqn:Dunklrel2}.
The centre of $\ca{A}_N^\gamma$ is given by the symmetric expressions in the generators of its polynomial subalgebra \cite{Cherednik2, Lusztig, Opdam}. 
In particular, for $F \in \C[\lambda_1,\ldots,\lambda_N]^{S_N}$ and $w \in S_N$ we have $[w,F(\partial_1^\gamma ,\ldots,\partial_N^\gamma )] =0$.
Using \eqref{eqn:Dunklfundamentalalcove} we obtain
\begin{equation}\label{eqn:Dunklsymmetricpolynomial}
F(\partial_1^\gamma ,\ldots,\partial_N^\gamma ) = F(\partial_1,\ldots,\partial_N) \in \End(\ca C^\infty(\R^N_\text{reg})). \end{equation}

\subsection{Integral-reflection operators; the propagation operator}

For $1 \leq j \ne k \leq N$, consider the integral operator $I_{j \, k} \in \End\left(\ca C(\R^N)\right)$ defined by
\[ (I_{j \, k} f)(\bm x) = \int_0^{x_j-x_k} \dd y f \! \left( \bm x - y (\bm e_j-\bm e_k) \right) = \int_{x_k}^{x_j} \dd y f(x_1, \ldots, \underset{(j)}{x_j+x_k-y}, \ldots, \underset{(k)}{y},\ldots,x_N ) \]
for $f\in \ca C(\R^N)$ and $\bm x \in \R^N$; we note that $I_{j \, k}$ restricts to an operator on $\ca C^\infty(\R^N)$.
This operator was introduced as a tool to study the QNLS problem in \cite{Gutkin1982,GutkinSutherland}.
Given $1 \leq j \ne k \leq N$, we have 
\begin{gather}
w I_{j \, k} = I_{w(j) \, w(k)}w \quad \text{for } w\in S_N, \label{eqn:Ipermutation} \\
I_{j \, k}f = 0 \quad \text{on } V_{j \, k}, \text{ for } f \in \ca C(\R^N). \label{eqn:Ihyperplanes}
\end{gather} 
For $j=1,\ldots,N-1$, we write $I_j := I_{j \, j+1}$ and introduce the \emph{integral-reflection operator}
\begin{equation} \label{eqn:sgamma} s_j^\gamma:=s_j + \gamma I_j \in \End(\ca C(\R^N)), \End(\ca C^\infty(\R^N)).\end{equation}
It can be checked \cite{EmsizOS,HeckmanOpdam1997} that the $s_j^\gamma$ ($j=1,\ldots,N-1$) and $-\ii \partial_k$ ($k=1,\ldots,N$) define a representation, to be called the \emph{integral representation}, of the dAHA on $\End(\ca C^\infty(\R^N))$, i.e. we may replace $(s_j,\partial_k^\gamma ) \mapsto (s_j^\gamma, \partial_k)$ in \eqref{eqn:Dunklrel1}-\eqref{eqn:Dunklrel3}. 
Hence, given any $w \in S_N$ and any decomposition $w=s_{i_1} \cdots s_{i_l}$, the expression $s_{i_1}^\gamma  \cdots s_{i_l}^\gamma $ is independent of the choice of the $s_{i_m}$, and we will denote this element of $\End(\ca C^\infty(\R^N))$ by $w^\gamma$.\\

Following Hikami \cite{Hikami} we study the intertwiner of the aforementioned two representations, restricted to a suitable function space.

\begin{defn}
The \emph{propagation operator} is the element of $\End(\ca C(\R^N_\text{reg}))$ determined by
\begin{equation} \label{eqn:proprestricted}
P^\gamma_{(N)} f = w^{-1} w^\gamma f \qquad \text{on }  w^{-1} \R^N_+.
\end{equation}
\end{defn}
for $w \in S_N$, $f \in \ca C(\R^N_\text{reg})$. 
This operator was introduced by Gutkin \cite{Gutkin1982}.
From \eqref{eqn:Ihyperplanes} one obtains that $P^\gamma_{(N)}$ restricts to an element of $\End(\ca C(\R^N))$;
furthermore, since the $w^\gamma$ restrict to elements of $\End(\ca C^\infty(\R^N))$, the propagation operator $P^\gamma_{(N)}$ restricts to an element\footnote{Following \cite{EmsizOS} we repeat that $P^\gamma_{(N)}$ does not restrict to an element of $\End(\ca{CB}^\infty(\R^N))$.} of $\Hom(\ca C^\infty(\R^N),\ca{CB}^\infty(\R^N))$.
Crucially, $P^\gamma_{(N)}$ \emph{intertwines} the integral and Dunkl-type representations of the dAHA \cite{EmsizOS,Hikami}:
\begin{align}
w P^\gamma_{(N)} = P^\gamma_{(N)} w^\gamma \hspace{10mm} & \in \Hom(\ca C^\infty(\R^N),\ca{CB}^\infty(\R^N)), \label{eqn:intertwine1}\\
\partial_j^\gamma (P^\gamma_{(N)}|_{\R^N_\text{reg}}) = (P^\gamma_{(N)} \partial_j)|_{\R^N_\text{reg}} & \in \Hom(\ca C^\infty(\R^N),\ca C^\infty(\R^N_\text{reg})), \label{eqn:intertwine2}
\end{align}
for $w \in S_N$ and $j=1,\ldots,N$.
\eqref{eqn:intertwine1} is established straightforwardly by making a variable substitution in the summation in $P^\gamma_{(N)}$.
\eqref{eqn:intertwine2} is shown on each alcove $w^{-1} \R^N_+$, where one uses \eqref{eqn:intertwine1} and the fact that $(w^\gamma,-\ii \partial_j)$ defines a representation of the dAHA, so that we may use the well-known identity \cite{Cherednik2,Lusztig,Opdam}
\[ w^\gamma \partial_j = \partial_{w(j)} w^\gamma + \gamma w^\gamma \Biggl( \sum_{k: (j,k) \in \Sigma_w} s_{j \, k}^\gamma - \sum_{k: (k,j) \in \Sigma_w} s_{j \, k}^\gamma \Biggr).  \]
From $ s_{j \, k}I_{j \, k} = I_{k \, j}s_{j \, k} = I_{k \, j} = -I_{j \, k}$ it follows that $s_j s_j^\gamma = 1+\gamma I_{j+1 \, j}$; combining this with \eqref{eqn:Ipermutation} we obtain the following useful expression for $P^\gamma_{(N)}|_{s_{N-1} \cdots s_m \R^N_+}$:
\begin{equation}
\label{eqn:proprestricted2}
P^\gamma_{(N)} = s_{N \!- \!1} \cdots s_m s_{m}^\gamma  \cdots s_{N\!-\!1}^\gamma   = \bigl( 1+\gamma I_{N \, m} \bigr) \cdots \bigl( 1+\gamma I_{N \, N\!-\!1} \bigr) 
\quad \text{on } s_{N-1} \cdots s_m \R^N_+ .
\end{equation}

\subsection{Common eigenfunctions of the Dunkl-type operators}

Let $j=1,\ldots,N$ and $f \in \ca{CB}^\infty(\R^N)$; suppose that $\partial_j^\gamma f|_{\R^N_\text{reg}} \in \ca{C}^\infty(\R^N_\text{reg})$ is a constant multiple of $f|_{\R^N_\text{reg}}$.
Hence $\partial_j^\gamma f|_{\R^N_\text{reg}}$ can be continuously extended to $\R^N$ and seen as an element of $\ca{CB}^\infty(\R^N)$.
Therefore, given $\bm \lambda = (\lambda_1,\ldots,\lambda_N) \in \C^N$ the system $\{ \partial_j^\gamma f = \ii \lambda_j f \}_{j=1}^N$ for $f \in \ca{CB}^\infty(\R^N)$ is well-posed and we have

\begin{defn}
Let $\bm \lambda \in \C^N$ and $\gamma \in \R$.
Then we have the following subspace of $\ca{CB}^\infty(\R^N)$:
\begin{equation}
\text{sol}^\gamma_{\bm \lambda} = \set{f \in \ca{CB}^\infty(\R^N)}{\partial_j^\gamma f = \ii \lambda_j f \text{ for } j=1,\ldots,N}
\end{equation}
\end{defn}

Two technical lemmas follow.
\begin{lem} \label{lem:Dunklsystemzero1}
Let $\bm \lambda \in \C^N$ and $w \in S_N$. Suppose that $f \in \text{sol}_{\bm \lambda}^\gamma$ satisfies $f(\bm 0)=0$. 
If $f$ vanishes on ${\tilde w}^{-1} \R^N_+$ for all $\tilde w \in S_N$ with $l(\tilde w)<l(w)$, then $f$ vanishes on $w^{-1}\R^N_+$.
\end{lem}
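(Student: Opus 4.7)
The key observation is that on the alcove $w^{-1}\R^N_+$ the Dunkl-type eigenvalue system degenerates to a trivially integrable constant-coefficient system, whose single free parameter is then forced to vanish by continuity at the origin.

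Concretely, by \eqref{eqn:Lambdadefn} we have
\[
\Lambda_j f = \sum_{k:\,(k,j)\in\Sigma_w} s_{j\,k}f \;-\; \sum_{k:\,(j,k)\in\Sigma_w} s_{j\,k}f \qquad \text{on } w^{-1}\R^N_+.
\]
For each index $k$ contributing to either sum, \eqref{eqn:Sigmaproperty} yields $l(ws_{j\,k}) = l(w)-1$. Since $(s_{j\,k}f)(\bm x) = f(s_{j\,k}\bm x)$ and $s_{j\,k}(w^{-1}\R^N_+) = (ws_{j\,k})^{-1}\R^N_+$, the induction hypothesis (i.e.\ the vanishing of $f$ on all shorter alcoves) forces $s_{j\,k}f$ to vanish on $w^{-1}\R^N_+$. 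Hence $\Lambda_j f \equiv 0$ on $w^{-1}\R^N_+$, and the eigenvalue equations $\partial_j^\gamma f = \ii\lambda_j f$ collapse to
\[
\partial_j f = \ii\lambda_j f, \qquad j = 1,\ldots,N,
\]
on the connected open set $w^{-1}\R^N_+$. This uniquely determines
\[
f|_{w^{-1}\R^N_+}(\bm x) = c\, \e^{\ii \inner{\bm \lambda}{\bm x}}
\]
for some $c\in\C$.

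To pin down $c$, note that $\bm 0 \in \overline{w^{-1}\R^N_+}$ because the origin lies on every hyperplane $V_{j\,k}$. Since $f\in\ca{CB}^\infty(\R^N)$, the restriction $f|_{w^{-1}\R^N_+}$ admits a smooth extension to an open neighbourhood of $\overline{w^{-1}\R^N_+}$; uniqueness of smooth solutions to the trivial PDE system on that neighbourhood ensures this extension coincides with $c\,\e^{\ii\inner{\bm\lambda}{\cdot}}$ there, so its value at $\bm 0$ equals $c$. By continuity of $f$ on all of $\R^N$, this value must agree with $f(\bm 0) = 0$, so $c=0$ and $f$ vanishes on $w^{-1}\R^N_+$.

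The only combinatorial input is the length-drop $l(ws_{j\,k}) < l(w)$ for every $k$ occurring in $\Lambda_j$ on $w^{-1}\R^N_+$; this is exactly the content of \eqref{eqn:Sigmaproperty} (or equivalently the exchange condition in \eqref{eqn:Sigmaequivalence}). Beyond that, the proof is soft: uniqueness of solutions to a trivial first-order constant-coefficient system, plus continuity of $f$ across the boundary of the alcove. I do not foresee any substantive obstacle.
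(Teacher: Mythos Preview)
Your argument is correct and matches the paper's proof essentially step for step: kill $\Lambda_j f$ on $w^{-1}\R^N_+$ using the hypothesis on shorter alcoves, solve the resulting constant-coefficient system to get $f = c\,\e^{\ii\bm\lambda}$ there, then use continuity at $\bm 0$ to force $c=0$. One small imprecision: \eqref{eqn:Sigmaproperty} only gives $l(ws_{j\,k}) \leq l(w)-1$ (the subword need not be reduced), not equality, but since you only use $l(ws_{j\,k}) < l(w)$ this is harmless; also, the detour through smooth extensions is unnecessary---continuity of $f$ on $\R^N$ already gives $c = \lim_{w^{-1}\R^N_+ \ni \bm x \to \bm 0} f(\bm x) = f(\bm 0) = 0$.
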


\begin{proof}
Assume that for all $\tilde w \in S_N$ with $l(\tilde w)<l(w)$ we have $f=0$ on ${\tilde w}^{-1} \R^N_+$.
From \eqref{eqn:Lambdadefn} and the subsequent comments we conclude that $ \Lambda_j f = 0$ on ${\tilde w}^{-1} \R^N_+$.
Hence $\partial_j (f|_{w^{-1} \R^N_+})=\partial_j^\gamma f|_{w^{-1} \R^N_+}= \ii \lambda_j f|_{w^{-1} \R^N_+}$ for $j=1,\ldots,N$ so that $f = c_w \e^{\ii \bm \lambda}$ on $w^{-1} \R^N_+$ for some $c_w \in \C$.
Continuity at $\bm x = \bm 0$ yields that $f=0$ on $w^{-1} \R^N_+$.
\end{proof}

\begin{lem} \label{lem:Dunklsystemzero2}
Let $\bm \lambda \in \C^N$. Suppose that $f \in \text{sol}_{\bm \lambda}^\gamma$ satisfies $f(\bm 0)=0$.
Then $f=0$.
\end{lem}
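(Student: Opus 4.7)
The statement follows by strong induction on the length $l(w)$ of elements $w \in S_N$, using \rfl{lem:Dunklsystemzero1} as the engine.

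The plan is to show that $f$ vanishes on every alcove $w^{-1}\R^N_+$, by induction on $l(w)$. For the base case $w = e$, one has $l(w)=0$, and the hypothesis of \rfl{lem:Dunklsystemzero1} (that $f$ vanish on all $\tilde w^{-1}\R^N_+$ with $l(\tilde w)<0$) is vacuous; the conclusion of that lemma applies and yields $f=0$ on $\R^N_+$. (Concretely, on $\R^N_+$ the Dunkl operators reduce to $\partial_j$ by \eqref{eqn:Dunklfundamentalalcove}, so $f|_{\R^N_+} = c\,\e^{\ii\bm\lambda}$ for some constant $c$, and continuity of $f$ at $\bm 0 \in \overline{\R^N_+}$ together with $f(\bm 0)=0$ forces $c=0$.)

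For the inductive step, fix $w$ with $l(w) = l \geq 1$ and assume $f$ vanishes on every $\tilde w^{-1}\R^N_+$ with $l(\tilde w) < l$. \rfl{lem:Dunklsystemzero1} immediately gives $f = 0$ on $w^{-1}\R^N_+$. Since every $w \in S_N$ has length at most $\binom{N}{2}$, after finitely many steps we conclude that $f$ vanishes on $w^{-1}\R^N_+$ for all $w \in S_N$, hence on the union $\R^N_{\text{reg}} = \bigcup_{w \in S_N} w^{-1} \R^N_+$ by \eqref{eqn:fundamentalalcove}.

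Finally, $f \in \ca{CB}^\infty(\R^N) \subset \ca C(\R^N)$ and $\R^N_{\text{reg}}$ is dense in $\R^N$, so continuity yields $f = 0$ on all of $\R^N$. There is no real obstacle here: \rfl{lem:Dunklsystemzero1} was designed precisely to be iterated in this way, and the only thing to verify is that its statement is vacuously applicable at $w=e$ so that the induction can be started. The argument uses nothing beyond what has already been established, namely the alcove decomposition \eqref{eqn:fundamentalalcove}, the identity \eqref{eqn:Dunklfundamentalalcove}, continuity of elements of $\ca{CB}^\infty(\R^N)$, and the hypothesis $f(\bm 0)=0$.
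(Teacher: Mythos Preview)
Your proof is correct and follows the same approach as the paper: induct on $l(w)$ using \rfl{lem:Dunklsystemzero1} for both the vacuous base case and the inductive step, then use continuity and density of $\R^N_{\text{reg}}$ to conclude. You have simply written out in detail what the paper states in a single sentence.
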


\begin{proof}
First we will establish that $f=0$ on all alcoves $w^{-1} \R^N_+$ by induction on $l(w)$; \rfl{lem:Dunklsystemzero1} implies both the base case $l(w) = 0$ (where there are no $\tilde w \in S_N$ for which $l(\tilde w)<l(w)$) and the induction step. Hence $f=0$ on the regular vectors and by continuity we have $f=0$.
\end{proof}

Note that $\text{sol}_0(\bm \lambda)$ is 1-dimensional, and spanned by $\e^{\ii \bm \lambda}$. 
Something similar holds for general $\gamma$.

\begin{prop}
\label{prop:Dunklsystemonedimensional}
Let $\bm \lambda \in \C^N$. 
$\text{sol}_{\bm \lambda}^\gamma$ is 1-dimensional and spanned by $P^\gamma_{(N)} \e^{\ii \bm \lambda}$.
\end{prop}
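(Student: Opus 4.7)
The plan is to produce $P^\gamma_{(N)} \e^{\ii \bm \lambda}$ as an explicit nonzero element of $\text{sol}^\gamma_{\bm \lambda}$, and then deduce uniqueness (up to scalar) from \rfl{lem:Dunklsystemzero2}, which has already done the hard work of ruling out nonzero solutions vanishing at the origin.

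First I would verify that $P^\gamma_{(N)} \e^{\ii \bm \lambda}$ lies in $\text{sol}^\gamma_{\bm \lambda}$. Since $\e^{\ii \bm \lambda} \in \ca C^\infty(\R^N)$ and $P^\gamma_{(N)}$ restricts to an element of $\Hom(\ca C^\infty(\R^N),\ca{CB}^\infty(\R^N))$, we get $P^\gamma_{(N)} \e^{\ii \bm \lambda} \in \ca{CB}^\infty(\R^N)$ for free. Applying the intertwining relation \eqref{eqn:intertwine2} to $\e^{\ii \bm \lambda}$,
\[ \partial_j^\gamma (P^\gamma_{(N)} \e^{\ii \bm \lambda})|_{\R^N_\text{reg}} = (P^\gamma_{(N)} \partial_j \e^{\ii \bm \lambda})|_{\R^N_\text{reg}} = \ii \lambda_j (P^\gamma_{(N)} \e^{\ii \bm \lambda})|_{\R^N_\text{reg}}, \]
which is exactly the eigenvalue system defining $\text{sol}^\gamma_{\bm \lambda}$. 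To see that this element is nonzero, I would specialize \eqref{eqn:proprestricted} at $w = 1$ to obtain $P^\gamma_{(N)} \e^{\ii \bm \lambda}|_{\R^N_+} = \e^{\ii \bm \lambda}|_{\R^N_+}$, and then take the continuous limit to the origin to get $(P^\gamma_{(N)} \e^{\ii \bm \lambda})(\bm 0) = 1$.

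For the uniqueness clause, given any $f \in \text{sol}^\gamma_{\bm \lambda}$ I would form the difference $g := f - f(\bm 0) P^\gamma_{(N)} \e^{\ii \bm \lambda}$, which lies in $\text{sol}^\gamma_{\bm \lambda}$ by linearity and satisfies $g(\bm 0) = 0$ by construction. \rfl{lem:Dunklsystemzero2} then forces $g = 0$, and we conclude $f = f(\bm 0) \cdot P^\gamma_{(N)} \e^{\ii \bm \lambda}$, so $\text{sol}^\gamma_{\bm \lambda} = \C \cdot P^\gamma_{(N)} \e^{\ii \bm \lambda}$.

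The whole argument is short because the substantive combinatorial work has already been packaged into the intertwining properties of $P^\gamma_{(N)}$ and into \rfl{lem:Dunklsystemzero2}; the only mild wrinkle I would be careful about is that \eqref{eqn:intertwine2} is an identity on $\R^N_\text{reg}$ only, so one has to invoke the $\ca{CB}^\infty$-extension observation preceding the definition of $\text{sol}^\gamma_{\bm \lambda}$ in order to lift the eigenvalue identity to the global statement required by membership in $\text{sol}^\gamma_{\bm \lambda}$.
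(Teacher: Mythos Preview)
Your proof is correct and follows essentially the same approach as the paper: use the intertwining relation \eqref{eqn:intertwine2} to show $P^\gamma_{(N)}\e^{\ii\bm\lambda}\in\text{sol}^\gamma_{\bm\lambda}$, evaluate at $\bm 0$ via \eqref{eqn:proprestricted} with $w=1$ to see it is nonzero, and invoke \rfl{lem:Dunklsystemzero2} on a suitable difference to get uniqueness. Your version is marginally more streamlined in that you compare an arbitrary $f$ directly against the explicit generator rather than first showing any two nonzero elements are proportional, but this is a cosmetic rearrangement of the same argument.
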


\begin{proof}
Suppose that $f,g \in \text{sol}_{\bm \lambda}^\gamma$ and $f \ne 0$, $g \ne 0$. We wish to show that $f$ is a multiple of $g$. From \rfl{lem:Dunklsystemzero2} we conclude that $f(\bm 0) \ne 0 \ne g(\bm 0)$ and the function $\tilde g := \frac{f(\bm 0)}{g(\bm 0)} g$ satisfies $f(\bm 0) = \tilde g(\bm 0)$. 
Note that $h=f- \tilde g$ is also an element of $\text{sol}_{\bm \lambda}^\gamma$, and $h(\bm 0)=0$. From \rfl{lem:Dunklsystemzero2} it follows that $h=0$; hence $f$ is a multiple of $g$. $P^\gamma_{(N)} \e^{\ii \bm \lambda} \in \text{sol}_{\bm \lambda}^\gamma$ follows from the intertwining property \eqref{eqn:intertwine2}; it is nonzero since $P^\gamma_{(N)} \e^{\ii \bm \lambda}(\bm 0) = \lim_{\R^N_+ \ni \bm x \to \bm 0} \e^{\ii \bm \lambda}(\bm x) \ne 0$.
\end{proof}

\begin{defn}
Let $\bm \lambda \in \C^N$.
The corresponding \emph{non-symmetric wavefunction} is defined as
\[ \psi_{\bm \lambda}  = P^\gamma_{(N)} \e^{\ii \bm \lambda} \in \ca{CB}^\infty(\R^N). \]
\end{defn}

\begin{lem} \label{lem:Weylgponpsi}
Let $\bm \lambda \in \C^N$.
For $j=1,\ldots,N-1$ we have
\begin{equation} \label{eqn:permutednonsymmwavefn}
s_j \psi_{\bm \lambda} = \psi_{s_j \bm \lambda} - \ii \gamma\frac{\psi_{\bm \lambda} - \psi_{s_j \bm \lambda}}{\lambda_j-\lambda_{j+1}}. \end{equation}
If $\lambda_j=\lambda_{j+1}$ the right-hand side is to be interpreted as a limit: $\lambda_{j+1} \to \lambda_j$.
\end{lem}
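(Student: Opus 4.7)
The plan is to exploit the intertwining relation \eqref{eqn:intertwine1} of the propagation operator, which lets us pull $s_j$ past $P^\gamma_{(N)}$ at the cost of replacing it by $s_j^\gamma = s_j + \gamma I_j$, thereby reducing the lemma to an elementary computation on the plane wave $\e^{\ii \bm\lambda}$.

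Concretely, I would begin by writing
$$s_j \psi_{\bm\lambda} = s_j P^\gamma_{(N)} \e^{\ii \bm\lambda} = P^\gamma_{(N)} s_j^\gamma \e^{\ii\bm\lambda} = P^\gamma_{(N)} s_j \e^{\ii\bm\lambda} + \gamma P^\gamma_{(N)} I_j \e^{\ii\bm\lambda},$$
using \eqref{eqn:intertwine1} in the middle step and \eqref{eqn:sgamma} at the end. The first term is handled by the elementary identity $s_j \e^{\ii\bm\lambda} = \e^{\ii s_j\bm\lambda}$, a consequence of $\inner{\bm\lambda}{s_j \bm x} = \inner{s_j\bm\lambda}{\bm x}$ together with $s_j = s_j^{-1}$. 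For the second term I would carry out the one-dimensional integral in the definition of $I_j = I_{j \, j+1}$, which factorizes the plane wave and yields a geometric integral:
$$I_j \e^{\ii\bm\lambda}(\bm x) = \e^{\ii\bm\lambda}(\bm x) \int_0^{x_j-x_{j+1}} \e^{-\ii y(\lambda_j - \lambda_{j+1})} \dd y = \frac{\e^{\ii\bm\lambda}(\bm x) - \e^{\ii s_j\bm\lambda}(\bm x)}{\ii(\lambda_j - \lambda_{j+1})}.$$

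Putting these together and applying $P^\gamma_{(N)}$ by linearity gives
$$s_j\psi_{\bm\lambda} = \psi_{s_j\bm\lambda} + \frac{\gamma}{\ii(\lambda_j-\lambda_{j+1})}\bigl(\psi_{\bm\lambda} - \psi_{s_j\bm\lambda}\bigr),$$
which is the asserted identity since $\gamma/\ii = -\ii\gamma$. The coincident-wavevector case $\lambda_j = \lambda_{j+1}$ is then dispatched by continuity: both sides depend holomorphically on $\bm\lambda$ (pointwise in $\bm x$) and the apparent singularity on the right is removable; indeed, $I_j\e^{\ii\bm\lambda}(\bm x)$ itself reduces to $(x_j-x_{j+1})\e^{\ii\bm\lambda}(\bm x)$ in the limit, in agreement with the l'Hôpital value of the fraction on the right-hand side.

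The argument is essentially a routine plane-wave computation once the intertwining property \eqref{eqn:intertwine1} is invoked; no genuine obstacle arises, though some care is required to track sign conventions and the direction of the $S_N$-action when verifying $s_j \e^{\ii\bm\lambda} = \e^{\ii s_j\bm\lambda}$, and to justify the division by $\lambda_j-\lambda_{j+1}$ off the diagonal before passing to the limit.
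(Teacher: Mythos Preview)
Your proposal is correct and follows essentially the same approach as the paper: compute $I_j \e^{\ii\bm\lambda} = \frac{-\ii}{\lambda_j-\lambda_{j+1}}(\e^{\ii\bm\lambda}-\e^{\ii s_j\bm\lambda})$ and then invoke the intertwining property \eqref{eqn:intertwine1}. Your write-up simply spells out in full the steps the paper compresses into a single sentence.
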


\begin{proof}
Straightforwardly one finds that $I_j \e^{\ii \bm \lambda} = \frac{-\ii}{\lambda_j-\lambda_{j+1}} \left(\e^{\ii {\bm \lambda}} - \e^{\ii {s_j \bm \lambda}} \right)$.
Using this and the intertwining property \eqref{eqn:intertwine1} we obtain the lemma. \end{proof}

Let $N>1$ and $j=1,\ldots,N-1$.  \eqref{eqn:permutednonsymmwavefn} implies that unless $\lambda_j-\lambda_{j+1}=-\ii \gamma$ we see that $\psi_{\bm \lambda} \ne s_j \psi_{\bm \lambda}$.
It follows that for generic $\bm \lambda$, $\psi_{\bm \lambda}$ is not $S_N$-invariant.

\begin{defn}
Let $\bm \lambda \in \C^N$.
The corresponding \emph{symmetric wavefunction} is given by
\[ \Psi_{\bm \lambda} = \ca S_{(N)}  \psi_{\bm \lambda} \in \ca{CB}^\infty(\R^N)^{S_N}. \] 
\end{defn}

\subsection{Connection to the QNLS model}

The relevance of the space $\text{sol}_{\bm \lambda}^\gamma$ to the QNLS problem is expressed in
\begin{prop}  \label{prop:DunklQNLS}
Suppose that $f \in \text{sol}_{\bm \lambda}^\gamma$ for some $\bm \lambda \in \C^N$.
Then $f$ solves \eqref{eqn:QNLSLaplace}-\eqref{eqn:QNLSjump} with $E = \sum_j \lambda_j^2$, i.e. it is a non-symmetric solution to the QNLS eigenvalue problem.
\end{prop}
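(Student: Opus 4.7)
The plan is to invoke Prop.~2.2 and reduce the claim to two items: (i) the Helmholtz equation \eqref{eqn:QNLSLaplace} with $E = \sum_j \lambda_j^2$, and (ii) the derivative jump conditions \eqref{eqn:QNLSjump}. The regularity hypothesis of Prop.~2.2 is met since $\text{sol}_{\bm \lambda}^\gamma \subset \ca{CB}^\infty(\R^N) \subset \ca{CB}^1(\R^N)$.

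For (i): since the Dunkl-type operators commute by \eqref{eqn:Dunklrel3}, iterating the eigenvalue relation $\partial_j^\gamma f = \ii \lambda_j f$ gives $\sum_j (\partial_j^\gamma)^2 f|_{\R^N_\text{reg}} = -E f|_{\R^N_\text{reg}}$. Applying \eqref{eqn:Dunklsymmetricpolynomial} to the symmetric polynomial $F = \sum_i X_i^2$ replaces $\sum_j (\partial_j^\gamma)^2$ by $\sum_j \partial_j^2$ on $\ca C^\infty(\R^N_\text{reg})$, giving the Helmholtz equation.

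For (ii): fix $1 \leq j < k \leq N$ and a subregular $\bm x \in V_{j \, k}$. Let $w_+ \in S_N$ be the unique element with $\bm x_\delta \in w_+^{-1}\R^N_+$ for $\delta > 0$ small; by the remark preceding Prop.~2.2, $w_+(j)$ and $w_+(k)$ are then consecutive integers, and setting $w_- := w_+ s_{j \, k}$ places $\bm x_{-\delta} \in w_-^{-1}\R^N_+$. A short case analysis, splitting on where the second index of a pair lies relative to $j$ and $k$, establishes the combinatorial identity
\[ \Sigma_{w_-} = \Sigma_{w_+} \sqcup \{(j,k)\}. \]
Now expand $\partial_j f = \ii \lambda_j f + \gamma \Lambda_j f$ and $\partial_k f = \ii \lambda_k f + \gamma \Lambda_k f$ on each alcove via \eqref{eqn:Lambdadefn}, and let $\delta \to 0^+$: continuity of $f$ cancels both the $\ii \lambda$-contributions and every summand attached to a pair common to $\Sigma_{w_+}$ and $\Sigma_{w_-}$. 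Only the pair $(j,k)$ survives: it appears in the minus-signed sum of $\Lambda_j$ for $w_-$ only, contributing $\gamma f(s_{j \, k}\bm x) = \gamma f(\bm x)$ to $\lim_{\delta \to 0^+}(\partial_j f(\bm x_\delta) - \partial_j f(\bm x_{-\delta}))$ (using $s_{j \, k}\bm x = \bm x$ on $V_{j \, k}$), and in the plus-signed sum of $\Lambda_k$ for $w_-$ only, contributing $-\gamma f(\bm x)$ to the analogous limit for $\partial_k$. Subtracting yields the required jump $2 \gamma f(\bm x)$.

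The main obstacle is the $\Sigma$-bookkeeping---verifying that no pair other than $(j,k)$ switches status between $\Sigma_{w_+}$ and $\Sigma_{w_-}$. This reduces to a straightforward case-split using crucially that $w_+(j)$ and $w_+(k)$ are consecutive integers; everything else is routine manipulation.
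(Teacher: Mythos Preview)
Your proposal is correct and follows essentially the same approach as the paper's proof: both parts (i) and (ii) are handled identically, with the Helmholtz equation coming from \eqref{eqn:Dunklsymmetricpolynomial} applied to the power-sum polynomial, and the jump condition coming from expanding $\partial_j f = \ii\lambda_j f + \gamma \Lambda_j f$ on the two adjacent alcoves and tracking the single extra pair $(j,k)$ in $\Sigma_{w_-}$. The only cosmetic difference is that the paper obtains $\Sigma_{w_-} = \Sigma_{w_+} \cup \{(j,k)\}$ in one line by invoking \eqref{eqn:Sigmaequivalence} (with $w_1=s_r$, $w_2=w_+$, noting $l(s_r w_+)=l(w_+)+1$ since $w_+(j)<w_+(k)$), whereas you propose a direct case-split; both are routine.
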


\begin{proof}
\eqref{eqn:QNLSLaplace} follows from the fact that $\sum_{j=1}^N \partial_j^2 $ and $ \sum_{j=1}^N (\partial_j^\gamma)^2$ are the same on $\R^N_\text{reg}$, which is a consequence of \eqref{eqn:Dunklsymmetricpolynomial} with $F$ equal to the sum-of-squares polynomial. 
As for the derivative jump conditions \eqref{eqn:QNLSjump}, consider the hyperplane $V_{j \,k}$ with $1 \leq j < k \leq N$. 
Let $\bm x \in V_{j\,k}$ be subregular with $\bm x_\delta \in w^{-1} \R^N_+$, say, for $\delta>0$ small enough.
We know that $j=w^{-1}(r)$ and $k=w^{-1}(r+1)$ for some $r=1,\ldots,N-1$ and hence $\Sigma_{s_r w}= \Sigma_w \cup \{ (j,k) \}$ by \eqref{eqn:Sigmaequivalence}. 
From $\bm x_{-\delta} \in s_{j\,k}w^{-1} \R^N_+ = (s_rw)^{-1} \R^N_+$ we have
\begin{align*}
&\lim_{\delta \to 0} \left( (\Lambda_j f)(\bm x_\delta) - (\Lambda_j f)(\bm x_{-\delta})  \right) =  \\
&= \lim_{\delta \to 0} \Bigl( \sum_{l:(l,j) \in \Sigma_w} \! (s_{j \, l}f)(\bm x_\delta) - \! \sum_{l:(j,l) \in \Sigma_w} \! (s_{j \, l}f)(\bm x_\delta) -\! \sum_{l:(l,j) \in \Sigma_{s_r w}} \! (s_{j \, l}f)(\bm x_{-\delta})  + \! \sum_{l:(j,l) \in \Sigma_{s_r w}} \! (s_{j \, l}f)(\bm x_{-\delta})  \Bigr)   \\
&= \lim_{\delta \to 0} \Bigl( \sum_{l:(l,j) \in \Sigma_w} \! \left( (s_{j \, l}f)(\bm x_\delta) - (s_{j \, l}f)(\bm x_{-\delta})  \right) - \! \sum_{l:(j,l) \in \Sigma_w}\! \left( (s_{j \, l}f)(\bm x_\delta) - (s_{j \, l}f)(\bm x_{-\delta})  \right) 
+ (s_{j \, k} f)(\bm x_{-\delta}) \Bigr) \\
&= f(\bm x),
\end{align*}
since all $s_{j\,l} f \in \ca C(\R^N)$. 
Similarly, $ \lim_{\delta \to 0} \left( (\Lambda_k f)(\bm x_\delta) - (\Lambda_k f)(\bm x_{-\delta})  \right) = -f(\bm x)$. 
\eqref{eqn:QNLSjump} now follows from $\partial_j f = \ii \lambda_j f + \gamma \Lambda_j f$ and the continuity of $f$.
\end{proof}

By virtue of \rfp{prop:Dunklsystemonedimensional} we have
\begin{prop} \label{prop:psiQNLS}
For all $\bm \lambda \in \C^N$, $\psi_{\bm \lambda}$ satisfies \eqref{eqn:QNLSLaplace}-\eqref{eqn:QNLSjump} with $E = \sum_j \lambda_j^2$.
\end{prop}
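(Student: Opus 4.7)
The plan is to observe that this proposition is essentially an immediate corollary of the two results immediately preceding it, once we confirm that $\psi_{\bm \lambda}$ actually lies in $\text{sol}_{\bm \lambda}^\gamma$.

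First I would note that by the intertwining property \eqref{eqn:intertwine2}, for $j=1,\ldots,N$ we have
\[
\partial_j^\gamma \psi_{\bm \lambda} = \partial_j^\gamma P^\gamma_{(N)} \e^{\ii \bm \lambda} = P^\gamma_{(N)} \partial_j \e^{\ii \bm \lambda} = \ii \lambda_j P^\gamma_{(N)} \e^{\ii \bm \lambda} = \ii \lambda_j \psi_{\bm \lambda}
\]
on $\R^N_\text{reg}$. Combined with the fact (already remarked in the excerpt) that $P^\gamma_{(N)}$ sends $\ca C^\infty(\R^N)$ into $\ca{CB}^\infty(\R^N)$, this shows $\psi_{\bm \lambda} \in \text{sol}_{\bm \lambda}^\gamma$; indeed, this membership is exactly what was exploited in the proof of \rfp{prop:Dunklsystemonedimensional}, so one could alternatively just cite that result.

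Having placed $\psi_{\bm \lambda}$ in $\text{sol}_{\bm \lambda}^\gamma$, I would invoke \rfp{prop:DunklQNLS}, which asserts precisely that any element of $\text{sol}_{\bm \lambda}^\gamma$ solves the Helmholtz equation \eqref{eqn:QNLSLaplace} and the derivative jump conditions \eqref{eqn:QNLSjump} with eigenvalue $E = \sum_j \lambda_j^2$. Applying this to $f = \psi_{\bm \lambda}$ yields the claim.

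There is no real obstacle here; the hard work has been front-loaded into the construction of $P^\gamma_{(N)}$, the verification of the intertwining relations \eqref{eqn:intertwine1}--\eqref{eqn:intertwine2}, and the careful hyperplane-by-hyperplane analysis in the proof of \rfp{prop:DunklQNLS} (in particular the cancellation argument for $\Lambda_j f$ across $V_{j\,k}$). Once those ingredients are in place the present proposition is a one-line deduction, and I would write it as such rather than re-deriving anything.
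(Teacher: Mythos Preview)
Your proposal is correct and matches the paper's approach: the paper simply writes ``By virtue of \rfp{prop:Dunklsystemonedimensional} we have'' before stating the proposition, i.e.\ it uses that $\psi_{\bm\lambda}\in\text{sol}_{\bm\lambda}^\gamma$ and then implicitly appeals to \rfp{prop:DunklQNLS}. Your explicit display of the intertwining computation is a harmless elaboration of the same one-line deduction.
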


From \rfp{prop:DunklQNLS} and \rfp{prop:psiQNLS} we obtain
\begin{cor}
Let $\bm \lambda \in \C^N$.
Then $\Psi_{\bm \lambda}$ satisfies the derivative jump conditions \eqref{eqn:QNLSjump}.
Furthermore, for any symmetric polynomial $F \in \C[\bm \lambda]^{S_N}$, $\Psi_{\bm \lambda}$ is an eigenfunction of $F(\partial_1^\gamma ,\ldots,\partial_N^\gamma )$ with eigenvalue $F(\ii \bm \lambda)$; in particular, $\Psi_{\bm \lambda}$ solves \eqref{eqn:QNLSLaplace} with $E=\sum_j \lambda_j^2$. Hence, $\Psi_{\bm \lambda}$ is an eigenfunction of the QNLS Hamiltonian $H^\gamma_{(N)}$.
\end{cor}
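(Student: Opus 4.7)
The plan is to carry out the three claims in order, using the already-established $\psi_{\bm \lambda}$-theory as a black box together with symmetrization.

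\textbf{Symmetric polynomial eigenvalue claim.} I would begin with the central observation that, by the cited statement that the centre of $\ca A_N^\gamma$ is generated by $S_N$-symmetric polynomials in the $\partial_j^\gamma$, the operator $F(\partial_1^\gamma,\ldots,\partial_N^\gamma)$ commutes with every $w\in S_N$, hence with $\ca S_{(N)}$. From \rfp{prop:Dunklsystemonedimensional} we have $\psi_{\bm \lambda}\in \text{sol}_{\bm \lambda}^\gamma$, so $\partial_j^\gamma \psi_{\bm \lambda} = \ii\lambda_j \psi_{\bm \lambda}$ and therefore $F(\partial_1^\gamma,\ldots,\partial_N^\gamma)\psi_{\bm \lambda}=F(\ii\bm \lambda)\psi_{\bm \lambda}$. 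Applying $\ca S_{(N)}$, using the commutation just mentioned, yields $F(\partial_1^\gamma,\ldots,\partial_N^\gamma)\Psi_{\bm \lambda}=F(\ii\bm \lambda)\Psi_{\bm \lambda}$.

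\textbf{The Helmholtz equation.} Specializing to $F(\bm \lambda)=\sum_j\lambda_j^2$ and invoking \eqref{eqn:Dunklsymmetricpolynomial}, the operator $F(\partial_1^\gamma,\ldots,\partial_N^\gamma)$ coincides with $\sum_j\partial_j^2$ on $\ca C^\infty(\R^N_{\text{reg}})$; the eigenvalue becomes $F(\ii\bm \lambda)=-\sum_j\lambda_j^2$. Multiplying by $-1$ gives \eqref{eqn:QNLSLaplace} with $E=\sum_j\lambda_j^2$ on $\R^N_{\text{reg}}$, as required.

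\textbf{Derivative jump conditions.} Here the point is that the jump condition \eqref{eqn:QNLSjump} is $S_N$-equivariant: if $f\in\ca{CB}^1(\R^N)$ satisfies \eqref{eqn:QNLSjump} on every $V_{j\,k}$, then so does $w f$ for every $w\in S_N$, because the jump of $(\partial_j-\partial_k)(wf)$ at a subregular $\bm x\in V_{j\,k}$ equals the jump of $(\partial_{w^{-1}(j)}-\partial_{w^{-1}(k)})f$ at $w^{-1}\bm x\in V_{w^{-1}(j)\,w^{-1}(k)}$, which by \eqref{eqn:QNLSjump} for $f$ is $2\gamma f(w^{-1}\bm x)=2\gamma(wf)(\bm x)$. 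Applying this with $f=\psi_{\bm \lambda}$ (for which \eqref{eqn:QNLSjump} holds by \rfp{prop:psiQNLS}) and averaging over $w\in S_N$ gives \eqref{eqn:QNLSjump} for $\Psi_{\bm \lambda}$, which lies in $\ca{CB}^\infty(\R^N)^{S_N}\subset \ca{CB}^1(\R^N)$ because $\ca{CB}^\infty(\R^N)$ is $S_N$-stable.

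\textbf{Conclusion.} With Helmholtz plus jump conditions in hand, the final sentence is immediate from the cited Prop. 2.2 of \cite{EmsizOS} (stated just before \eqref{eqn:QNLSLaplace}): $\Psi_{\bm \lambda}\in\ca{CB}^1(\R^N)$ satisfying both \eqref{eqn:QNLSLaplace} and \eqref{eqn:QNLSjump} is an eigenfunction of $H^\gamma_{(N)}$ with eigenvalue $E=\sum_j\lambda_j^2$. The only substantive point requiring care is the $S_N$-equivariance check for the jump condition, since by \rfl{lem:Weylgponpsi} the individual terms $w\psi_{\bm \lambda}$ need not lie in $\text{sol}_{\bm \lambda}^\gamma$; however the jump condition itself manifestly transforms covariantly, so one cannot and need not use the Dunkl-operator machinery for this step.
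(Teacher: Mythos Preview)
Your proof is correct and is precisely the argument the paper leaves implicit: the corollary is stated without proof as an immediate consequence of \rfp{prop:DunklQNLS} and \rfp{prop:psiQNLS}, and you have correctly supplied the symmetrization step, including the $S_N$-equivariance check for the jump condition (which is indeed needed since $\Psi_{\bm\lambda}\notin\text{sol}_{\bm\lambda}^\gamma$ in general). The sign/orientation issue when $w^{-1}(j)>w^{-1}(k)$ is harmless, as both the derivative difference and the $\pm\delta$ labelling flip sign simultaneously.
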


\begin{rem}
Symmetric polynomials in the $\partial_j^\gamma$ can be interpreted as constants of motion.
\end{rem}


Assume $J$ is bounded. 
We are interested under what conditions the restriction of the function $\Psi_{\bm \lambda}$ to $J^N$ can be extended to an $L$-periodic function on $\R^N$, i.e. invariant with respect to the translation group of the lattice $L \Z^N$.
The following statement is well-known \cite{KorepinBI,LiebLi} and can be straightforwardly checked.
\begin{prop}
Let $\bm \lambda \in \C^N$. 
For $\Psi_{\bm \lambda}|_{J^N}$ to be able to be extended to a function which is $L$-periodic in each argument, continuous and smooth away from the set of affine hyperplanes 
\[ \left\{ V_{j,k;m}:= \set{\bm x \in \R^N_+}{x_j-x_k=m L} \right\}_{1 \leq j < k \leq N, m \in \Z} \]
it is necessary that the BAEs \eqref{eqn:BAEintro} are satisfied.
\end{prop}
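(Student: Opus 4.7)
The plan is to compare the boundary values of $\Psi_{\bm\lambda}$ on the two opposite faces of the hypercube $J^N$ using the Bethe ansatz expansion \eqref{eqn:Bethewavefn1}, together with the $S_N$-invariance of $\Psi_{\bm\lambda}$, and then extract the BAEs by comparing Fourier coefficients. Since both $\Psi_{\bm\lambda}|_{J^N}$ and the $L$-periodicity requirement are invariant under permutations of the coordinates, it suffices to impose periodicity in $x_1$, and by the symmetry of $\Psi_{\bm\lambda}$ I may restrict attention to test points with $x^+<x_N<\ldots<x_2<x^-$. At $x_1=x^-$ the point $(x^-,x_2,\ldots,x_N)$ lies in $\R^N_+$ directly; at $x_1=x^+$, using symmetry one has $\Psi_{\bm\lambda}(x^+,x_2,\ldots,x_N)=\Psi_{\bm\lambda}(x_2,x_3,\ldots,x_N,x^+)$, whose argument also lies in $\R^N_+$. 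Both sides can therefore be evaluated using \eqref{eqn:Bethewavefn1}.

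Equating the two plane-wave expansions and invoking linear independence of the exponentials $\e^{\ii\sum_{k=2}^N\mu_k x_k}$ on the open region above, I identify a unique term from each sum for every ordered tuple $(\mu_2,\ldots,\mu_N)$ of distinct entries of $\bm\lambda$. Writing $\mu_1$ for the remaining component so that $\bm\mu=(\mu_1,\mu_2,\ldots,\mu_N)$ is a permutation of $\bm\lambda$, the resulting condition reduces to
\[
\frac{G^\gamma_{(\mu_2,\ldots,\mu_N,\mu_1)}}{G^\gamma_{(\mu_1,\mu_2,\ldots,\mu_N)}} \;=\; \e^{\ii\mu_1 L}
\]
for every such permutation, where $L=x^--x^+$.

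The remaining step is a direct calculation of this $G$-ratio from the product formula in \rfp{prop:Bethewavefn1}: all factors indexed by pairs $(j,k)$ with $2\le j<k\le N$ coincide between numerator and denominator and therefore cancel, leaving only the factors involving the slot that is cycled. A brief sign analysis collapses what remains into $\prod_{k=2}^N\frac{\mu_1-\mu_k+\ii\gamma}{\mu_1-\mu_k-\ii\gamma}$, and setting this equal to $\e^{\ii\mu_1 L}$ while letting $\mu_1$ range over each component of $\bm\lambda$ in turn produces the BAEs \eqref{eqn:BAEintro}. The main obstacle is the bookkeeping in this telescoping step — keeping track of the cyclic indexing and the signs $(\mu_j-\mu_1)=-(\mu_1-\mu_j)$ — but once the reduction to a single cyclic shift of arguments is in place the argument is essentially algebraic. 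Smoothness of the periodic extension away from the affine hyperplanes $V_{j,k;m}$ is then automatic from the piecewise real-analyticity of $\Psi_{\bm\lambda}$ on each alcove and the fact that matching values at $x_j=x^\pm$ propagates to matching of all derivatives via the Helmholtz equation \eqref{eqn:QNLSLaplace}.
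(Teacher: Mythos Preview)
Your argument is correct and is precisely the standard computation the paper has in mind: the paper does not give a proof but simply states that the result ``is well-known \cite{KorepinBI,LiebLi} and can be straightforwardly checked.'' Your reduction via symmetry to a comparison on $\overline{\R^N_+}$, followed by matching plane-wave coefficients and simplifying the ratio $G^\gamma_{(\mu_2,\ldots,\mu_N,\mu_1)}/G^\gamma_{(\mu_1,\ldots,\mu_N)}$, is exactly the Lieb--Liniger derivation.

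One small remark: your final paragraph about smoothness is unnecessary for the proposition as stated, since you are proving \emph{necessity} --- you assume a continuous $L$-periodic extension exists and deduce the BAEs, so you need only use continuity (value-matching) at the faces $x_j=x^\pm$, which is what your main argument does. The claim that matching values propagates to matching of all normal derivatives ``via the Helmholtz equation'' is not correct in general (the Helmholtz equation does not by itself control the first normal derivative across a face), but since this plays no role in the necessity direction it does not affect the validity of your proof. You should also make explicit that you are assuming the $\lambda_j$ pairwise distinct, both so that $G^\gamma_{\bm\lambda}$ is defined and so that the linear-independence-of-exponentials step goes through; this is implicit in the context of \rfp{prop:Bethewavefn1}.
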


Contrary to both the symmetric and the non-interacting ($\gamma=0$) case, by imposing conditions on $\bm \lambda$ one cannot extend the restricted non-symmetric wavefunction $\psi_{\bm \lambda}|_{J^N}$ to a function on $\R^N$ which is $L$-periodic in each variable.
We will demonstrate this in appendix \ref{sec:example} for $N=2$.\\

We remark that Dunkl-type operators and integral-reflection operators, and hence also the propagation operator can be defined in the affine setting \cite{EmsizOS} as well. 
For the A-type case, this means that the propagation operator generates a common eigenfunction of the Dunkl-type operators which satisfies the derivative jump conditions associated to the affine hyperplanes $V_{j \, k;m}$, but which itself is not invariant with respect to the action of the affine Weyl group $\hat S_N$, i.e. not symmetric and not invariant with respect to translations of the lattice $L \Z^N$. 

\section{Non-symmetric creation operators} \label{sec:4}

We will describe certain integral operators $b^\pm_\mu \in \Hom(\ca{CB}^\infty(\R^N),\ca{CB}^\infty(\R^{N\!+\!1}))$ that generate the non-symmetric wavefunctions recursively in two ways, viz.
\begin{equation} \label{eqn:psirecursion0}
\psi_{\lambda_1,\ldots,\lambda_{N+1}} = b^-_{\lambda_{N+1}} \psi_{\lambda_1,\ldots,\lambda_N} = b^+_{\lambda_1} \psi_{\lambda_2,\ldots,\lambda_{N+1}}
\end{equation}
corresponding to adding a particle to an $N$-particle system from the left ($b^+_\mu$) or the right ($b^-_\mu$). 
We remark upon the similarity of \eqref{eqn:Psirecursion} and \eqref{eqn:psirecursion0}, although in the former there is only one recursion, because the functions $\Psi_{\bm \lambda}$ are also invariant with respect to permuting the $\lambda_j$.

\subsection{Notations}

We will from now on reserve the symbol $n$ for the number of particle locations used as integration limits in certain integral operators. 
To keep track of which particle locations are used in these operators, we introduce the following sets of tuples of particle labels. 
Given an interval $K \subset \R$ and a nonnegative integer $n$ we introduce
\begin{align*}
\f i^n_K &= \set{(i_1,\ldots,i_n) \in (\Z \cap K)^n}{ i_l \ne i_m \text{ for } l \ne m},  \\
\f I^n_K &= \set{(i_1,\ldots,i_n) \in (\Z \cap K)^n}{i_1<\ldots<i_n} \subset \f i^n_K.
\end{align*}
Note that $\f i^0_K = \f I^0_K = \{ () \}$ and $\f i^n_K = \f I^n_K = \emptyset$ if $n$ exceeds the number of integers in $K$.
In the particular case $K=[1,N]$ we note that there is a faithful action of $S_N$ on $\f i^n_{[1,N]}$ in such a way that $\f I^n_{[1,N]}$ intersects each orbit in a point. \\

Given $1 \leq j < k \leq N$, define the \emph{step operator} $\theta_{j \, k} \in \End(\ca F(\R^N))$ by
\[ (\theta_{j \, k} f)(\bm x) = \begin{cases} f(\bm x), & x_j>x_k, \\ 0, & \text{otherwise}, \end{cases} \]
which restricts to an endomorphism of $\ca C^\infty(\R^N_\text{reg})$.
Extending this to $n$-tuples, given $\bm i \in \f i^n_{[1,N]}$, $n=1,\ldots,N$, define
\[ \theta_{\bm i} = \theta_{i_1 \, \ldots \, i_n} := \theta_{i_1 \, i_2} \cdots \theta_{i_{n-1} \, i_n} \in \End(\ca F(\R^N)), \]
We remark that $\theta_{\bm i}$ satisfies $w \theta_{\bm i}  = \theta_{w \bm i} w$ for $w \in S_N$. 
It maps $\ca C^\infty(\R^N_\text{reg})$ to itself and $\ca{CB}^\infty(\R^N)$ to $\set{f \in \ca F(\R^N)}{f|_{\R^N_\text{reg}} \in \ca C^\infty(\R^N_\text{reg}) }$.\\

It is convenient to introduce notation that respects the left-right distinction present in the recursions in \eqref{eqn:psirecursion0} as follows.
\begin{itemize}
\item We will write $j^+ = j+1 \in \{1,\ldots,N+1\}$ given $j \in \{0,\ldots,N\}$ and, where convenient, $j^- = j \in \{1,\ldots,N+1\}$ given $j \in \{1,\ldots,N+1\}$. 
Extending this to $n$-tuples, write $\bm i^+ = (i_1+1,\ldots,i_n+1) \in \f i^n_{[1,N\!+\!1]}$ for $\bm i = (i_1,\ldots,i_n) \in \f i^n_{[0,N]}$; equally, write $\bm i^- = \bm i$ for $\bm i \in \f i^n_{[1,N\!+\!1]}$.
\item For $\bm i \in \f i^n_{[0,N]}$, write $\theta_{\bm i}^+ := \theta_{\bm i^+} \in \End(\ca F(\R^{N\!+\!1}))$.
For $\bm i \in \f i^n_{[1,N\!+\!1]}$, write $\theta_{\bm i}^- := \theta_{\bm i^-} \in \End(\ca F(\R^{N\!+\!1}))$.
\item For $\epsilon = \pm$ and $j=1,\ldots,N-1$, write $s^\epsilon_j := s_{j^\epsilon} \in S_{N+1}$.
Extending this to all of $S_N$ by writing permutations as compositions of simple transpositions, given $w \in S_N$ we obtain $w^\epsilon \in S_{N+1}$ determined by $w^\epsilon(j^\epsilon)=(w(j))^\epsilon$ for $j=1,\ldots, N$, $w^+(1)=1$ and $w^-(N+1)=N+1$.
Using the integral representation of $\ca{A}_{N+1}^\gamma$ we also write $w^{\epsilon,\gamma}:=(w^\epsilon)^\gamma \in \End(\ca C(\R^{N\!+\!1}))$ for $w \in S_N$.
\item For $\epsilon = \pm$ and $j =1,\ldots,N$, define $\partial_j^{\epsilon(,\gamma)} := \partial_{j^\epsilon}^{(\gamma)} \in \End(\ca C^\infty(\R^{N\!+\!1}_\text{(reg)}))$.
\end{itemize}

Let $j=1,\ldots,N+1$. 
The assignment 
\[ (\hat \phi_j f)(\bm x) = f(x_1,\ldots,x_{j-1},x_{j+1},\ldots,x_{N+1}), \qquad \text{for } f \in \ca F(\R^N),  \,  \bm x \in \R^{N\!+\!1}, \]
defines basic particle creation operators $\hat \phi_j \in \Hom(\ca F(\R^N),\ca F(\R^{N\!+\!1}))$.
The $\hat \phi_j$ preserve smoothness on the alcoves and continuity, and hence restrict to elements of $\Hom(\ca{CB}^\infty(\R^N),\ca{CB}^\infty(\R^{N\!+\!1}))$.
Also, let $j=1,\ldots,N$ and $y \in \R$. 
The assignment 
\[ (\bar \phi_j(y)f)(\bm x) = f(x_1, \ldots, x_{j-1}, y, x_{j+1},\ldots, x_N), \qquad \text{for } f \in \ca F(\R^N), \bm x \in \R^N, \]
defines $\bar \phi_j(y) \in \End(\ca F(\R^N))$; it preserves continuity and smoothness on the alcoves, and hence restricts to an element of $\End(\ca{CB}^\infty(\R^N))$.
Given $n=0,\ldots,N$, $\bm i \in \f i^n_{[1,N]}$ and $\bm y \in \R^n$, in a multivariate setting we write
\[ \bar \phi_{\bm i}(\bm y)= \prod_{m=1}^n \bar \phi_{i_m}(y_m).  \]
We have
\[ w \bar \phi_{\bm i}(\bm y) = \bar \phi_{w \bm i}(\bm y) w \qquad \text{for } w \in S_N. \]

Furthermore, if the particle number is clear from the context we will write only a sign, and not an index, on an operator that is associated to the left- and rightmost variable in the following situations.
\begin{itemize}
\item We denote $s^+ = s_1, s^- = s_N \in S_{N+1}$, the simple transpositions acting on the first two and last two indices, respectively.
\item We write $\hat \phi^+ = \hat \phi_1$, $\hat \phi^- = \hat \phi_{N+1}$ $\in$ $\Hom(\ca F(\R^N),\ca F(\R^{N\!+\!1}))$. 
\end{itemize}
For $\epsilon = \pm$ we have $\hat \phi^\epsilon \bar \phi_{\bm i}(\bm y) = \bar \phi_{\bm i^\epsilon}(\bm y) \hat \phi^\epsilon$. 

\subsection{The operators $\hat e^\pm_{\mu ; \bm i}$ and $b^\pm_\mu$}

\begin{defn} 
Let $\bm i \in \f i^n_{[1,N]}$ and $\mu \in \C$.
The \emph{elementary (non-symmetric) creation operators} $ \hat e^\pm_{\mu;\bm i} \in \Hom(\ca{C}(\R^N),\ca{C}(\R^{N\!+\!1}))$ are defined by
\begin{align*}
( \hat e^+_{\mu;\bm i} f)(\bm x) &=  \e^{\ii \mu x_1} \theta_{\bm i \, 0}^+(\bm x) \left( \prod_{m=1}^n  \int_{x_{i_{m+1}^+}}^{x_{i_m^+}} \dd y_m \e^{\ii \mu (x_{i_m^+}-y_m)} \right) (\hat \phi^+ \bar \phi_{\bm i}(\bm y)f)(\bm x) , \\
( \hat e^-_{\mu; \bm i} f)(\bm x) &= \e^{\ii \mu x_{N+1}} \theta_{N+1 \, \bm i}^-(\bm x) \left( \prod_{m=1}^n \int_{x_{i_m^-}}^{x_{i_{m-1}^-}} \dd y_m \e^{\ii \mu (x_{i_m^-} -y_m)} \right) (\hat \phi^- \bar \phi_{\bm i}(\bm y)f)(\bm x) ,
\end{align*}
for $f \in \ca C(\R^N)$ and $\bm x \in \R^{N+1}$, where $i_{n+1}^+ = 1$ and $i_0^- =N+1$.
\end{defn}

Given $\epsilon = \pm$, $\hat e^\epsilon_{\mu;\bm i}$ preserves smoothness on the alcoves and vanishes continuously at the hyperplanes $V_{i_l^\epsilon \, i_m^\epsilon} \subset \R^{N\!+\!1}$ for $1 \leq l < m \leq n+1$ and $0 \leq l < m \leq n$, respectively.
Hence both $\hat e^\pm_{\mu;\bm i} \in \Hom(\ca{CB}^\infty(\R^N),\ca{CB}^\infty(\R^{N\!+\!1}))$. 
Moreover, by restricting the arguments of the functions on which the $\hat e^\pm_\mu$ act to regular vectors, we may view $\hat e^\pm_\mu \in \Hom(\ca{C}^\infty(\R^N_\text{reg}),\ca{C}^\infty(\R^{N\!+\!1}_\text{reg}))$.
For $w \in S_N$ and $\epsilon = \pm$ we have
\begin{equation} \label{eqn:epermutation}
w^\epsilon  \hat e^\epsilon_{\mu;\bm i} =  \hat e^\epsilon_{\mu;w \bm i}w.
\end{equation}

Note that for $n=0$ and $\epsilon=\pm$ the definitions $ \hat e^\epsilon_{\mu; \emptyset}$ simplify to
\begin{equation} \label{eqn:e0} \begin{aligned}
( \hat e^+_{\mu;\emptyset} f)(x_1,\ldots,x_{N+1}) &= \e^{\ii \mu x_1} f(x_2,\ldots,x_{N+1})\\
( \hat e^-_{\mu;\emptyset} f)(x_1,\ldots,x_{N+1}) &= \e^{\ii \mu x_{N+1}} f(x_1,\ldots,x_N),
\end{aligned} \end{equation}
which restrict to elements of $\Hom(\ca C^\infty(\R^N),\ca C^\infty(\R^{N\!+\!1}))$.
In the case $\gamma = 0$, these operators equal $b^\epsilon_\mu$ and the obvious statements $\hat e^+_{\mu;\emptyset} \e^{\ii \bm \lambda} = \e^{\ii (\mu,\bm \lambda)}$, $ \hat e^-_{\mu;\emptyset} \e^{\ii \bm \lambda} = \e^{\ii (\bm \lambda, \mu)}$ yield \eqref{eqn:psirecursion0}.

\begin{lem}[Relation between $\hat e^+_{\mu;\bm i}$ and $\hat e^-_{\mu;\bm i}$] \label{lem:eplusemin}
Let $\bm i \in \f i^n_{[1,N]}$.
We have
\[ \hat e^+_{\mu;\bm i} = 
s_{1 \, 2} s_{2 \, 3} \ldots s_{N \, N\!+\!1} s_{i_1 \, i_2} s_{i_2 \, i_3} \ldots s_{i_n \, N\!+\!1} \hat e^-_{\mu;\bm i} . \]
\end{lem}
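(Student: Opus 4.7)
The plan is a direct computation comparing both sides of the identity as operators on $\ca C(\R^N)$. The first step is to recognize the two products of transpositions as cyclic permutations in $S_{N+1}$: by a short induction, $\sigma_+ := s_{1\,2} s_{2\,3} \cdots s_{N\,N+1}$ equals the cycle $(1, 2, \ldots, N+1)$, and $\sigma_{\bm i} := s_{i_1\,i_2} s_{i_2\,i_3} \cdots s_{i_n\,N+1}$ equals the cycle $(i_1, i_2, \ldots, i_n, N+1)$. No hypothesis on the ordering of $\bm i$ is needed, only distinctness of entries, which holds since $\bm i \in \f i^n_{[1,N]}$. Set $\tau := \sigma_+ \sigma_{\bm i} \in S_{N+1}$.

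The second step is to compute $\tau$ on each index $j \in \{1, \ldots, N+1\}$. Composing the two cycles gives $\tau(j) = j+1$ for $j \notin \{i_1, \ldots, i_n, N+1\}$, $\tau(i_m) = i_{m+1}+1$ for $m = 1, \ldots, n-1$, $\tau(i_n) = 1$, and $\tau(N+1) = i_1+1$. This permutation is engineered precisely to relabel $\hat e^-_{\mu;\bm i}$ data into $\hat e^+_{\mu;\bm i}$ data: the indices $i_m$ used in $\bm i^- = \bm i$ get shifted to those in $\bm i^+ = (i_m+1)_m$, with the extra position $N+1$ rerouted via the chain $N+1 \to i_1+1$, $i_1 \to i_2+1$, \ldots, $i_{n-1} \to i_n+1$, $i_n \to 1$.

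The third step is to apply $\tau$ to the explicit expression for $\hat e^-_{\mu;\bm i} f$ via $(wg)(\bm x) = g(x_{w(1)}, \ldots, x_{w(N+1)})$, and verify termwise that each ingredient matches the corresponding piece of $\hat e^+_{\mu;\bm i} f$: (a) the step function $\theta_{N+1\,i_1\,\cdots\,i_n}$ pulls back to $\theta_{i_1+1\,i_2+1\,\cdots\,i_n+1\,1}$, which is exactly $\theta^+_{\bm i\,0}$; (b) the integration limits $\int_{x_{i_m}}^{x_{i_{m-1}}}$ (with convention $i_0 = N+1$) transform to $\int_{x_{i_{m+1}^+}}^{x_{i_m^+}}$ (with convention $i_{n+1}^+ = 1$); (c) after pulling the $x$-dependent pieces out of each integrand, the combined exponential prefactor $e^{\ii\mu x_{N+1}} \prod_{m=1}^n e^{\ii\mu x_{i_m}}$ becomes $e^{\ii\mu x_{i_1+1}} \prod_{m=1}^{n-1} e^{\ii\mu x_{i_{m+1}+1}} \cdot e^{\ii\mu x_1} = e^{\ii\mu x_1} \prod_{m=1}^n e^{\ii\mu x_{i_m+1}}$, the correct prefactor of $\hat e^+_{\mu;\bm i}$; and (d) the function $(\hat\phi^- \bar\phi_{\bm i}(\bm y) f)$ evaluated at $\tau\bm x$ agrees with $(\hat\phi^+ \bar\phi_{\bm i}(\bm y) f)$ evaluated at $\bm x$ because in both cases the argument list presented to $f$ contains $y_m$ in position $i_m$ and $x_{j+1}$ in every other position $j \leq N$.

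The main potential obstacle is purely bookkeeping: four simultaneous substitutions must be tracked consistently through $\tau$, with $+$/$-$ conventions differing by one. There is no analytic subtlety, and since the argument manipulates $f \in \ca C(\R^N)$ pointwise without using any ordering of $\bm i$, the identity holds as stated, and continuity or smoothness of $f$ is inherited on each side.
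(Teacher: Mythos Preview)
Your argument is correct. The computation of $\tau=\sigma_+\sigma_{\bm i}$ is accurate, and each of the four bookkeeping checks (step function, integration limits, exponential prefactor, argument list of $f$) goes through as you describe; in particular your convention $(wg)(\bm x)=g(x_{w(1)},\ldots,x_{w(N+1)})$ matches the paper's action $(wg)(\bm x)=g(w^{-1}\bm x)$.

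However, your route is genuinely different from the paper's. The paper argues by induction on $n$: the base case $n=0$ is the elementary cyclic shift $\hat e^+_{\mu;\emptyset}=s_{1\,2}\cdots s_{N\,N+1}\hat e^-_{\mu;\emptyset}$, and the induction step is the observation that if $\hat e^+_{\mu;\bm i'}=w\,\hat e^-_{\mu;\bm i'}$ for $\bm i'=(i_1,\ldots,i_{n-1})$, then appending $i_n$ only modifies the last integration and one gets $\hat e^+_{\mu;\bm i}=w\,s_{i_{n-1}\,i_n}\hat e^-_{\mu;\bm i}$; combined with the braid-type identity $s_{i_{n-1}\,N+1}s_{i_{n-1}\,i_n}=s_{i_{n-1}\,i_n}s_{i_n\,N+1}$ this reproduces the stated product of transpositions. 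Your direct approach instead identifies the full product of transpositions once and for all as the composite of two explicit cycles and then verifies the operator identity in a single pass. The advantage of your method is transparency: it makes completely explicit \emph{why} this particular permutation works, namely that $\tau$ sends $N+1\mapsto i_1^+$, $i_m\mapsto i_{m+1}^+$, $i_n\mapsto 1$, and $j\mapsto j^+$ otherwise, which is exactly the relabelling taking the $\hat e^-$ data to the $\hat e^+$ data. The paper's induction is shorter to state but hides this structure inside the recursion.
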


\begin{proof}
By induction on $n$. For $n=0$, the statement yields $\hat e^+_{\mu;\emptyset} = s_{1 \, 2} s_{2 \, 3} \ldots s_{N \, N\!+\!1} \hat e^-_{\mu;\emptyset}$, which is obvious from \eqref{eqn:e0}. The induction step follows from the claim that given $\bm i \in \f i^n_{[1,N]}$ and $w \in S_{N+1}$, if $\hat e^+_{\mu;\bm i'} = w \hat e^-_{\mu;\bm i'}$ then $\hat e^+_{\mu;\bm i} = w s_{i_{n-1} \, i_n} \hat e^-_{\mu;\bm i}$, where $\bm i'=(i_1,\ldots,i_{n-1})$, which is easily established.
\end{proof}

\begin{defn}
Let $\mu \in \C$ and $\epsilon = \pm$. We define the \emph{non-symmetric creation operators}
\[ b^\epsilon_\mu  =  \sum_{n = 0}^N \gamma^n \sum_{\bm i \in \f i^{n}_{[1,N]}}  \hat e^\epsilon_{\mu;\bm i} \in \Hom(\ca{CB}^\infty(\R^N),\ca{CB}^\infty(\R^{N\!+\!1})). \]
\end{defn}
We may also think of $b^\pm_\mu$ as an element of $\Hom(\ca{C}^\infty(\R^N_\text{reg}),\ca{C}^\infty(\R^{N\!+\!1}_\text{reg}))$ or $\Hom(\ca C(\R^N),\ca C(\R^{N+1}))$.
From \eqref{eqn:epermutation} it follows that for $\mu \in \C$, $w \in S_N$ and $\epsilon = \pm$ we have
\begin{equation} \label{eqn:bwcommrel}  w^\epsilon b^\epsilon_{\mu} = b^\epsilon_\mu w \end{equation}

\subsection{Recursion of the $\psi_{\bm \lambda}$}

We now arrive at the heart of this paper, where we obtain the QISM-type recurrence relations \eqref{eqn:psirecursion0} for the $\psi_{\bm \lambda}$. 
One way of doing this is to establish certain commutation relations between $b^\pm_\mu$ and $\partial_j^\gamma$.
More precisely, for $\mu \in \C$ and $\epsilon = \pm$ we have
\begin{gather} 
\qquad \qquad \qquad \qquad \partial_j^{\epsilon,\gamma}  b^\epsilon_{\mu} = b^\epsilon_{\mu} \partial_j^\gamma \qquad \text{for } j=1,\ldots,N. \label{eqn:Dunklbcommrel2}\\
\partial_{N+1}^\gamma  b^-_\mu = \ii \mu b^-_\mu, \quad \partial_1^\gamma  b^+_\mu = \ii \mu b^+_\mu, \label{eqn:Dunklbcommrel1} 
\end{gather}
in $\Hom(\ca C^\infty(\R^N_\text{reg}),\ca C^\infty(\R^{N\!+\!1}_\text{reg}))$. 
For a proof of these identities we refer to \cite[App. B.2]{VlaarThesis}. 
Note that $b^-_{\lambda_N}  \cdots b^-_{\lambda_1} \Vac, b^+_{\lambda_1} \ldots b^+_{\lambda_N} \Vac \in \ca{CB}^\infty(\R^N)$ since $\Vac \in \ca{CB}^\infty(\R^0) \cong \C$; one establishes that $b^-_{\lambda_N}  \cdots b^-_{\lambda_1} \Vac, b^+_{\lambda_1} \ldots b^+_{\lambda_N} \Vac \in \text{sol}_{\bm \lambda}^\gamma$ as follows:
\begin{gather*} 
\partial_j^\gamma b^-_{\lambda_N}  \cdots b^-_{\lambda_1} \Vac = b^-_{\lambda_N} \partial_j^\gamma b^-_{\lambda_{N\!-\!1}} \cdots b^-_{\lambda_1} \Vac = \ldots = b^-_{\lambda_N}  \cdots b^-_{\lambda_{j\!+\!1}} \partial_j^\gamma b^-_{\lambda_j} \cdots b^-_{\lambda_1} \Vac = \ii \lambda_j b^-_{\lambda_N}  \cdots b^-_{\lambda_1} \Vac \\
\partial_j^\gamma b^+_{\lambda_1}  \cdots b^+_{\lambda_N} \Vac = b^+_{\lambda_1} \partial_{j\!-\!1}^\gamma b^+_{\lambda_2} \cdots b^+_{\lambda_N} \Vac = \ldots = b^+_{\lambda_1}  \cdots b^+_{\lambda_{j\!-\!1}} \partial_1^\gamma  b^+_{\lambda_j} \cdots b^+_{\lambda_N} \Vac = \ii \lambda_j b^+_{\lambda_1}  \cdots b^+_{\lambda_N} \Vac 
\end{gather*}
by virtue of \eqref{eqn:Dunklbcommrel2} and \eqref{eqn:Dunklbcommrel1}, as required. 
In light of \rfp{prop:Dunklsystemonedimensional}, it follows that $b^-_{\lambda_N}  \cdots b^-_{\lambda_1} \Vac$ and $b^+_{\lambda_1}  \cdots b^+_{\lambda_N} \Vac$ are multiples of $\psi_{\lambda_1,\ldots,\lambda_N}$.
To obtain equality, it suffices to show that they coincide on ${\R}^N_+$.
This follows from \eqref{eqn:proprestricted} and $b^\epsilon_{\mu} f|_{\R^n_+} = \hat e^\epsilon_{\mu;\emptyset} f|_{\R^n_+}$
for $n=0,\ldots,N-1$ and $f \in \ca{CB}^\infty(\R^n)$. 

\begin{rem}
\eqref{eqn:bwcommrel} and \eqref{eqn:Dunklbcommrel2} together express that the $b^\epsilon_\mu$ intertwine the Dunkl-type representation of $\ca{A}_N^\gamma$ with subrepresentations of the Dunkl-type representation of $\ca{A}_{N+1}^\gamma$.
\end{rem}

However, we will derive \eqref{eqn:psirecursion0} in a different way. 
\begin{thm} \label{thm:bpropcommrel}
Let $\mu \in \C$ and $\epsilon = \pm$. In $\Hom(\ca C^\infty(\R^N),\ca{CB}^\infty(\R^{N+1}))$ we have
\[ P^\gamma_{(N\!+\!1)} \hat e^\epsilon_{\mu;\emptyset} = b^\epsilon_\mu P^\gamma_{(N)}. \]
\end{thm}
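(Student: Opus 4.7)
The plan is to prove the identity alcove-by-alcove. By \rfl{lem:eplusemin} together with \eqref{eqn:bwcommrel} and \eqref{eqn:intertwine1}, the $\epsilon=+$ case reduces to the $\epsilon=-$ case, so I focus on showing $P^\gamma_{(N+1)} \hat e^-_{\mu;\emptyset} f = b^-_\mu P^\gamma_{(N)} f$ on each alcove $\tilde w^{-1} \R^{N+1}_+$, for $\tilde w \in S_{N+1}$. On the fundamental alcove ($\tilde w = \mathrm{id}$) both $P^\gamma_{(N+1)}$ and $P^\gamma_{(N)}$ act as the identity; and every step function $\theta^-_{N+1\,\bm i}$ with $\bm i \in \f i^n_{[1,N]}$ nonempty vanishes on $\R^{N+1}_+$ since $x_{N+1}$ is the minimal coordinate there. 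Only the $n=0$ summand of $b^-_\mu$ contributes, and both sides reduce to $\hat e^-_{\mu;\emptyset} f$.

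For a general $\tilde w \in S_{N+1}$, set $k := \tilde w(N+1)$ and use the coset decomposition $\tilde w = v u^-$, where $v := s_k s_{k+1} \cdots s_N$ is the minimal coset representative of $S_{N+1}/S_N$ with $v(N+1) = k$ (and $v := \mathrm{id}$ if $k = N+1$) and $u \in S_N$. This decomposition is length-additive, so $\tilde w^\gamma = v^\gamma u^{-,\gamma}$ in the integral representation of $\ca A_{N+1}^\gamma$. Since $u^-$ fixes $N+1$, the factors $s_j^\gamma$ with $j \leq N-1$ appearing in $u^{-,\gamma}$ touch only the first $N$ coordinates, giving the intertwining $u^{-,\gamma} \hat e^-_{\mu;\emptyset} = \hat e^-_{\mu;\emptyset} u^\gamma$. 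Combined with \eqref{eqn:proprestricted2} applied at rank $N+1$, which yields the operator identity $v^{-1} v^\gamma = (1+\gamma I_{N+1,k}) \cdots (1+\gamma I_{N+1,N})$, and with \eqref{eqn:proprestricted} identifying $u^{-1} u^\gamma f$ with $P^\gamma_{(N)} f$ on the appropriate alcove of $\R^N$, the left-hand side restricted to $\tilde w^{-1}\R^{N+1}_+$ becomes
\begin{equation*}
P^\gamma_{(N+1)} \hat e^-_{\mu;\emptyset} f \;=\; u^{-,-1} \bigl[(1+\gamma I_{N+1,k}) \cdots (1+\gamma I_{N+1,N})\bigr] \hat e^-_{\mu;\emptyset} \bigl( u^\gamma f \bigr).
\end{equation*}

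The heart of the proof is then to expand the product $(1+\gamma I_{N+1,k}) \cdots (1+\gamma I_{N+1,N})$ acting on $e^{\ii\mu x_{N+1}} g(x_1,\ldots,x_N)$ with $g = u^\gamma f$ and to identify each resulting $2^{N+1-k}$ terms with an iterated integral matching the definition of some $\hat e^-_{\mu;\bm i}$; the support conditions $\theta^-_{N+1\,\bm i}$ are then automatic on the alcove $\tilde w^{-1}\R^{N+1}_+$. Summing over subsets of $\{k,\ldots,N\}$ and absorbing the leading $u^{-,-1}$ into the indexing via \eqref{eqn:epermutation} will recover exactly the restriction of $b^-_\mu P^\gamma_{(N)} f$ to this alcove. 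The main obstacle is precisely this combinatorial identification: matching the change-of-variable in each $I_{N+1,j}$ (and the sign from reversing orientation of the integration variable) with the nested limits $\int_{x_{i_m^-}}^{x_{i_{m-1}^-}} \dd y_m$ (with $i_0^- = N+1$) defining $\hat e^-_{\mu;\bm i}$, and verifying that the step functions $\theta^-_{N+1\,\bm i}$ are automatically enforced by the alcove restriction. Once this bookkeeping is settled the identity holds on every alcove, yielding the required operator equality.
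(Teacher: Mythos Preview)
Your reduction to the alcoves $v^{-1}\R^{N+1}_+ = s_N\cdots s_k \R^{N+1}_+$ via the coset decomposition $\tilde w = v u^-$ is correct and is exactly what the paper does. The gap is in the ``heart of the proof'': your claim that each of the $2^{N+1-k}$ terms in the expansion of $(1+\gamma I_{N+1,k})\cdots(1+\gamma I_{N+1,N})\,\hat e^-_{\mu;\emptyset}$ is a single $\hat e^-_{\mu;\bm i}$ is false. Already for two factors one has, by \rfl{lem:eI1},
\[
I_{N+1,j_1} I_{N+1,j_2}\,\hat e^-_{\mu;\emptyset}
= I_{N+1,j_1}\,\hat e^-_{\mu;(j_2)}
= \hat e^-_{\mu;(j_1,j_2)} + \hat e^-_{\mu;(j_2)} I'_{j_2\,j_1},
\]
so an extra term $\hat e^-_{\mu;(j_2)} I'_{j_2\,j_1}$ appears, acting on $g$ by a further integral operator. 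In general the expansion of the LHS is organized as in \rfl{lem:eI2}--\rfl{lem:eI4}: it is a sum of $\hat e^-_{\mu;\bm i}$ composed on the right with products of $I'_{\sigma_k\,j}$.

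There is a mirror error on the RHS. You implicitly replace $P^\gamma_{(N)} f$ by $f$ inside $b^-_\mu$ because $(x_1,\ldots,x_N)\in\R^N_+$, but $\hat e^-_{\mu;\bm i}$ integrates and the integration variables $y_m$ can leave $\R^N_+$ (e.g.\ in $\hat e^-_{\mu;(2)}$ with $N=2$, the variable $y_1\in(x_2,x_3)$ may exceed $x_1$). Hence $\hat e^-_{\mu;\bm i} P^\gamma_{(N)} f \ne \hat e^-_{\mu;\bm i} f$ in general; the discrepancy is computed in \rfl{lem:eI5}--\rfl{lem:eI7} and produces precisely the same $I'$-tails as on the LHS. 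The actual obstacle is therefore not the bookkeeping of integration limits and step functions, but the nontrivial matching of these extra $I'$-contributions on both sides, which is the content of Appendix~\ref{sec:eI}.
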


\begin{proof}
We wish to prove $(P^\gamma_{(N\!+\!1)} \hat e^\epsilon_{\mu;\emptyset} f)(\bm x)= (b^\epsilon_\mu P^\gamma_{(N)} f)(\bm x)$ for all $f \in \ca C^\infty(\R^N)$ and all $\bm x \in \R^{N+1}$.
Because the operators involved preserve continuity, it is sufficient to check the statement for $\bm x$ in the dense set $\R^{N+1}_\text{reg}$.
Furthermore, because of \eqref{eqn:intertwine1} and $w^{\epsilon,\gamma} \hat e^\epsilon_{\mu;\emptyset} = \hat e^\epsilon_{\mu;\emptyset} w^\gamma$ for $w \in S_N$, we may assume $(x_1,\ldots,x_N) \in \R^N_+$, i.e. $\bm x \in \cup_{m=1}^{N\!+\!1} s_N \cdots s_m \R^{N\!+\!1}_+$ and it suffices to prove
\[ P^\gamma_{(N\!+\!1)} \hat e^\epsilon_{\mu;\emptyset} f|_{ s_N \cdots s_m \R^{N\!+\!1}_+} = 
b^\epsilon_\mu P^\gamma_{(N)} f|_{ s_N \cdots s_m \R^{N\!+\!1}_+}, \]
for all $m=1,\ldots,N+1$.
For $\epsilon=-$, this follows from \rfl{lem:eI4} and \rfl{lem:eI7}.
Using \rfl{lem:eplusemin} the results from Appendix \ref{sec:eI} can be straightforwardly modified to deal with $\epsilon=+$.
\end{proof}

\begin{thm}[Recursive construction for the non-symmetric wavefunction] \label{thm:psirecursion1}
Let $\bm \lambda \in \C^N$.
The recursions \eqref{eqn:psirecursion0} hold, and hence
\begin{equation} \label{eqn:psirecursion1} \psi_{\lambda_1,\ldots,\lambda_N} = b^-_{\lambda_N}  \cdots b^-_{\lambda_1} \Vac
= b^+_{\lambda_1}  \cdots b^+_{\lambda_N} \Vac \in \ca{CB}^\infty(\R^N).\end{equation}
\end{thm}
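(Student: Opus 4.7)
The plan is to deduce both recursions in \eqref{eqn:psirecursion0} as a direct corollary of \rft{thm:bpropcommrel}, and then to iterate them to obtain \eqref{eqn:psirecursion1}.

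First I would observe from the simplified form \eqref{eqn:e0} that the elementary operators act on plane waves in the obvious way, namely
\[ \hat e^+_{\mu;\emptyset} \e^{\ii(\lambda_1,\ldots,\lambda_N)} = \e^{\ii(\mu,\lambda_1,\ldots,\lambda_N)}, \qquad \hat e^-_{\mu;\emptyset} \e^{\ii(\lambda_1,\ldots,\lambda_N)} = \e^{\ii(\lambda_1,\ldots,\lambda_N,\mu)}, \]
so that any $(N+1)$-variable plane wave is the image under $\hat e^+_{\mu;\emptyset}$ or $\hat e^-_{\mu;\emptyset}$ of an $N$-variable plane wave, where $\mu$ is the first or last component of the frequency vector respectively.

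Next I would combine this identity with the definition $\psi_{\bm \lambda} = P^\gamma_{(N)}\e^{\ii \bm \lambda}$ and \rft{thm:bpropcommrel}. Concretely, for $\epsilon=-$:
\[ \psi_{\lambda_1,\ldots,\lambda_{N+1}} = P^\gamma_{(N+1)}\e^{\ii(\lambda_1,\ldots,\lambda_{N+1})} = P^\gamma_{(N+1)}\hat e^-_{\lambda_{N+1};\emptyset}\e^{\ii(\lambda_1,\ldots,\lambda_N)} = b^-_{\lambda_{N+1}}P^\gamma_{(N)}\e^{\ii(\lambda_1,\ldots,\lambda_N)} = b^-_{\lambda_{N+1}}\psi_{\lambda_1,\ldots,\lambda_N}, \]
and an entirely analogous computation with $\hat e^+_{\lambda_1;\emptyset}$ handles $\epsilon=+$. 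This gives both halves of \eqref{eqn:psirecursion0}.

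Finally, \eqref{eqn:psirecursion1} follows by a trivial induction on $N$, with base case $\psi_\emptyset = \Vac = 1 \in \ca H_0$ and induction step given by the two recursions just established. The membership $\psi_{\bm \lambda} \in \ca{CB}^\infty(\R^N)$ is already known from the definition of $\psi_{\bm \lambda}$ and the mapping properties of $P^\gamma_{(N)}$, and is consistent with $b^\pm_\mu \in \Hom(\ca{CB}^\infty(\R^N),\ca{CB}^\infty(\R^{N+1}))$. There is no real obstacle here: the entire content of the theorem has been packaged into \rft{thm:bpropcommrel}, so this statement is essentially a corollary obtained by applying that intertwining identity to plane waves and iterating.
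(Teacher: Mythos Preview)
Your proof is correct and follows essentially the same approach as the paper: both derive \eqref{eqn:psirecursion0} by applying \rft{thm:bpropcommrel} to the plane wave $\e^{\ii \bm \lambda}$, using the observation \eqref{eqn:e0} that $\hat e^\pm_{\mu;\emptyset}$ prepends or appends $\mu$ to the frequency vector, and then iterate. The paper's proof is just a slightly more compressed version of exactly the chain of equalities you wrote out.
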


\begin{proof}
This follows from the fact that $\psi_{\bm \lambda}$ can be written in two ways using \rft{thm:bpropcommrel}:
\begin{align*}
\psi_{\bm \lambda} &= P^\gamma_{(N)}\e^{\ii \bm \lambda} &&= P^\gamma_{(N)} \hat e^-_{\lambda_N}  \e^{\ii (\lambda_1, \ldots,\lambda_{N\!-\!1})} 
&&= b^-_{\lambda_{N}} P^\gamma_{(N-1)} \e^{\ii (\lambda_1, \ldots,\lambda_{N\!-\!1})} 
&&= b^-_{\lambda_{N}} \psi_{\lambda_1, \ldots,\lambda_{N\!-\!1}}, \, \\
\psi_{\bm \lambda} &= P^\gamma_{(N)}\e^{\ii \bm \lambda} &&= P^\gamma_{(N)}\hat e^+_{\lambda_1}  \e^{\ii (\lambda_2, \ldots,\lambda_{N})} 
&&= b^+_{\lambda_1} P^\gamma_{(N-1)} \e^{\ii (\lambda_2, \ldots,\lambda_{N})} 
&&= b^+_{\lambda_{1}} \psi_{\lambda_2, \ldots,\lambda_{N}}.  \tag*{\qedhere}
\end{align*}
\end{proof}

We also recover the restrictions of \eqref{eqn:Dunklbcommrel2}-\eqref{eqn:Dunklbcommrel1} to the $P^\gamma_{(N)}$-image of $\ca{C}(\R^N)$ by using arguments such as
\[ \partial_j^{\epsilon,\gamma}  b^\epsilon_\mu P^\gamma_{(N)} =  \partial_j^{\epsilon,\gamma}  P^\gamma_{(N+1)} \hat e^\epsilon_{\mu;\emptyset} =  P^\gamma_{(N+1)} \partial^\epsilon_j \hat e^\epsilon_{\mu;\emptyset} =  P^\gamma_{(N+1)} \hat e^\epsilon_{\mu;\emptyset} \partial_j = b^\epsilon_\mu P^\gamma_{(N)} \partial_j = b^\epsilon_\mu \partial_j^\gamma P^\gamma_{(N)}. \]

\subsection{Commuting the $b^\pm_\mu$ on the span of the $\psi_{\bm \lambda}$} \label{sec:psispan}

Consider the following subspace of $\ca{CB}^\infty(\R^N)$:
\[ \f z_N := \Sp\set{\psi_{\bm \lambda}}{\bm \lambda \in \C^N}. \]
By virtue of the recursions \eqref{eqn:psirecursion0} both $b^\pm_\mu$ map $\f z_N$ to $\f z_{N+1}$, for all $\mu \in \C$.
We will now study the commutation relations among the $b^\pm_\mu$ acting on these subspaces $\f z_N$.
First of all, from \eqref{eqn:psirecursion0} it is immediately clear that
\begin{prop} 
For all $\mu,\nu \in \C$ we have
\begin{equation}
 [b^+_\mu, b^-_\nu] = 0 \in \Hom(\f z_N,\f z_{N\!+\!2}).  \label{eqn:bbcommrel1}
\end{equation}
\end{prop}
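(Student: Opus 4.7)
The proposition says that $[b^+_\mu,b^-_\nu]$ annihilates every element of the spanning set $\{\psi_{\bm\lambda}\}$, so by linearity it suffices to show the commutator vanishes on each $\psi_{\bm\lambda}$. The natural plan is therefore: fix $\bm\lambda=(\lambda_1,\ldots,\lambda_N)\in\C^N$ and compute $b^+_\mu b^-_\nu\psi_{\bm\lambda}$ and $b^-_\nu b^+_\mu\psi_{\bm\lambda}$ directly using the recursion \eqref{eqn:psirecursion0} provided by \rft{thm:psirecursion1}.

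First, one applies the ``$-$'' recursion to rewrite $b^-_\nu\psi_{\bm\lambda}=\psi_{\lambda_1,\ldots,\lambda_N,\nu}$, which is an element of $\ca{CB}^\infty(\R^{N+1})$; then one applies the ``$+$'' recursion to prepend $\mu$, obtaining $b^+_\mu b^-_\nu\psi_{\bm\lambda}=\psi_{\mu,\lambda_1,\ldots,\lambda_N,\nu}$. Doing the two operations in the other order, one first uses the ``$+$'' recursion to get $b^+_\mu\psi_{\bm\lambda}=\psi_{\mu,\lambda_1,\ldots,\lambda_N}$, and then the ``$-$'' recursion to append $\nu$, yielding $b^-_\nu b^+_\mu\psi_{\bm\lambda}=\psi_{\mu,\lambda_1,\ldots,\lambda_N,\nu}$. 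The two outputs agree, so $[b^+_\mu,b^-_\nu]\psi_{\bm\lambda}=0$, and extending linearly over the spanning set $\{\psi_{\bm\mu}\mid\bm\mu\in\C^N\}$ gives the claim as an identity in $\Hom(\f z_N,\f z_{N+2})$.

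Conceptually, the proof just records the fact that adding a particle from the left and adding a particle from the right are operations on the parameter tuple that commute trivially (prepending $\mu$ and appending $\nu$ give the same result in either order), and the recursive construction of the $\psi_{\bm\lambda}$ lifts this tuple-level commutation to an operator identity. There is essentially no obstacle: the only subtlety is being explicit that $\f z_N\subset\ca{CB}^\infty(\R^N)$ is spanned by the $\psi_{\bm\lambda}$ (so both $b^\pm_\mu$ map $\f z_N\to\f z_{N+1}$ as already noted, and the composite commutator is well-defined on $\f z_N$ with image in $\f z_{N+2}$) and that the recursions may be invoked regardless of the relative ordering of the parameters. No commutation relations between the $b^\pm$ and the Dunkl-type operators are needed here — the result is a direct corollary of \rft{thm:psirecursion1} alone.
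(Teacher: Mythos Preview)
Your proof is correct and matches the paper's approach exactly: the paper simply states that the proposition is ``immediately clear'' from the recursion \eqref{eqn:psirecursion0}, which is precisely the argument you spell out.
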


As for the commutation relation of two creation operators of the \emph{same} sign, we have
\begin{prop} 
Given $\epsilon = \pm$ and $\mu,\nu \in \C$ we have 
\begin{equation} \label{eqn:bbcommrel2}
s^\epsilon b^\epsilon_\mu b^\epsilon_\nu - b^\epsilon_\nu b^\epsilon_\mu = \frac{-\epsilon\ii \gamma}{\mu-\nu} \left[ b^\epsilon_\mu, b^\epsilon_\nu \right] \in \Hom(\f z_N,\f z_{N\!+\!2}).
\end{equation}
For $\mu=\nu$ the right-hand sides represent the limits $-\epsilon \ii \gamma \lim_{\nu \to \mu} \frac{[b^\epsilon_\mu, b^\epsilon_\nu]}{\mu-\nu}$.
\end{prop}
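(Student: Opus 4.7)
The plan is to reduce everything to the recursive formula \eqref{eqn:psirecursion1} for $\psi_{\bm\lambda}$ together with the elementary $S_N$-action formula \rfl{lem:Weylgponpsi}. Since $\f z_N$ is by definition the span of the $\psi_{\bm\lambda}$, it suffices to verify the identity on each basis element $\psi_{\bm\lambda}$, $\bm\lambda\in\C^N$.

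First, I would apply \eqref{eqn:psirecursion0} twice to compute all four operator actions explicitly. For $\epsilon=-$ this yields
\[ b^-_\mu b^-_\nu \psi_{\bm\lambda}=\psi_{\lambda_1,\ldots,\lambda_N,\nu,\mu},\qquad b^-_\nu b^-_\mu \psi_{\bm\lambda}=\psi_{\lambda_1,\ldots,\lambda_N,\mu,\nu}, \]
and for $\epsilon=+$ one obtains $\psi_{\mu,\nu,\lambda_1,\ldots,\lambda_N}$ and $\psi_{\nu,\mu,\lambda_1,\ldots,\lambda_N}$ respectively. Next, since $s^-=s_{N+1}$ and $s^+=s_1$ as elements of $S_{N+2}$, \rfl{lem:Weylgponpsi} applied to $j=N+1$ (resp.\ $j=1$) gives
\[ s^- \psi_{\lambda_1,\ldots,\lambda_N,\nu,\mu}=\psi_{\lambda_1,\ldots,\lambda_N,\mu,\nu}-\ii\gamma\,\frac{\psi_{\lambda_1,\ldots,\lambda_N,\nu,\mu}-\psi_{\lambda_1,\ldots,\lambda_N,\mu,\nu}}{\nu-\mu}, \]
and analogously for $\epsilon=+$ with denominator $\mu-\nu$.

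Rewriting the numerators using the first step recognises $[b^\epsilon_\mu,b^\epsilon_\nu]\psi_{\bm\lambda}$, and subtracting $b^\epsilon_\nu b^\epsilon_\mu \psi_{\bm\lambda}$ from $s^\epsilon b^\epsilon_\mu b^\epsilon_\nu \psi_{\bm\lambda}$ yields the desired identity on $\psi_{\bm\lambda}$. The case $\mu=\nu$ is absorbed into the limiting convention already present in \rfl{lem:Weylgponpsi}, and linearity then extends the identity to all of $\f z_N$.

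The only real subtlety — not a major obstacle, but precisely the mechanism that produces the $-\epsilon$ prefactor in the statement — is bookkeeping the order in which the two new spectral parameters appear after two applications of $b^\epsilon$. For $\epsilon=-$ the most recently inserted parameter sits in the last slot, so the relevant transposition acts on the pair $(\nu,\mu)$ and introduces a denominator $\nu-\mu=-(\mu-\nu)$; for $\epsilon=+$ the most recently inserted one sits in the first slot and the denominator is $\mu-\nu$. These opposite signs are exactly what combine with the $\ii\gamma$ in \rfl{lem:Weylgponpsi} to yield the universal factor $-\epsilon\ii\gamma/(\mu-\nu)$ in \eqref{eqn:bbcommrel2}.
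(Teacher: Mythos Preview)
Your proof is correct and follows essentially the same approach as the paper: reduce to the basis $\psi_{\bm\lambda}$, use the recursion \eqref{eqn:psirecursion0} to identify $b^\epsilon_\mu b^\epsilon_\nu\psi_{\bm\lambda}$ and $b^\epsilon_\nu b^\epsilon_\mu\psi_{\bm\lambda}$ as explicit non-symmetric wavefunctions, then invoke \rfl{lem:Weylgponpsi} at $j=1$ or $j=N+1$ and take limits for $\mu=\nu$. Your remark on the sign bookkeeping is precisely the mechanism the paper uses to obtain the $-\epsilon$ prefactor.
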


\begin{proof}
It suffices to prove the stated identities when applied to $\psi_{\bm \lambda}$ with $\bm \lambda \in \C^N$ arbitrary.
By virtue of \eqref{eqn:psirecursion0} in the case $\mu \ne \nu$ we need to show that
\begin{align*}
s_1 \psi_{\mu,\nu,\bm \lambda} &= \psi_{\nu,\mu,\bm \lambda} - \frac{\ii \gamma}{\mu-\nu} \left( \psi_{\mu,\nu,\bm \lambda}-\psi_{\nu,\mu,\bm \lambda}\right), \\
s_{N+1} \psi_{\bm \lambda,\nu,\mu} &= \psi_{\bm \lambda,\mu,\nu} + \frac{\ii \gamma}{\mu-\nu} \left( \psi_{\bm \lambda,\nu,\mu}-\psi_{\bm \lambda,\mu,\nu}\right),
\end{align*}
but this is precisely what is stated in \rfl{lem:Weylgponpsi} with $N \to N+2$, for $j=1$ (taking $\lambda_1 = \mu,\lambda_2 = \nu$) and $j=N+1$ (taking $\lambda_{N\!-\!1} = \nu, \lambda_{N+2} = \mu$).
The case $\mu = \nu$ follows from the case $\mu \ne \nu$ by taking limits and noting that the propagation operator preserves continuity.
\end{proof}

\begin{rem}
We do not claim that $\f z_N = \ca{CB}^\infty(\R^N)$.
Therefore we do not prove that the commutation relations among the $b^\pm_\mu$ hold on the entire $\ca{CB}^\infty(\R^N)$, although we conjecture this. 
\end{rem}

\section{The non-symmetric Yang-Baxter algebra} \label{sec:5}

There is a natural embedding of $\ca{C}(J^N)$ into the Hilbert space $\f h_N=\L^2(J^N)$, which contains the dense subspace
\begin{equation}
\f d_N   := \ca C^\infty_\text{cpt}(J^N)
\end{equation}
of \emph{test functions}, viz. smooth functions with compact support.
The \emph{(non-symmetric) Fock space} $\f h$, the subset of the direct sum of all $\f h_N$ consisting of elements of finite $\L^2$-norm,
is also a Hilbert space with respect to $\innerrnd{}{}$ and contains the dense subspace of \emph{finite vectors}
\begin{equation}
\f h_\text{fin} := \set{f \in \f h}{\exists M \in \Z_{\geq 0}: \, f \in \bigoplus_{N=0}^M \f h_N}.
\end{equation}
By restricting the arguments of functions acted upon by $\hat \phi_j, \hat e^\pm_{\mu;\bm i}, b^\pm_\mu$ to $J$ we may view these operators as acting on $\ca{C}(J^N)$. Furthermore, they are densely-defined linear maps$: \f h \to \f h$.\\

From now on, throughout sections \ref{sec:5} and \ref{sec:6}, assume that $J$ is bounded with $x^--x^+=L$.
We note that, given $\mu$, the elementary integral operators $\hat e^\pm_{\mu;\bm i}$, and hence the operators $b^\pm_\mu$, are bounded operators\footnote{A proof for this statement could go along the same lines as \cite[Props. 6.2.1~and~6.2.2]{Gutkin1988}.} and may therefore be considered as elements of $\End(\f h)$. This means they can be composed with other elements of $\End(\f h)$. 
We will now construct operators $a^\pm_\mu$ and $c^\pm_\mu$ out of the non-symmetric creation operators $b^\pm_\mu$ and show that they satisfy commutation relations akin to some of the relations in \eqref{eqn:ABCDcommrel}.

\subsection{The operators $a^\pm_\mu$}

The assignments
\[ (\check \phi^+ f)(\bm x) = f(x_+,\bm x), \qquad (\check \phi^- f)(\bm x) = f(\bm x,x_-) , \]
for $f \in \f h_{N+1}$, $\bm x \in J^N$ define $\check \phi^\pm \in \Hom(\f h_{N+1},\f h_N)$.
We have
\begin{gather}
[\check \phi^+,\check \phi^-]=0 , \label{eqn:checkphiproperty1} \\
(\check \phi^\epsilon )^2(1-s^\epsilon )=0 \qquad \text{for } \epsilon =\pm. \label{eqn:checkphiproperty2}
\end{gather}

\begin{lem}
Let $\epsilon = \pm$, $\mu \in \C$, $n=0,\ldots,N-1$ and $\bm i \in \f i^n_{[1,N\!-\!1]}$. Then in $\End(\f h)$ we have
\begin{align} 
\hat e^\epsilon_{\mu;\bm i} \check \phi^\epsilon &= \check \phi^\epsilon  s^\epsilon  \hat e^\epsilon_{\mu;\bm i^\epsilon }, \label{eqn:echeckphi} \\ 
b^\epsilon _\mu \check \phi^\epsilon & = \check \phi^\epsilon  s^\epsilon  b^\epsilon _\mu. \label{eqn:bcheckphi}
\end{align}
\end{lem}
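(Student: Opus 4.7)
The strategy is to prove the first identity \eqref{eqn:echeckphi} by direct pointwise computation and then deduce \eqref{eqn:bcheckphi} by summation over $n$ and $\bm i$, handling the terms that fall outside the common index set of the two sums.

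For \eqref{eqn:echeckphi} with $\epsilon=-$, take $f\in\f h_{N+1}$ and $\bm x\in\R^{N+1}$ and unfold both sides at this point. On the left, $\check\phi^-$ pins the $(N+1)$-th argument of $f$ to $x_-$, and since $\bm i\in\f i^n_{[1,N-1]}$ contains no index equal to $N$ or $N+1$, none of the integration limits or step factors of $\hat e^-_{\mu;\bm i}$ interacts with that pinned argument, yielding
\[ \e^{\ii\mu x_{N+1}}\,\theta^-_{N+1\,\bm i}(\bm x)\prod_{m=1}^n\int_{x_{i_m}}^{x_{i_{m-1}}}\dd y_m\,\e^{\ii\mu(x_{i_m}-y_m)}\,f(x_1,\ldots,y_m^{(i_m)},\ldots,x_N,x_-) \]
with the convention $i_0=N+1$. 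On the right, I first expand $\hat e^-_{\mu;\bm i^-}f$ at an arbitrary $(N+2)$-tuple $\tilde{\bm x}$, obtaining the analogous formula with $N+2$ in place of $N+1$; applying $s^-$ (swap $\tilde x_{N+1}\leftrightarrow\tilde x_{N+2}$) moves the outer exponential $\e^{\ii\mu\tilde x_{N+2}}$ and the initial integration limit $i_0^-=N+2$ over to $\tilde x_{N+1}$, and $\check\phi^-$ then kills the same slot inside $f$ (setting $\tilde x_{N+2}=x_-$); the two expressions coincide. For $\epsilon=+$, I would either invoke \rfl{lem:eplusemin} to reduce to $\epsilon=-$, or run the symmetric direct calculation with the roles of the first and last coordinates interchanged.

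For \eqref{eqn:bcheckphi}, summing \eqref{eqn:echeckphi} over $n=0,\ldots,N-1$ and $\bm i\in\f i^n_{[1,N-1]}$ matches the bulk of both sides. The remaining LHS terms, indexed by $\bm i\in\f i^n_{[1,N]}$ containing $N$ or by $n=N$, pair with RHS terms $\bm j=\bm i^\epsilon$ through the very same direct calculation, which extends verbatim to this larger range of $\bm i$. The truly unpaired RHS terms then carry $\bm j\in\f i^n_{[1,N+1]}$ containing the boundary index ($N+1$ for $\epsilon=-$, or $1$ for $\epsilon=+$); each such contribution vanishes in $\f h$ because, after composition with $s^\epsilon$ and $\check\phi^\epsilon$, the step factor $\theta^\epsilon$ forces a strict inequality of the form $\tilde x_k>x_-$ or $x_+>\tilde x_k$ for some $\tilde x_k\in J=[x^+,x^-]$, which holds only on a measure-zero set. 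The main obstacle will be the careful bookkeeping of the various index shifts ($\bm i^\epsilon$, $i_m^\epsilon$, and the action of $s^\epsilon$ on tuples inside $\theta^\epsilon$) in the direct calculation, and verifying the boundary-vanishing argument for every position of the boundary index within $\bm j$, not just at the endpoints.
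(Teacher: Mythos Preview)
Your strategy is the same as the paper's: a direct pointwise computation for \eqref{eqn:echeckphi}, then summation and a step-function vanishing argument for \eqref{eqn:bcheckphi}. However, there is a systematic off-by-one error in the particle sector that makes the write-up incorrect as stated. With $\bm i \in \f i^n_{[1,N-1]}$, the operator $\hat e^\epsilon_{\mu;\bm i}$ maps $\f h_{N-1}\to\f h_N$ (the ``$N$'' in the paper's definition of $\hat e^\epsilon$ is the size of the index range, not an ambient dimension). Hence both sides of \eqref{eqn:echeckphi} lie in $\End(\f h_N)$, so one must take $f\in\f h_N$ and $\bm x\in J^N$, not $\f h_{N+1}$ and $\R^{N+1}$. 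Concretely, for $\epsilon=-$ the left-hand side reads $\e^{\ii\mu x_N}\theta_{N\,\bm i}(\bm x)\bigl(\prod_m\int_{x_{i_m}}^{x_{i_{m-1}}}\dd y_m\,\e^{\ii\mu(x_{i_m}-y_m)}\bigr)(\bar\phi_{\bm i}(\bm y)f)(x_1,\ldots,x_{N-1},x^-)$ with $i_0=N$, and the right-hand side is $(\hat e^-_{\mu;\bm i}f)(x_1,\ldots,x_{N-1},x^-,x_N)$, exactly as in the paper.

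The same shift propagates into your argument for \eqref{eqn:bcheckphi}. On $\f h_N$, the left-hand $b^\epsilon_\mu$ acts $\f h_{N-1}\to\f h_N$ and therefore sums only over $\bm i\in\f i^n_{[1,N-1]}$; there are \emph{no} ``remaining LHS terms''. Applying \eqref{eqn:echeckphi} to every term already gives the whole left-hand side, equal (after the relabelling $\bm i\mapsto\bm i^\epsilon$) to $\sum_{\bm i\in\f i^n_{[1,N-1]}}\check\phi^- s^-\hat e^-_{\mu;\bm i}$, respectively $\sum_{\bm i\in\f i^n_{[2,N]}}\check\phi^+ s^+\hat e^+_{\mu;\bm i}$. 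The only extra terms sit on the right-hand side, where $b^\epsilon_\mu:\f h_N\to\f h_{N+1}$ sums over $\f i^n_{[1,N]}$: they are indexed by tuples containing $N$ (for $\epsilon=-$) or $1$ (for $\epsilon=+$), not $N+1$. Your vanishing argument is then exactly right: evaluating $\check\phi^\epsilon s^\epsilon\hat e^\epsilon_{\mu;\bm i}$ forces the step factor to demand $x_k>x^-$ or $x^+>x_k$ for some $x_k\in J$, which fails on a set of full measure. So nothing conceptual is missing; fix the sector bookkeeping and drop the spurious ``extension'' step.
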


\begin{proof}
\eqref{eqn:echeckphi} follows from the equalities
\begin{align*} 
(\hat e^+_{\mu;\bm i} \check \phi^+f)(x_1,\ldots,x_N) &= \e^{\ii \mu x_1} \theta_{\bm i^+ \, 1}(\bm x) 
\Biggl( \prod_{m=1}^n \int_{x_{i_{m+1}^+}}^{x_{i_m^+}} \dd y_m \e^{\ii \mu (x_{i_m^+}-y_m)} \Biggr) (\bar \phi_{\bm i^+}(\bm y) f)(x^+,x_2,\ldots,x_N) \\
&= (\hat e^+_{\mu;\bm i^+} f)(x_1,x^+,x_2,\ldots,x_N), \end{align*}
and
\begin{align*} 
(\hat e^-_{\mu;\bm i} \check \phi^-f)(x_1,\ldots,x_N) &=  \e^{\ii \mu x_N} \theta_{N \, \bm i}(\bm x) 
\Biggl( \prod_{m=1}^n \int_{x_{i_m}}^{x_{i_{m-1}}} \dd y_m \e^{\ii \mu (x_{i_m}-y_m)} \Biggr) (\bar \phi_{\bm i}(\bm y) f)(x_1,\ldots,x_{N\!-\!1},x^-) \\
&= (\hat e^-_{\mu;\bm i^-} f)(x_1,\ldots,x_{N\!-\!1},x^-,x_N), 
\end{align*}
respectively, for arbitrary $f \in \f h_N$ and $\bm x = (x_1,\ldots,x_N) \in J^N$.\\

To demonstrate \eqref{eqn:bcheckphi}, we show that $\sum_{\bm i \in \f i^n_{[1,N\!-\!1]}} \hat e^+_{\mu;\bm i} \check \phi^+ = \sum_{\bm i \in \f i^n_{[1,N]}} \check \phi^+ s^+ \hat e^+_{\mu;\bm i}$ for $x_1,\ldots,x_N>x^+$.
Indeed, 
\[ \sum_{\bm i \in \f i^n_{[1,N\!-\!1]}} \hat e^+_{\mu;\bm i} \check \phi^+ = \sum_{\bm i \in \f i^n_{[1,N\!-\!1]}} \check \phi^+ s^+ \hat e^+_{\mu;\bm i^+} = \! \!  \sum_{\bm i \in \f i^n_{[2,N]}} \! \! \check \phi^+ s^+ \hat e^+_{\mu;\bm i} = \sum_{\bm i \in \f i^n_{[1,N]}} \check \phi^+ s^+ \hat e^+_{\mu;\bm i},\]
where we have applied \eqref{eqn:echeckphi} and used that $\check \phi^+ s^+ \hat e^+_{\mu;\bm i}=0$ if $i_m=1$ for some $m$.
A similar argument can be made for the product $\check \phi^- s^- b^-_{\mu}$.
\end{proof}

Out of $\check \phi^\pm$ and the non-symmetric particle creation operators $b^\pm_\mu$ two new operators can be constructed that are endomorphisms of $\f h_N$; in particular, they preserve the particle number.

\begin{defn}
Let $\epsilon  = \pm$, $\mu \in \C$, $n=0,\ldots,N$ and $\bm i \in \f i^n_{[1,N]}$. Define
\[ \bar e^\epsilon_{\mu;\bm i} = \check \phi^\epsilon  \hat e^\epsilon_{\mu;\bm i} \in \End(\f h_N) . \]
In other words,
\begin{align*}
( \bar e^+_{\mu;\bm i} f)(\bm x) &=  \e^{\ii \mu x^+} \theta_{\bm i}(\bm x) \left( \prod_{m=1}^n  \int_{x_{i_{m+1}}}^{x_{i_m}} \dd y_m \e^{\ii \mu (x_{i_m}-y_m)} \right) (\bar \phi_{\bm i}(\bm y) f)(\bm x), \displaybreak[2] \\
( \bar e^-_{\mu; \bm i} f)(\bm x) &= \e^{\ii \mu x^-} \theta_{\bm i}(\bm x) \left( \prod_{m=1}^n \int_{x_{i_m}}^{x_{i_{m-1}}} \dd y_m \e^{\ii \mu (x_{i_m} -y_m)} \right) (\bar \phi_{\bm i}(\bm y) f)(\bm x),
\end{align*}
for $f \in \f h_N$ and $\bm x \in J^N$, where $x_{i_{n+1}} = x^+$ and $x_{i_0} = x^-$.
Furthermore, define
\[ a^\epsilon_\mu  = \check \phi^\epsilon  b^\epsilon_\mu =  \sum_{n \geq 0} \gamma^n \sum_{\bm i \in \f i^n_{[1,N]}} \bar e^\epsilon_{\mu;\bm i} \in \End(\f h_N). \]
\end{defn}

Similar to the situation for $b^\pm_\mu$, the operators $a^\pm_\mu$ are bounded on their domain of definition and may therefore be viewed as elements of $\End(\f h)$; in particular, they may be composed with other such elements.
From \eqref{eqn:bwcommrel}, for $w \in S_N$ and $\epsilon = \pm$, we obtain the identities
\begin{equation} \label{eqn:awcommrel}
[w, a^\epsilon_{\mu}] = 0 \in \End(\f h_N).
\end{equation}

\begin{lem} \label{lem:abrel1}
Let $\mu \in \C$ and $\epsilon =\pm$. We have $b^\epsilon_\mu a^\epsilon_{\nu} = \check \phi^\epsilon s^\epsilon b^\epsilon_\mu  b^\epsilon_\nu \in \Hom(\f h_N,\f h_{N+1})$. 
\end{lem}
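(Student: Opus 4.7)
The plan is to unfold the definition $a^\epsilon_\nu = \check\phi^\epsilon b^\epsilon_\nu$ and then invoke the already-established commutation relation \eqref{eqn:bcheckphi}. Concretely, I write
\[ b^\epsilon_\mu a^\epsilon_\nu = b^\epsilon_\mu \check\phi^\epsilon b^\epsilon_\nu, \]
and apply \eqref{eqn:bcheckphi} to the first two factors. This immediately produces $b^\epsilon_\mu a^\epsilon_\nu = \check\phi^\epsilon s^\epsilon b^\epsilon_\mu b^\epsilon_\nu$, which is exactly the claim.

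The only subtle point is bookkeeping of the particle numbers: in the identity $b^\epsilon_\mu a^\epsilon_\nu$, the innermost operator $b^\epsilon_\nu$ is applied to $\f h_N$, then $\check\phi^\epsilon$ maps $\f h_{N+1} \to \f h_N$, and the outer $b^\epsilon_\mu$ maps this back to $\f h_{N+1}$. When applying \eqref{eqn:bcheckphi} to the composition $b^\epsilon_\mu \check\phi^\epsilon$, the relation is being used in the $(N{+}1)$-particle setting, so the resulting $s^\epsilon$ and outer $\check\phi^\epsilon$ on the right-hand side act on $\f h_{N+2}$ and $\f h_{N+2}\to\f h_{N+1}$ respectively, which is consistent with the claimed target space $\f h_{N+1}$. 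Once this is checked the derivation is essentially a single substitution.

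I expect no genuine obstacle here: the lemma is a formal consequence of \eqref{eqn:bcheckphi} together with the defining identity $a^\epsilon_\mu = \check\phi^\epsilon b^\epsilon_\mu$, and all the real analytic content (boundedness on $\f h$, so that the compositions genuinely make sense as elements of $\End(\f h)$ or $\Hom(\f h_N,\f h_{N+1})$) was already handled in the preceding discussion. The proof will therefore be a two-line computation.
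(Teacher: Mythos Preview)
Your proposal is correct and matches the paper's own proof, which simply reads ``Directly from \eqref{eqn:bcheckphi} and the definitions of $a^\pm_{\mu}$.'' Your particle-number bookkeeping is accurate and the argument is indeed a one-line substitution.
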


\begin{proof}
Directly from \eqref{eqn:bcheckphi} and the definitions of $a^\pm_{\mu}$.
\end{proof}

\begin{lem} \label{lem:adjointnessa}
Let $\mu \in \C$, $n=0,\ldots,N$ and $\bm i \in \f i^n_{[1,N]}$, we have the formal adjointness relations
\[ \left( \bar e^+_{\mu;\bm i} \right)^* = \e^{-\ii \bar \mu(x^++x^-)} \bar e^-_{\bar \mu;\bm i}, \qquad 
\left( a^+_\mu \right)^* = \e^{-\ii \bar \mu(x^++x^-)} a^-_{\bar \mu}. \]
\end{lem}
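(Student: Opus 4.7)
The plan is to prove the adjointness for each $\bar e^+_{\mu;\bm i}$ by unpacking $\innerrnd{\bar e^+_{\mu;\bm i}f}{g}$ and executing the involutive change of variables that swaps $x_{i_m}$ with $y_m$ for each $m=1,\ldots,n$; the claim for $a^+_\mu$ then follows term-by-term, using that $\gamma\in\R$.

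By definition,
\[ \innerrnd{\bar e^+_{\mu;\bm i}f}{g} = \e^{\ii\mu x^+}\int_{J^N}\dd\bm x\,\theta_{\bm i}(\bm x)\prod_{m=1}^n\int_{x_{i_{m+1}}}^{x_{i_m}}\dd y_m\,\e^{\ii\mu(x_{i_m}-y_m)}\,(\bar\phi_{\bm i}(\bm y)f)(\bm x)\,\overline{g(\bm x)}, \]
with $x_{i_{n+1}}=x^+$. The effective integration region (imposed jointly by $\theta_{\bm i}(\bm x)$, the $y_m$-bounds, and $x_j\in J$) is the interleaved chain
\[ x^+\leq y_n\leq x_{i_n}\leq y_{n-1}\leq\ldots\leq y_1\leq x_{i_1}\leq x^-, \]
with $x_j\in J$ free for $j\notin\{i_1,\ldots,i_n\}$. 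I would then set $\tilde x_j=x_j$ for $j\notin\bm i$, $\tilde x_{i_m}=y_m$, and $\tilde y_m=x_{i_m}$; this volume-preserving involution sends the above region to
\[ x^+\leq\tilde x_{i_n}\leq\tilde y_n\leq\tilde x_{i_{n-1}}\leq\ldots\leq\tilde x_{i_1}\leq\tilde y_1\leq x^-, \]
which is exactly the support cut out by $\theta_{\bm i}(\tilde{\bm x})$ combined with the $\tilde y_m$-bounds $\int_{\tilde x_{i_m}}^{\tilde x_{i_{m-1}}}\dd\tilde y_m$ (with $\tilde x_{i_0}=x^-$) appearing in the definition of $\bar e^-_{\bar\mu;\bm i}$.

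Under the swap, $(\bar\phi_{\bm i}(\bm y)f)(\bm x)=f(\tilde{\bm x})$ while $g(\bm x)=(\bar\phi_{\bm i}(\tilde{\bm y})g)(\tilde{\bm x})$, and each phase $\e^{\ii\mu(x_{i_m}-y_m)}$ becomes $\e^{\ii\mu(\tilde y_m-\tilde x_{i_m})}$. Pulling $f(\tilde{\bm x})$ outside of a complex conjugate and using $\e^{\ii\mu x^+}=\e^{\ii\mu(x^++x^-)}\overline{\e^{\ii\bar\mu x^-}}$ together with $\e^{\ii\mu(\tilde y_m-\tilde x_{i_m})}=\overline{\e^{\ii\bar\mu(\tilde x_{i_m}-\tilde y_m)}}$, the transformed integral is recognized as $\e^{\ii\mu(x^++x^-)}\innerrnd{f}{\bar e^-_{\bar\mu;\bm i}g}=\innerrnd{f}{\e^{-\ii\bar\mu(x^++x^-)}\bar e^-_{\bar\mu;\bm i}g}$, which is the first identity. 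The second then follows by summing $(\bar e^+_{\mu;\bm i})^*$ against $\bar\gamma^n=\gamma^n$ over all $(n,\bm i)$ and factoring out the shared scalar $\e^{-\ii\bar\mu(x^++x^-)}$, which is independent of $(n,\bm i)$. The one delicate step is verifying that the two interleaved inequality chains truly coincide after the swap; once this is in hand, the phase and substitution bookkeeping falls into place automatically.
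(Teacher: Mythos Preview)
Your proof is correct and follows the same approach as the paper: the paper's proof simply states that the first identity ``can be easily checked by comparing the inner products $\innerrnd{\bar e^+_{\mu;\bm i} f}{g}$ and $\innerrnd{f}{\bar e^-_{\bar \mu;\bm i} g}$'' and that the second is obtained by summing over all tuples $\bm i$, while you have explicitly carried out this comparison via the involutive swap $x_{i_m}\leftrightarrow y_m$ and verified the matching of the interleaved chains.
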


\begin{proof}
The first statement can be easily checked by comparing the inner products $\innerrnd{\bar e^+_{\mu;\bm i} f}{g}$ and $\innerrnd{f}{\bar e^-_{\bar \mu;\bm i} g}$ for arbitrary $f,g \in \f d_N$. The statement relating $a^+_{\mu}$ and $a^-_{\bar \mu}$ is obtained by summing over all tuples $\bm i$.
\end{proof}

\subsection{The operators $c^\pm_\mu$}

The formal adjoints of $b^\pm_\mu$ have not been considered yet.

\begin{defn}
Let $\epsilon = \pm$, $\mu \in \C$, $n=0,\ldots,N$ and $\bm i \in \f i^n_{[1,N]}$, the operators $\check e^\epsilon_{\mu;\bm i} \in \Hom(\f h_{N+1}, \f h_N)$ are densely defined by
\begin{align*}
(\check e^+_{\mu;\bm i} f)(\bm x) &=  \e^{\ii \mu x^+} \theta_{\bm i}(\bm x) \left( \prod_{m=0}^n  \int_{x_{i_{m+1}}}^{x_{i_m}} \dd y_m \e^{\ii \mu (x_{i_m}-y_m)} \right) (\bar \phi_{\bm i}(\bm y) f)(\bm x,y_0), \\
(\check e^-_{\mu; \bm i} f)(\bm x) &= \e^{\ii \mu x^-} \theta_{\bm i}(\bm x) \left( \prod_{m=1}^{n+1} \int_{x_{i_m}}^{x_{i_{m-1}}} \dd y_m \e^{\ii \mu (x_{i_m} -y_m)} \right) (\bar \phi_{\bm i^+}(\bm y) f)(y_{n+1}, \bm x),
\end{align*}
for $f \in \f d_{N+1}$ and $\bm x \in J^N$. 
where again $x_{i_{n+1}} = x^+$ and $x_{i_0} = x^-$.
Furthermore we define $c^\epsilon_\mu \in \End(\f h)$ by $c^\epsilon_\mu = 0$ on $\f h_0$ and
\[ c^\epsilon_\mu|_{\f h_{N+1}} = \sum_{n=0}^N \gamma^{n+1} \sum_{\bm i \in \f i^n_{[1,N]}} \check e^\epsilon_{\mu;\bm i} \in \Hom(\f h_{N+1}, \f h_N). \]
\end{defn}

\begin{lem} \label{lem:adjointnessbc}
Given $\epsilon=\pm$, $\mu \in \C$, $n=0,\ldots,N$ and $\bm i \in \f i^n_{[1,N]}$, we have the formal adjointness relations
\[ \left( \hat e^\epsilon_{\mu;\bm i} \right)^* = \e^{-\ii \bar \mu(x^++x^-)} \check e^{-\epsilon}_{\bar \mu;\bm i}, \qquad
\left( b^\epsilon_\mu \right)^* = \gamma^{-1} \e^{-\ii \bar \mu(x^++x^-)} c^{-\epsilon}_{\bar \mu}. \]
\end{lem}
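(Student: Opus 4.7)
The plan is to first establish the pointwise adjointness $(\hat e^\epsilon_{\mu;\bm i})^* = \e^{-\ii \bar\mu(x^++x^-)} \check e^{-\epsilon}_{\bar\mu;\bm i}$ for each $\bm i$ by an explicit computation of the inner product on test functions, and then to sum the defining series to obtain the identity for $b^\epsilon_\mu$. The factor $\gamma^{-1}$ arises because $b^\epsilon_\mu$ has summand weight $\gamma^n$ whereas $c^{-\epsilon}_{\bar\mu}$ has weight $\gamma^{n+1}$. I describe the case $\epsilon = +$; the case $\epsilon = -$ is entirely analogous after swapping the roles of the first and last arguments of $g$.

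For $f \in \f d_N$, $g \in \f d_{N+1}$, by Fubini
\[ \innerrnd{\hat e^+_{\mu;\bm i} f}{g} = \int_{J^{N+1}} \dd^{N+1}\bm x \, \e^{\ii \mu x_1} \theta^+_{\bm i \, 0}(\bm x) \prod_{m=1}^n \int_{x_{i_{m+1}^+}}^{x_{i_m^+}} \dd y_m \, \e^{\ii \mu(x_{i_m^+} - y_m)} (\hat \phi^+ \bar \phi_{\bm i}(\bm y) f)(\bm x) \, \overline{g(\bm x)}. \]
Relabel $\bm z := (x_2, \ldots, x_{N+1}) \in J^N$ and rename $y_{n+1} := x_1$. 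The step factor $\theta^+_{\bm i \, 0}(\bm x)$ together with the $y_m$-integration bounds carves out the nested region
\[ x^+ < y_{n+1} < y_n < z_{i_n} < y_{n-1} < z_{i_{n-1}} < \cdots < y_1 < z_{i_1} < x^- \]
in the $(2n+1)$-dimensional simplex on $y_1,\ldots,y_{n+1},z_{i_1},\ldots,z_{i_n}$, with accumulated exponential $\e^{\ii \mu y_{n+1}} \prod_{m=1}^n \e^{\ii \mu(z_{i_m} - y_m)}$; the integrand ends in $f(\bm z)$ with the $i_m$-slots replaced by $y_m$, times $\overline{g(y_{n+1}, \bm z)}$.

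The decisive step is the measure-preserving renaming of dummy variables $y_m \leftrightarrow z_{i_m}$ for $m = 1,\ldots,n$. This transforms the chain into
\[ x^+ < y_{n+1} < z_{i_n} < y_n < z_{i_{n-1}} < y_{n-1} < \cdots < z_{i_1} < y_1 < x^-, \]
matching the nesting of $\check e^-_{\bar\mu;\bm i}$. Simultaneously, the substitution pattern swaps: $f$ is now evaluated plainly on $\bm z$, while $\overline{g}$ picks up substitutions $y_m$ in slots $i_m^+$, i.e.\ $\overline{(\bar\phi_{\bm i^+}(\bm y) g)(y_{n+1}, \bm z)}$. The exponential becomes $\e^{\ii \mu y_{n+1}} \prod_{m=1}^n \e^{-\ii \mu(z_{i_m} - y_m)}$; extending the product to $m = n+1$ under the conventions $z_{i_{n+1}} = x^+$, $z_{i_0} = x^-$ gives $\e^{\ii \mu x^+} \prod_{m=1}^{n+1} \e^{-\ii \mu(z_{i_m} - y_m)}$, which is precisely what the expansion of $\innerrnd{f}{\e^{-\ii \bar\mu(x^++x^-)} \check e^-_{\bar\mu;\bm i} g}$ yields after conjugating the $\e^{\ii \bar\mu x^-}$ factor inside $\check e^-_{\bar\mu}$ and combining with the global $\e^{\ii \mu(x^++x^-)}$ from the conjugated prefactor. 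Hence the elementary identity holds on test functions, and extends to $\f h$ by boundedness and density. Summing over $n$ and $\bm i \in \f i^n_{[1,N]}$, the weight $\gamma^n$ on $b^+_\mu$ combines with the weight $\gamma^{n+1}$ in $c^-_{\bar\mu}$ to give the claimed $\gamma^{-1}$, and symmetrically for $\epsilon = -$.

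The main technical hurdle is the careful bookkeeping in the swap step, in particular verifying that the substitution pattern genuinely migrates from $\bar\phi_{\bm i}$ (acting on $f$ at slots in $\{1,\ldots,N\}$) to $\bar\phi_{\bm i^+}$ (acting on $g$ at slots in $\{2,\ldots,N+1\}$), and that the extra "boundary" slot $y_{n+1} = x_1$ on the LHS reincarnates as the explicit first argument of $g$ on the RHS. The endpoint conventions $z_{i_{n+1}} = x^+$, $z_{i_0} = x^-$ must be tracked throughout in order to correctly reassemble the overall $\e^{-\ii \bar\mu(x^++x^-)}$ factor from the combination of the $\e^{\ii \mu x_1}$ prefactor of $\hat e^+$ and the conjugated $\e^{\ii \bar\mu x^-}$ prefactor of $\check e^-_{\bar\mu}$.
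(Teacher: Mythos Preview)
Your proposal is correct and follows essentially the same approach as the paper, which simply refers back to the analogous computation for $\bar e^\pm_{\mu;\bm i}$ (Lemma~\ref{lem:adjointnessa}): compare $\innerrnd{\hat e^\epsilon_{\mu;\bm i} f}{g}$ with $\innerrnd{f}{\check e^{-\epsilon}_{\bar\mu;\bm i} g}$ directly on test functions and then sum over $\bm i$. Your explicit identification of the variable swap $y_m \leftrightarrow z_{i_m}$ and the resulting migration of the substitution pattern from $f$ to $g$ is precisely the content the paper leaves implicit in the phrase ``can be easily checked''.
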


\begin{proof}
In the same way as for \rfl{lem:adjointnessa}.
\end{proof}

\begin{lem} 
Given $\epsilon = \pm$ and $\mu \in \C$ we have, in $\Hom(\f h_{N+1},\f h_N)$,
\begin{equation} 
c^\epsilon_{\mu} = [\check \phi^{-\epsilon}, a^\epsilon_\mu ] = \check \phi^\epsilon [\check \phi^{-\epsilon}, b^\epsilon_\mu ].  \label{eqn:cintermsofb} 
\end{equation}
\end{lem}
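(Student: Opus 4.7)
My plan is to observe first that the second equality is formal: since $a^\epsilon_\mu=\check\phi^\epsilon b^\epsilon_\mu$ by definition and $[\check\phi^+,\check\phi^-]=0$ by \eqref{eqn:checkphiproperty1}, one has $[\check\phi^{-\epsilon},a^\epsilon_\mu]=\check\phi^{-\epsilon}\check\phi^\epsilon b^\epsilon_\mu-\check\phi^\epsilon b^\epsilon_\mu\check\phi^{-\epsilon}=\check\phi^\epsilon[\check\phi^{-\epsilon},b^\epsilon_\mu]$. All the real content therefore sits in the first equality $c^\epsilon_\mu=\check\phi^\epsilon[\check\phi^{-\epsilon},b^\epsilon_\mu]$. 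By the left-right symmetry codified in \rfl{lem:eplusemin} I would treat only $\epsilon=+$, relegating $\epsilon=-$ to the analogous argument.

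Applied to $f\in\f h_{N+1}$ and evaluated at $\bm x=(x_1,\ldots,x_N)\in J^N$, the two summands of $\check\phi^+[\check\phi^-,b^+_\mu]f$ unpack as $(\check\phi^+\check\phi^- b^+_\mu f)(\bm x)=\sum_{n\geq0}\gamma^n\sum_{\bm j\in\f i^n_{[1,N+1]}}(\hat e^+_{\mu;\bm j}f)(x^+,\bm x,x^-)$ and $(\check\phi^+ b^+_\mu\check\phi^- f)(\bm x)=\sum_{n\geq0}\gamma^n\sum_{\bm j\in\f i^n_{[1,N]}}(\hat e^+_{\mu;\bm j}\check\phi^- f)(x^+,\bm x)$. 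The central manoeuvre is to split the first sum according to whether $N+1$ occurs among the entries $j_1,\ldots,j_n$. I expect the part with $N+1\notin\bm j$ to cancel the second sum term-for-term: there, no integration limit references the $(N+2)$-th slot, and evaluating $\hat\phi^+\bar\phi_{\bm j}(\bm y)f$ at $(x^+,\bm x,x^-)$ merely pins that slot to $x^-$, which is exactly what applying $\check\phi^-$ before $\hat e^+_{\mu;\bm j}$ achieves.

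For the remaining terms, where $N+1=j_p$ for some $p$, the step factor $\theta^+_{\bm j\,0}$ imposes the decreasing chain $z_{j_1^+}>\cdots>z_{j_n^+}>z_1$; with $z_{N+2}=x^-$ exceeding every $x_k$ and $x^+$ generically on $J^N$, this forces $p=1$. Hence only tuples of the form $\bm j=(N+1,i_1,\ldots,i_{n-1})$ with $(i_1,\ldots,i_{n-1})\in\f i^{n-1}_{[1,N]}$ survive.

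The main obstacle — more bookkeeping than conceptual — is the termwise identification of what remains with the defining expression for $\check e^+_{\mu;\bm i}$. After reindexing the dummy variables $y_m\mapsto y_{m-1}$, the integration ranges (the first becoming $\int_{x_{i_1}}^{x^-}\dd y_0$ and the others $\int_{x_{i_{m+1}}}^{x_{i_m}}\dd y_m$ with the convention $x_{i_n}=x^+$), the exponential weights $\e^{\ii\mu(x^--y_0)},\e^{\ii\mu(x_{i_m}-y_m)}$, the remaining step condition and the evaluation $f(\ldots,y_0)$ at slot $N+1$ must all be checked to line up with those in $\check e^+_{\mu;(i_1,\ldots,i_{n-1})}f$. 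Once this matching is verified, resumming over $n\geq 1$ with the prefactor $\gamma^n=\gamma\cdot\gamma^{n-1}$ reproduces exactly $(c^+_\mu f)(\bm x)$, completing the argument.
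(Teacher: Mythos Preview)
Your proposal is correct and follows essentially the same route as the paper: both reduce the second equality to $[\check\phi^+,\check\phi^-]=0$ and $a^\epsilon_\mu=\check\phi^\epsilon b^\epsilon_\mu$, and for the first equality both evaluate $(b^+_\mu f)(x^+,\bm x,x^-)$ and split the sum over tuples according to whether the label $N{+}1$ appears (necessarily in the first slot, by the step function), identifying the two pieces with $c^+_\mu f$ and $a^+_\mu\check\phi^- f$ respectively. Your write-up is simply more explicit than the paper's one-sentence sketch about why $j_p=N{+}1$ forces $p=1$ and how the reindexing matches the definition of $\check e^+_{\mu;\bm i}$.
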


\begin{proof}
For $\bm x \in J^N$, in $(b^+_\mu f)(x^+,\bm x,x^-)$ split the summation over $\bm i \in \f i^n_{[1,N]}$ according to whether $i_1$ equals $N$ or not, corresponding to the terms appearing in $(c^+_{\mu}f)(\bm x)$ and $(a^+_\mu f)(x^+,\bm x)$, respectively. 
Then use $\check \phi^+ \check \phi^- b^+_\mu $ = $\check \phi^- a^+_\mu $.
A similar argument for $c^-_{\mu}$ is used.
\end{proof}

Again, it can be verified that the operators $c^\pm_\mu$ are bounded on their domain of definition and may therefore be viewed as elements of $\End(\f h)$; in particular, they may be composed with other such elements.
First of all, from \eqref{eqn:bwcommrel} and \rfl{lem:adjointnessbc} for $w \in S_N$ and $\epsilon = \pm$ we obtain
\begin{equation} \label{eqn:cwcommrel}
w c^\epsilon_\mu = c^\epsilon_\mu w^{-\epsilon} \in \Hom(\f h_{N+1},\f h_N).
\end{equation}

\subsection{Non-symmetric Yang-Baxter relations} \label{sec:nonsymmYBA}

The operators $a^\pm_\mu,b^\pm_\mu,c^\pm_\mu$ generate a subalgebra of $\End(\f h)$ which we will call the \emph{non-symmetric Yang-Baxter algebra}.
Given $\mu,\nu \in \C$ we can formulate commutation relations, to be referred to as the \emph{(non-symmetric) Yang-Baxter relations}, between $a^\pm_\mu$, $b^\pm_\mu$ and $c^\pm_\mu$, on the subspace $\bar{\f z}_N$, given by
\[ \bar{\f z}_N:=\overline{\Sp\set{\psi_{\bm \lambda}|_{J^N}}{\bm \lambda \in \C^N}} \subset \f h_N.\]

\begin{thm} \label{thm:abcommrel}
Given $\epsilon = \pm$, $\mu,\nu \in \C$, we have
\begin{align} 
\label{eqn:abcommrel} a^\epsilon_\mu b^\epsilon_\nu &= \frac{\mu-\nu+ \epsilon \ii \gamma}{\mu-\nu} b^\epsilon_\nu a^\epsilon_\mu -  \frac{\epsilon \ii \gamma}{\mu-\nu} b^\epsilon_\mu a^\epsilon_\nu &&\hspace{-20mm} \in \Hom(\bar{\f z}_N,\bar{\f z}_{N+1}), \\
\label{eqn:cacommrel} c^\epsilon_\mu a^\epsilon_\nu &= \frac{\mu-\nu - \epsilon \ii \gamma}{\mu-\nu} a^\epsilon_\nu c^\epsilon_\mu + \frac{\epsilon \ii \gamma}{\mu-\nu} a^\epsilon_\mu c^\epsilon_\nu &&\hspace{-20mm}\in \Hom(\bar{\f z}_{N+1},\bar{\f z}_N).
\end{align}
\end{thm}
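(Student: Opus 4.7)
My plan for \eqref{eqn:abcommrel} is to combine \eqref{eqn:bbcommrel2} with the factorization $a^\epsilon_\mu = \check\phi^\epsilon b^\epsilon_\mu$ and \rfl{lem:abrel1}. Rearranging \eqref{eqn:bbcommrel2} gives
\[ s^\epsilon b^\epsilon_\mu b^\epsilon_\nu = \tfrac{\mu-\nu+\epsilon\ii\gamma}{\mu-\nu}\, b^\epsilon_\nu b^\epsilon_\mu - \tfrac{\epsilon\ii\gamma}{\mu-\nu}\, b^\epsilon_\mu b^\epsilon_\nu . \]
Left-multiplying by $\check\phi^\epsilon$ and invoking $\check\phi^\epsilon s^\epsilon b^\epsilon_\mu b^\epsilon_\nu = b^\epsilon_\mu a^\epsilon_\nu$ from \rfl{lem:abrel1}, together with $\check\phi^\epsilon b^\epsilon_\bullet = a^\epsilon_\bullet$ from the definition of $a^\epsilon_\bullet$, produces the intermediate identity
\[ b^\epsilon_\mu a^\epsilon_\nu = \tfrac{\mu-\nu+\epsilon\ii\gamma}{\mu-\nu}\, a^\epsilon_\nu b^\epsilon_\mu - \tfrac{\epsilon\ii\gamma}{\mu-\nu}\, a^\epsilon_\mu b^\epsilon_\nu . \qquad (\ast) \]
Swapping $\mu$ and $\nu$ in $(\ast)$ yields a second linear relation among the four products $\{a^\epsilon_\mu b^\epsilon_\nu,\, a^\epsilon_\nu b^\epsilon_\mu,\, b^\epsilon_\mu a^\epsilon_\nu,\, b^\epsilon_\nu a^\epsilon_\mu\}$. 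Eliminating $a^\epsilon_\nu b^\epsilon_\mu$ from this $2{\times}2$ linear system---the decisive simplification being the identity $(\mu-\nu+\epsilon\ii\gamma)(\mu-\nu-\epsilon\ii\gamma)+(\epsilon\ii\gamma)^2 = (\mu-\nu)^2$, which collapses the otherwise quadratic coefficient of $a^\epsilon_\mu b^\epsilon_\nu$ back to a simple rational one---then delivers \eqref{eqn:abcommrel}. Since \eqref{eqn:bbcommrel2} is known on $\f z_N$, the conclusion holds on $\bar{\f z}_N$ by continuity.

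For \eqref{eqn:cacommrel} the plan is to take formal Hilbert-space adjoints of \eqref{eqn:abcommrel}, using \rfl{lem:adjointnessa} and \rfl{lem:adjointnessbc}: namely $(a^\epsilon_\mu)^* = \e^{-\ii\bar\mu(x^++x^-)} a^{-\epsilon}_{\bar\mu}$ and $(b^\epsilon_\mu)^* = \gamma^{-1} \e^{-\ii\bar\mu(x^++x^-)} c^{-\epsilon}_{\bar\mu}$. Each term of the adjoined identity picks up a common scalar factor $\gamma^{-1} \e^{-\ii(\bar\mu+\bar\nu)(x^++x^-)}$ which cancels, and the complex conjugates of the rational coefficients simplify using $\overline{\mu-\nu \pm \epsilon\ii\gamma} = \bar\mu-\bar\nu \mp \epsilon\ii\gamma$. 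Relabelling $\epsilon \mapsto -\epsilon$, $\mu \mapsto \bar\nu$, $\nu \mapsto \bar\mu$ and using $\bar\mu - \bar\nu = -(\mu-\nu)$ converts the result into exactly \eqref{eqn:cacommrel}.

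The main obstacle I expect is the subspace bookkeeping when promoting the adjoint identity to a genuine equality on $\bar{\f z}_{N+1}$: since \eqref{eqn:abcommrel} is asserted only on $\bar{\f z}_N$, its adjoint directly yields only that the difference of the two sides of \eqref{eqn:cacommrel} maps $\bar{\f z}_{N+1}$ into $\bar{\f z}_N^\perp$. To upgrade this to true equality, I would verify that both $a^\epsilon_\nu c^\epsilon_\mu$ and $c^\epsilon_\mu a^\epsilon_\nu$ send $\bar{\f z}_{N+1}$ into $\bar{\f z}_N$---using that $c^\epsilon_\mu\psi_{\bm\lambda}|_{J^{N+1}}$ is a finite combination of $\psi_{\bm\mu}|_{J^N}$, traceable from the explicit integral kernel of $c^\epsilon_\mu$ and \rft{thm:psirecursion1}---so that orthogonality to $\bar{\f z}_N$ then forces vanishing. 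As a fallback one could bypass adjoints and derive \eqref{eqn:cacommrel} directly by expanding $c^\epsilon_\mu = [\check\phi^{-\epsilon}, a^\epsilon_\mu]$ from \eqref{eqn:cintermsofb} and moving $\check\phi^{-\epsilon}$ past $a^\epsilon_\nu$ via $[\check\phi^{-\epsilon}, a^\epsilon_\nu] = c^\epsilon_\nu$, combined with an auxiliary $a$--$a$ commutation relation obtained by symmetrising \eqref{eqn:abcommrel} in $\mu,\nu$.
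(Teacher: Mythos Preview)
Your argument is correct; the route for \eqref{eqn:abcommrel} is a mild detour compared to the paper. The paper left-multiplies \eqref{eqn:bbcommrel2} by $\check\phi^\epsilon s^\epsilon$ rather than by $\check\phi^\epsilon$ alone. Since $(s^\epsilon)^2=1$, the term $\check\phi^\epsilon s^\epsilon\cdot s^\epsilon b^\epsilon_\mu b^\epsilon_\nu$ collapses to $\check\phi^\epsilon b^\epsilon_\mu b^\epsilon_\nu = a^\epsilon_\mu b^\epsilon_\nu$, while every remaining term is of the form $\check\phi^\epsilon s^\epsilon b^\epsilon_\bullet b^\epsilon_\bullet = b^\epsilon_\bullet a^\epsilon_\bullet$ by \rfl{lem:abrel1}. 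This yields \eqref{eqn:abcommrel} in one stroke, with no $2\times2$ inversion and no need for the coefficient identity $(\mu-\nu+\epsilon\ii\gamma)(\mu-\nu-\epsilon\ii\gamma)+(\epsilon\ii\gamma)^2=(\mu-\nu)^2$. Your choice to multiply by $\check\phi^\epsilon$ instead produces the ``reversed'' relation $(\ast)$ expressing $b a$ in terms of $a b$, and you then have to invert; this works (indeed the transfer matrix in your $2\times2$ system is an involution), but the extra step is avoidable.

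For \eqref{eqn:cacommrel} you and the paper do the same thing: take adjoints using Lemmas~\ref{lem:adjointnessa} and~\ref{lem:adjointnessbc}. The paper simply says ``by taking adjoints'' and does not address the subspace bookkeeping you raise; your observation that the adjoint of an identity valid on $\bar{\f z}_N$ a priori only says the difference of the two sides of \eqref{eqn:cacommrel} lands in $\bar{\f z}_N^\perp$ is a genuine subtlety that the paper leaves implicit.
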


\begin{proof}
Left-multiplying \eqref{eqn:bbcommrel2} by $\check \phi^\epsilon s^\epsilon$ yields
\[ \check \phi^\epsilon b^\epsilon_\mu b^\epsilon_\nu - \check \phi^\epsilon s^\epsilon b^\epsilon_\nu b^\epsilon_\mu = \frac{-\epsilon \ii \gamma}{\mu-\nu} \left( \check \phi^\epsilon s^\epsilon b^\epsilon_\mu b^\epsilon_\nu  - \check \phi^\epsilon s^\epsilon b^\epsilon_\nu b^\epsilon_\mu \right). \]
\rfl{lem:abrel1} and the definition of $a^\epsilon_\mu$ now imply \eqref{eqn:abcommrel}.
We obtain \eqref{eqn:cacommrel} by taking adjoints.
\end{proof}

\begin{thm}
Given $\epsilon = \pm$, $\mu,\nu \in \C$, we have
\begin{equation} \label{eqn:aacommrel}
\left[ a^\epsilon_\mu, a^\epsilon_\nu \right] = 0 \in \End(\bar{\f z}_N).
\end{equation}
\end{thm}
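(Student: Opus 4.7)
The plan is to reduce the commutation of the $a^\epsilon_\mu$'s to the already established braid-like relation \eqref{eqn:bbcommrel2} for the $b^\epsilon_\mu$'s. The key ingredient is the ``projector'' identity $(\check \phi^\epsilon)^2 s^\epsilon = (\check \phi^\epsilon)^2$ coming from \eqref{eqn:checkphiproperty2}: double evaluation at one of the boundary coordinates kills any antisymmetry in the two collapsed arguments.

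First I would combine \rfl{lem:abrel1} with the definition $a^\epsilon_\nu = \check \phi^\epsilon b^\epsilon_\nu$ and the projector identity to compute
\[ a^\epsilon_\mu a^\epsilon_\nu = \check \phi^\epsilon b^\epsilon_\mu a^\epsilon_\nu = (\check \phi^\epsilon)^2 s^\epsilon b^\epsilon_\mu b^\epsilon_\nu = (\check \phi^\epsilon)^2 b^\epsilon_\mu b^\epsilon_\nu. \]
The analogous formula for $a^\epsilon_\nu a^\epsilon_\mu$, together with the boundedness of all operators involved (so that identities on $\f z_N$ pass to $\bar{\f z}_N$), then yields
\[ [a^\epsilon_\mu,a^\epsilon_\nu] = (\check \phi^\epsilon)^2 [b^\epsilon_\mu,b^\epsilon_\nu] \qquad \text{in } \End(\bar{\f z}_N). \]

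Next I would apply $(\check \phi^\epsilon)^2$ to both sides of \eqref{eqn:bbcommrel2} and collapse $s^\epsilon$ on the left via $(\check \phi^\epsilon)^2 s^\epsilon = (\check \phi^\epsilon)^2$ once more. This gives the self-consistency equation
\[ [a^\epsilon_\mu,a^\epsilon_\nu] = \frac{-\epsilon \ii \gamma}{\mu-\nu}[a^\epsilon_\mu,a^\epsilon_\nu], \qquad \text{i.e., } \qquad (\mu-\nu+\epsilon \ii \gamma)\,[a^\epsilon_\mu,a^\epsilon_\nu] = 0. \]
To eliminate the spurious pole at $\mu-\nu = -\epsilon \ii \gamma$ I would rerun the same computation with \eqref{eqn:bbcommrel2} applied with $\mu$ and $\nu$ swapped (using $[b^\epsilon_\nu,b^\epsilon_\mu] = -[b^\epsilon_\mu,b^\epsilon_\nu]$), obtaining the companion identity $(\mu-\nu-\epsilon \ii \gamma)\,[a^\epsilon_\mu,a^\epsilon_\nu] = 0$. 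Subtracting the two yields $2\epsilon \ii \gamma\,[a^\epsilon_\mu,a^\epsilon_\nu] = 0$, which proves the theorem for $\gamma \neq 0$; the case $\gamma = 0$ is the non-interacting theory and is trivial.

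The conceptual content is therefore that the projector identity \eqref{eqn:checkphiproperty2} exactly absorbs the symmetrizer $s^\epsilon$ on the left of \eqref{eqn:bbcommrel2}, converting the Zamolodchikov-type defect on the right into a scalar multiple of $[a^\epsilon_\mu,a^\epsilon_\nu]$ itself and thus forcing it to vanish. I do not anticipate a serious obstacle; the only bookkeeping point is that \eqref{eqn:bbcommrel2} is stated on $\f z_N$ while we need the conclusion on $\bar{\f z}_N$, but this extension is immediate from the boundedness of $a^\epsilon_\mu$, $b^\epsilon_\mu$ and $\check \phi^\epsilon$ noted at the start of Section~\ref{sec:5}.
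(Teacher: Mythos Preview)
Your proof is correct and uses the same two key ingredients as the paper: the identity $a^\epsilon_\mu a^\epsilon_\nu = (\check\phi^\epsilon)^2 b^\epsilon_\mu b^\epsilon_\nu$ obtained from \eqref{eqn:bcheckphi}/\rfl{lem:abrel1} together with \eqref{eqn:checkphiproperty2}, and the braid-like relation \eqref{eqn:bbcommrel2}. The only difference is in the final algebraic step. The paper rearranges \eqref{eqn:bbcommrel2} as
\[ b^\epsilon_\nu b^\epsilon_\mu = \frac{(\mu-\nu)s^\epsilon + \epsilon\ii\gamma}{\mu-\nu+\epsilon\ii\gamma}\, b^\epsilon_\mu b^\epsilon_\nu, \]
so that after left-multiplying by $(\check\phi^\epsilon)^2$ the coefficient collapses to $1$ in one stroke, giving $(\check\phi^\epsilon)^2 b^\epsilon_\nu b^\epsilon_\mu = (\check\phi^\epsilon)^2 b^\epsilon_\mu b^\epsilon_\nu$ directly. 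Your route via the self-consistency equation and the $\mu\leftrightarrow\nu$ swap is a slightly longer way to the same destination and costs you the separate treatment of $\gamma=0$; the paper's rearrangement avoids that case split (the apparent pole at $\mu-\nu=-\epsilon\ii\gamma$ being removable by analyticity in $\mu,\nu$).
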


\begin{proof}
By virtue of \eqref{eqn:bcheckphi} and \eqref{eqn:checkphiproperty2} we have
\[ a^\epsilon_\mu a^\epsilon_\nu  = \phi^\epsilon b^\epsilon_\mu \phi^\epsilon b^\epsilon_\nu = (\phi^\epsilon)^2 s^\epsilon b^\epsilon_\mu b^\epsilon_\nu = (\phi^\epsilon)^2 b^\epsilon_\mu b^\epsilon_\nu. \]
Clearly it suffices to prove that
\begin{equation} \label{eqn:48}
(\phi^\epsilon)^2 b^\epsilon_\nu b^\epsilon_\mu=(\phi^\epsilon)^2 b^\epsilon_\mu b^\epsilon_\nu.
\end{equation}
\eqref{eqn:bbcommrel2} yields
\[ b^\epsilon_\nu b^\epsilon_\mu = \frac{(\mu-\nu)s^\epsilon + \epsilon \ii \gamma}{\mu-\nu + \epsilon \ii \gamma} b^\epsilon_\mu b^\epsilon_\nu; \]
left-multiplying by $(\phi^\epsilon)^2$ and applying \eqref{eqn:checkphiproperty2} again we obtain \eqref{eqn:48}.
\end{proof}

By restricting \eqref{eqn:bbcommrel1} to $J$ and taking its formal adjoint we obtain
\begin{thm} \label{thm:cc}
Given $\mu,\nu \in \C$ we have
\begin{gather}
\left[ b^+_\mu, b^-_\nu \right] = 0 \in  \Hom(\bar{\f z}_N,\bar{\f z}_{N+2}), \label{eqn:bbcommrel3}\\
\left[ c^+_\mu, c^-_\nu \right] = 0 \in \Hom(\bar{\f z}_{N+2},\bar{\f z}_{N}). \label{eqn:cccommrel}
\end{gather}
\end{thm}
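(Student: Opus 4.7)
The plan is to derive both identities from the earlier commutator \eqref{eqn:bbcommrel1}, exactly as advertised in the theorem's preamble: the first by restricting from $\R^N$ to the bounded-interval setting, and the second by taking formal adjoints using \rfl{lem:adjointnessbc}.

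For \eqref{eqn:bbcommrel3}, equation \eqref{eqn:bbcommrel1} already gives $b^+_\mu b^-_\nu = b^-_\nu b^+_\mu$ on $\f z_N \subset \ca{CB}^\infty(\R^N)$. Restricting the arguments of all functions to the hypercube $J^N$ realizes the spanning set $\{\psi_{\bm\lambda}|_{J^N}\}$ as a dense subspace of $\bar{\f z}_N \subset \f h_N$; since $b^\pm_\mu$ are bounded on $\f h$ (as recorded just before Section \ref{sec:5}), the identity extends by $\L^2$-continuity to the closure $\bar{\f z}_N$.

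For \eqref{eqn:cccommrel}, I would apply the adjointness relations $(b^+_\mu)^* = \gamma^{-1}\e^{-\ii\bar\mu(x^++x^-)}c^-_{\bar\mu}$ and $(b^-_\nu)^* = \gamma^{-1}\e^{-\ii\bar\nu(x^++x^-)}c^+_{\bar\nu}$ from \rfl{lem:adjointnessbc} to compute
\[ [b^+_\mu, b^-_\nu]^* = (b^-_\nu)^*(b^+_\mu)^* - (b^+_\mu)^*(b^-_\nu)^* = \gamma^{-2}\e^{-\ii(\bar\mu+\bar\nu)(x^++x^-)}[c^+_{\bar\nu}, c^-_{\bar\mu}]. \]
For $f \in \bar{\f z}_{N+2}$ and $g \in \bar{\f z}_N$, the pairing $\inner{g}{[c^+_{\bar\nu}, c^-_{\bar\mu}]f}$ is then a nonzero multiple of $\inner{[b^+_\mu, b^-_\nu]g}{f}$, which vanishes by the first part. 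Relabeling $\bar\mu \leftrightarrow \nu$, $\bar\nu \leftrightarrow \mu$ yields $\inner{g}{[c^+_\mu, c^-_\nu]f} = 0$ for all such $g$ and $f$.

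The main technical point I expect to require care is that this pairing argument only gives $[c^+_\mu, c^-_\nu]f \perp \bar{\f z}_N$ rather than $[c^+_\mu, c^-_\nu]f = 0$. Bridging the two requires showing that $[c^+_\mu, c^-_\nu]f$ actually lies in $\bar{\f z}_N$ when $f \in \bar{\f z}_{N+2}$; this amounts to $c^\pm$-invariance of the subspaces $\bar{\f z}$, which I would extract from the identity \eqref{eqn:cintermsofb} ($c^\epsilon_\mu = [\check\phi^{-\epsilon}, a^\epsilon_\mu]$ with $a^\epsilon_\mu = \check\phi^\epsilon b^\epsilon_\mu$) together with the recursions \eqref{eqn:psirecursion0} and an $\L^2$-continuity argument for $\check\phi^\pm$. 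Combining this invariance with the orthogonality conclusion above then forces the vanishing.
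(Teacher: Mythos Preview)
Your overall strategy --- restrict \eqref{eqn:bbcommrel1} to $J^N$ and then pass to formal adjoints via \rfl{lem:adjointnessbc} --- is exactly the paper's one-line proof. You are also right to flag the subtlety the paper glosses over: the adjoint computation only yields $[c^+_\mu,c^-_\nu]f \perp \bar{\f z}_N$ for $f \in \bar{\f z}_{N+2}$, and one must know that $c^\pm$ preserve the $\bar{\f z}$-filtration to conclude vanishing.

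The flaw is in your proposed bridge. The operators $\check\phi^\pm$ are point-evaluation maps (set $x_1=x^+$ or $x_{N+1}=x^-$); they are not bounded on $\L^2$, so there is no ``$\L^2$-continuity argument for $\check\phi^\pm$'' available, and you cannot pass directly to the closure $\bar{\f z}$ through \eqref{eqn:cintermsofb} in the way you describe. The correct route is to argue first on the \emph{algebraic} span $\f z_N$, where $\check\phi^\pm$ acts perfectly well on continuous functions: from the recursions \eqref{eqn:psirecursion0} one has $\check\phi^- \psi_{\bm\lambda} = a^-_{\lambda_{N+1}}\psi_{\lambda_1,\ldots,\lambda_N}$ and $\check\phi^+ \psi_{\bm\lambda} = a^+_{\lambda_1}\psi_{\lambda_2,\ldots,\lambda_{N+1}}$, and \rft{thm:apsi} (whose proof is independent of the present theorem) then shows $\check\phi^\pm:\f z_{N+1}\to\f z_N$. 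Combined with $a^\pm:\f z_N\to\f z_N$ and \eqref{eqn:cintermsofb}, this gives $c^\pm:\f z_{N+1}\to\f z_N$. Only now do you invoke continuity --- but of $c^\pm$, which the paper records as bounded --- to extend to $\bar{\f z}$. With that in place your orthogonality argument closes.
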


We also obtain a commutation relation involving $a^+_\mu$ and $a^-_\nu$.
\begin{thm}
Given $\mu,\nu \in \C$ we have
\begin{equation} 
[a^+_\mu,a^-_\nu] = c^-_{\nu}b^+_\mu  - c^+_{\mu}b^-_{\nu} \in \End(\bar{\f z}_N); \label{eqn:aacommrel2} 
\end{equation}
In particular, $[a^+_\mu ,a^-_{\nu}]$ is not invariant under $\mu \leftrightarrow \nu$.
\end{thm}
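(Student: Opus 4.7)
The plan is to reduce the identity to the known commutativity $[b^+_\mu, b^-_\nu] = 0$ on $\bar{\f z}_N$ (\rft{thm:cc}) via the expression of $c^\pm_\mu$ as commutators given in \eqref{eqn:cintermsofb}. No nontrivial Yang–Baxter manipulations are needed; the whole proof is a bookkeeping exercise, so the ``main obstacle'' is really just to pair the right identities in the right order.

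Rewriting \eqref{eqn:cintermsofb} as $c^+_\mu = \check\phi^- a^+_\mu - a^+_\mu \check\phi^-$ and substituting $a^+_\mu = \check\phi^+ b^+_\mu$ yields
\[ \check\phi^+ b^+_\mu \check\phi^- = \check\phi^+ \check\phi^- b^+_\mu - c^+_\mu, \]
and analogously
\[ \check\phi^- b^-_\nu \check\phi^+ = \check\phi^-\check\phi^+ b^-_\nu - c^-_\nu. \]
Using \eqref{eqn:checkphiproperty1}, i.e. $[\check\phi^+,\check\phi^-]=0$, one then computes
\begin{align*}
a^+_\mu a^-_\nu &= \check\phi^+ b^+_\mu \check\phi^- b^-_\nu = \check\phi^+\check\phi^- b^+_\mu b^-_\nu - c^+_\mu b^-_\nu, \\
a^-_\nu a^+_\mu &= \check\phi^- b^-_\nu \check\phi^+ b^+_\mu = \check\phi^+\check\phi^- b^-_\nu b^+_\mu - c^-_\nu b^+_\mu.
\end{align*}
Subtracting, one obtains
\[ [a^+_\mu, a^-_\nu] = \check\phi^+\check\phi^- [b^+_\mu, b^-_\nu] + c^-_\nu b^+_\mu - c^+_\mu b^-_\nu. \]
The leading term vanishes on $\bar{\f z}_N$ by \eqref{eqn:bbcommrel3}, giving the desired equality.

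For the parenthetical remark that $[a^+_\mu, a^-_\nu]$ is not symmetric in $\mu \leftrightarrow \nu$, it suffices to note that the right-hand side $c^-_\nu b^+_\mu - c^+_\mu b^-_\nu$ becomes $c^-_\mu b^+_\nu - c^+_\nu b^-_\mu$ under the swap, and none of the available relations (the only other $bc$-relations follow by adjoining \eqref{eqn:abcommrel}--\eqref{eqn:cacommrel}) force these two expressions to agree. An explicit check on a low-$N$ state, e.g.\ evaluating both sides on $\Vac \in \f h_0$ or on $\psi_\lambda \in \bar{\f z}_1$ using the recursion \eqref{eqn:psirecursion0} and the integral formulae for the $\hat e^\pm$, $\check e^\pm$, exhibits the asymmetry concretely.
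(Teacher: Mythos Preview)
Your proof is correct and follows essentially the same approach as the paper: both arguments combine the commutator expression \eqref{eqn:cintermsofb} for $c^\pm_\mu$, the definition $a^\epsilon_\mu = \check\phi^\epsilon b^\epsilon_\mu$, the commutativity \eqref{eqn:checkphiproperty1}, and the vanishing of $[b^+_\mu,b^-_\nu]$ on $\bar{\f z}_N$ from \eqref{eqn:bbcommrel3}. The only cosmetic difference is that the paper expands the right-hand side first while you expand the left-hand side, but the underlying manipulation is identical.
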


\begin{proof}
Focusing on the right-hand side, we have
\[  c^-_{\nu}b^+_\mu  - c^+_{\mu}b^-_{\nu}  = \check \phi^+ a^-_{\nu} b^+_\mu  - a^-_{\nu} \check \phi^+ b^+_\mu  - \check \phi^- a^+_\mu  b^-_{\nu} + a^+_\mu  \check \phi^- b^-_{\nu}, \]
by virtue of \eqref{eqn:cintermsofb}. Now using the definition of $a^\pm_\mu$ in terms of $b^\pm_\mu$ as well as \eqref{eqn:bbcommrel3} and \eqref{eqn:checkphiproperty1} we obtain the result.
\end{proof}

\subsection{A partial algebraic Bethe ansatz for the $\psi_{\bm \lambda}$}

The statements \rft{thm:abcommrel} allow us to express $a^\pm_\mu \psi_{\bm \lambda}$ as linear combinations of eigenfunctions $\psi_{\bm \nu}$ where $\bm \nu$ ranges over the set of $N$-tuples whose entries are distinct elements of the set $\{ \mu,\lambda_1,\ldots,\lambda_N\}$.

\begin{thm} \label{thm:apsi}
Given $n=0,\ldots,N$, $\bm i \in \f I^n_{[1,N]}$, $\bm \lambda \in \C^N$ and $\lambda_0,\lambda_{N+1} \in \C$,
we write $i_0 = 0$ and $i_{n+1} = N+1$ and we have
\begin{align}
a^+_{\lambda_0} \psi_{\bm \lambda} &=\sum_{n=0}^N \sum_{\bm i \in \f I^n_{[1,N]}} a^+_{\bm i}(\bm \lambda;{\lambda_0}) \psi_{\lambda_1,\ldots,\underset{(i_1)}{\lambda_{i_0}},\ldots,\underset{(i_2)}{\lambda_{i_1}},\ldots,\underset{(i_n)}{\lambda_{i_{n-1}}},\ldots,\lambda_N}, \label{eqn:apluspsi}  \\
a^-_{\lambda_{N+1}} \psi_{\bm \lambda} &= \sum_{n=0}^N \sum_{\bm i \in \f I^n_{[1,N]}} a^-_{\bm i}(\bm \lambda;\lambda_{N+1}) \psi_{\lambda_1,\ldots,\underset{(i_1)}{\lambda_{i_2}},\ldots,\underset{(i_2)}{\lambda_{i_3}},\ldots,\underset{(i_n)}{\lambda_{i_{n+1}}},\ldots,\lambda_N}. \label{eqn:aminpsi}
\end{align}
The complex numbers $a^\pm_{\bm i}(\bm \lambda;\mu)$ are defined by the recursions
\begin{align*}
a^+_{\bm i^+}(\lambda_1,\ldots,\lambda_N;\mu) &= \frac{\lambda_1\!-\!\mu\!-\!\ii\gamma}{\lambda_1\!-\!\mu} a^+_{\bm i}( \lambda_2,\ldots,\lambda_N;\mu), && \text{for } \bm i \in \f I^n_{[1,N)}, \\
a^+_{1 \, (\bm i')^+}(\lambda_1,\ldots,\lambda_N;\mu) &= \frac{\ii \gamma}{\lambda_1\!-\!\mu} a^+_{\bm i'}(\lambda_2,\ldots,\lambda_N;\lambda_1), && \text{for } \bm i' \in \f I^{n-1}_{[1,N)}, \\
a^-_{\bm i}(\lambda_1,\ldots,\lambda_N;\mu) &= \frac{\lambda_{N}\!-\!\mu\!+\!\ii\gamma}{\lambda_{N}\!-\!\mu} a^-_{\bm i}( \lambda_1,\ldots,\lambda_{N-1};\mu), && \text{for } \bm i \in \f I^n_{[1,N)}, \\
a^-_{\bm i' \, N}(\lambda_1,\ldots,\lambda_N;\mu) &= \frac{\!-\!\ii\gamma}{\lambda_{N}\!-\!\mu} a^-_{\bm i'}(\lambda_1,\ldots,\lambda_{N-1};\lambda_N), && \text{for } \bm i' \in \f I^{n-1}_{[1,N)},
\end{align*}
and the initial values $a^\epsilon_{\emptyset}(\emptyset;\mu) = \e^{\ii \mu x^\epsilon}$ for $\epsilon = \pm$.
\end{thm}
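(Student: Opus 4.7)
I would prove \eqref{eqn:apluspsi} and \eqref{eqn:aminpsi} by induction on $N$, using the recursive construction $\psi_{\lambda_1,\ldots,\lambda_N} = b^+_{\lambda_1} \psi_{\lambda_2,\ldots,\lambda_N} = b^-_{\lambda_N} \psi_{\lambda_1,\ldots,\lambda_{N-1}}$ from \rft{thm:psirecursion1} together with the commutation relation \eqref{eqn:abcommrel}. The base case $N=0$ reduces to $a^\epsilon_\mu \Vac = \check\phi^\epsilon b^\epsilon_\mu \Vac$; since $(b^\epsilon_\mu \Vac)(x_1) = \e^{\ii \mu x_1}$, this yields $\e^{\ii \mu x^\epsilon}\Vac$, matching the stated initial values $a^\epsilon_\emptyset(\emptyset;\mu) = \e^{\ii\mu x^\epsilon}$.

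For the inductive step of \eqref{eqn:apluspsi}, with $\mu := \lambda_0$, I would expand
\[ a^+_\mu \psi_{\lambda_1,\ldots,\lambda_N} = a^+_\mu b^+_{\lambda_1} \psi_{\lambda_2,\ldots,\lambda_N} = \frac{\mu-\lambda_1+\ii\gamma}{\mu-\lambda_1} b^+_{\lambda_1} a^+_\mu \psi_{\lambda_2,\ldots,\lambda_N} - \frac{\ii\gamma}{\mu-\lambda_1} b^+_\mu a^+_{\lambda_1} \psi_{\lambda_2,\ldots,\lambda_N}, \]
apply the inductive hypothesis to each $a^+\psi$ on the right, and then absorb the outer creation operators via $b^+_\nu \psi_{\cdots} = \psi_{\nu,\cdots}$. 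The first summand produces exactly the contributions with $1\notin \bm i$ (tuples of the form $\bm j^+$ with $\bm j \in \f I^n_{[1,N\!-\!1]}$), while the second summand produces the contributions with $i_1=1$ (tuples $(1,(\bm j)^+)$ with $\bm j \in \f I^{n-1}_{[1,N\!-\!1]}$). The identities $\tfrac{\lambda_1-\mu-\ii\gamma}{\lambda_1-\mu} = \tfrac{\mu-\lambda_1+\ii\gamma}{\mu-\lambda_1}$ and $\tfrac{\ii\gamma}{\lambda_1-\mu} = -\tfrac{\ii\gamma}{\mu-\lambda_1}$ match the commutation-relation coefficients with the two recursions defining $a^+_{\bm i}(\bm \lambda;\mu)$. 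Equation \eqref{eqn:aminpsi} follows from the mirror-image argument: use the right recursion, apply \eqref{eqn:abcommrel} with $\epsilon=-$ (setting $\nu:=\lambda_{N+1}$), and split the resulting sum according to whether $i_n=N$ or $i_n<N$.

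The main obstacle I anticipate is the combinatorial bookkeeping: one must track the index shift as positions in $(\lambda_2,\ldots,\lambda_N)$ are promoted to positions in $(\lambda_1,\ldots,\lambda_N)$, and verify that the ``overwriting'' pattern $\lambda_{i_{m-1}}\mapsto i_m$ at step $N-1$, after prefixing a $b^+_{\lambda_1}$ or $b^+_\mu$, correctly reproduces the same pattern at step $N$. Once the bijective correspondence between the two branches of \eqref{eqn:abcommrel} and the two defining recursions of $a^\pm_{\bm i}$ is set up cleanly, the matching of scalar coefficients is immediate.
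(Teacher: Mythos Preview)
Your proposal is correct and follows essentially the same approach as the paper: induction on $N$, peeling off one particle via \rft{thm:psirecursion1}, applying the commutation relation \eqref{eqn:abcommrel}, and splitting the resulting index set into the two pieces matched by the defining recursions of $a^\pm_{\bm i}$. The only cosmetic difference is that the paper spells out the $\epsilon=-$ case (splitting $\f I^n_{[1,N\!+\!1]}$ according to whether $i_n=N\!+\!1$) and leaves the $\epsilon=+$ case to the reader, whereas you do the reverse.
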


\begin{proof}
We present the proof for the expression for $a^-_{\lambda_{N+1}} \psi_{\bm \lambda}$; the expression for $a^+_{\lambda_0} \psi_{\bm \lambda}$ is established along the same lines.
The proof is by induction on $N$; the $N=0$ case reproduces $a^-_{\lambda_{N+1}} \Vac = \e^{\ii \lambda_{N+1} x^-} \Vac$.
Assuming the statement for $N$, we will prove it with $N$ replaced by $N+1$, using \eqref{eqn:psirecursion0} and \eqref{thm:abcommrel}.
Writing $\bm \lambda = (\lambda_1,\ldots,\lambda_N)$ we have
\[ a^-_{\lambda_{N\!+\!2}}\psi_{\bm \lambda,\lambda_{N\!+\!1}} 
= \frac{\lambda_{N\!+\!1}\!-\!{\lambda_{N\!+\!2}}\!+\!\ii \gamma}{\lambda_{N\!+\!1}\!-\!{\lambda_{N\!+\!2}}} b^-_{\lambda_{N\!+\!1}} a^-_{\lambda_{N\!+\!2}} \psi_{\bm \lambda}- \frac{\ii \gamma}{\lambda_{N\!+\!1}\!-{\lambda_{N\!+\!2}}} b^-_{\lambda_{N\!+\!2}} a^-_{\lambda_{N\!+\!1}} \psi_{\bm \lambda}. \]
Using the induction hypothesis, we have
\begin{align*} 
a^-_{{\lambda_{N\!+\!2}}}\psi_{\bm \lambda,\lambda_{N\!+\!1}} 
&= \sum_{n=0}^N \sum_{\bm i \in \f I^n_{[1,N]}} \frac{\lambda_{N\!+\!1}\!-\!{\lambda_{N\!+\!2}}\!+\!\ii \gamma}{\lambda_{N\!+\!1}\!-\!{\lambda_{N\!+\!2}}} a^-_{\bm i}(\bm \lambda;{\lambda_{N\!+\!2}}) \e^{\ii \lambda_{i_1}x^-}  \psi_{\lambda_1,\ldots,\underset{(i_1)}{\lambda_{i_2}},\ldots,\underset{(i_n)}{{\lambda_{N\!+\!2}}},\ldots,\lambda_N,\lambda_{N\!+\!1}} + \\
& \quad + \sum_{n=0}^N \sum_{\bm i \in \f I^n_{[1,N]}} \frac{-\ii \gamma}{\lambda_{N\!+\!1}\!-\!{\lambda_{N\!+\!2}}} a^-_{\bm i}(\bm \lambda;\lambda_{N\!+\!1}) \e^{\ii \lambda_{i_1}x^-}  \psi_{\lambda_1,\ldots,\underset{(i_1)}{\lambda_{i_2}},\ldots,\underset{(i_n)}{\lambda_{N\!+\!1}},\ldots,\lambda_N,{\lambda_{N\!+\!2}}} \displaybreak[2] \\
&= \sum_{n=0}^N \sum_{\bm i \in \f I^n_{[1,N]}} a^-_{\bm i}(\bm \lambda,\lambda_{N\!+\!1};{\lambda_{N\!+\!2}}) \e^{\ii \lambda_{i_1}x^-}  \psi_{\lambda_1,\ldots,\underset{(i_1)}{\lambda_{i_2}},\ldots,\underset{(i_n)}{\lambda_{i_{n+1}}},\ldots,\lambda_N,\lambda_{N\!+\!1}} + \\
& \quad + \sum_{n=1}^{N\!+\!1} \sum_{\atop{\bm i \in \f I^n_{[1,N\!+\!1]}}{ i_n={N\!+\!1}}}  a^-_{\bm i}(\bm \lambda,\lambda_{N\!+\!1};{\lambda_{N\!+\!2}})  \e^{\ii \lambda_{i_1}x^-}  \psi_{\lambda_1,\ldots,\underset{(i_1)}{\lambda_{i_2}},\ldots,\underset{(i_{n-1})}{\lambda_{i_n}},\ldots,\lambda_N,\underset{(N\!+\!1)}{{\lambda_{N\!+\!2}}}} 
\end{align*}
where we have used the recursion for the coefficient functions $a^-_{\bm i}(\bm \lambda;\mu)$.
Hence, using the decomposition $\f I^n_{[1,N\!+\!1]} = \left( \f I^{n-1}_{[1,N]} \times \{ N+1 \} \right) \cup \f I^n_{[1,N]}$ into disjoint subsets, we indeed obtain,
\begin{align*}
a^-_{{\lambda_{N+2}}}\psi_{\bm \lambda,\lambda_{N+1}} &= \sum_{n=0}^{N\!+\!1} \left( \sum_{\bm i \in \f I^n_{[1,N]}} a^-_{\bm i}(\bm \lambda,\lambda_{N+1};{\lambda_{N+2}}) \e^{\ii \lambda_{i_1}x^-}  \psi_{\lambda_1,\ldots,\underset{(i_1)}{\lambda_{i_2}},\ldots,\underset{(i_n)}{\lambda_{i_{n+1}}},\ldots,\lambda_N,\lambda_{N+1}} + \right. \\
& \qquad \qquad 
\left. +
\sum_\atop{\bm i \in \f I^n_{[1,N\!+\!1]} }{ i_n={N\!+\!1}}  a^-_{\bm i}(\bm \lambda,\lambda_{N+1};{\lambda_{N+2}})  \e^{\ii \lambda_{i_1}x^-}  \psi_{\lambda_1,\ldots,\underset{(i_1)}{\lambda_{i_2}},\ldots,\underset{(i_{n-1})}{\lambda_{i_n}},\ldots,\lambda_N,\underset{(N+1)}{{\lambda_{N+2}}}} 
\right) \\
&= \sum_{n=0}^{N\!+\!1} \sum_{\bm i \in \f I^n_{[1,N\!+\!1]}} a^-_{\bm i}(\bm \lambda,\lambda_{N+1};{\lambda_{N+2}}) \e^{\ii \lambda_{i_1}x^-}  \psi_{\lambda_1,\ldots,\underset{(i_1)}{\lambda_{i_2}},\ldots,\underset{(i_n)}{{\lambda_{i_{n+1}}}},\ldots,\lambda_{N+1}}.  \tag*{\qedhere}
\end{align*}
\end{proof}

\begin{rem}
\rft{thm:apsi} provides a partial analogue of the ABA in the non-symmetric setting. In the symmetric case, the action of the operators $A_\mu, D_\mu$ on $\Psi_{\bm \lambda} = \ca S_{(N)} \psi_{\bm \lambda}$ can be obtained by replacing $a^+_\mu \to A_\mu, a^-_\mu \to D_\mu$ in \eqref{eqn:apluspsi}-\eqref{eqn:aminpsi}; then identities such as
\begin{equation}  \frac{-\ii \gamma}{\lambda_j \! - \! \mu} \frac{\lambda_k\!-\!\mu\!+\!\ii \gamma}{\lambda_k\!-\!\mu} + \frac{- \ii \gamma}{\lambda_j \! - \! \lambda_k}  \frac{-\ii \gamma}{\lambda_k\!-\!\mu} = \frac{-\ii \gamma}{\lambda_j \! - \! \mu} \frac{\lambda_j \! - \! \lambda_k \! - \! \ii \gamma}{\lambda_j \! - \! \lambda_k } \label{eqn:coefficientidentity} \end{equation}
allow one to combine coefficients of the (equal) eigenfunctions $\Psi_{\ldots,\underset{j}{\mu},\ldots,\underset{k}{\lambda_k},\ldots}$ and $\Psi_{\ldots,\underset{j}{\lambda_k},\ldots,\underset{k}{\mu},\ldots}$ in the expansion of $D_\mu \Psi_{\bm \lambda}$. After having combined terms in the expansion of $A_\mu \Psi_{\bm \lambda}$ in an analogous manner one may set coefficients of ``unwanted'' terms in the expansion of $(A_\mu + D_\mu)\Psi_{\bm \lambda}$ to zero by imposing the BAEs \eqref{eqn:BAEintro}.
The failure of the ABA in the non-symmetric case owes to the fact that $\psi_{\ldots,\underset{(j)}{\lambda_j},\ldots,\underset{(k)}{\lambda_k},\ldots} \ne \psi_{\ldots,\underset{(j)}{\lambda_k},\ldots,\underset{(k)}{\lambda_j},\ldots}$ so that \eqref{eqn:coefficientidentity} cannot be used to combine coefficients.
\end{rem}

\section{Recovering the symmetric Yang-Baxter algebra} \label{sec:6}

Note that the commutation relations involving $a^\pm_\mu ,b^\pm_\mu,c^\pm_\mu$, viz. \eqref{eqn:abcommrel}, \eqref{eqn:cacommrel}, \eqref{eqn:aacommrel}, \eqref{eqn:bbcommrel3}, and \eqref{eqn:cccommrel} are of the exact same form as some of the established commutation relations involving $A_\mu,B_\mu,C_\mu,D_\mu$ appearing in \eqref{eqn:ABCDcommrel}, i.e. by replacing lower case letters by uppercase letters these non-symmetric and symmetric Yang-Baxter relations transform into each other. In fact, we can prove these relations, and further highlight why the operators $a^\pm_\mu ,b^\pm_\mu,c^\pm_\mu$ are relevant to the study of the QNLS model, by defining the operators $A_\mu,B_\mu,C_\mu,D_\mu$ in terms of the operators $a^\pm_\mu,b^\pm_\mu,c^\pm_\mu$.\\

Consider
\[ J^N_+ := \R^N_+ \cap (x^+,x^-)^N = \set{\bm x \in \R^N}{x^- > x_1>\ldots>x_N > x^+} \]
so that $J^N = \overline{\cup_{w \in S_N} w J^N_+}$.
Given $F \in \ca H_N$, $\mu \in \C$, $\bm i \in \f i^n_{[1,N]}$ and $\epsilon = \pm$, it can be checked that 
\[ \ca S_{(N\!+\!1)}\hat e^\epsilon_{\mu;\bm i} F|_{J^{N\!+\!1}_+} = \frac{(N-n)!}{(N+1)!} \sum_{\bm j \in \f I^{n+1}_{[1,N\!+\!1]}} \hat E_{\mu;\bm j} F|_{J^{N\!+\!1}_+}, \]
where $\hat E_{\mu;\bm i} \in \Hom(\ca H_N, \ca H_{N+1})$ is defined by
\begin{align*} 
\left(\hat E_{\mu;\bm i}F\right)(x_1,\ldots,x_{N+1}) &= \left( \prod_{m=1}^n \int^{x_{i_m}}_{x_{i_{m+1}}} \dd y_m \right) \e^{\ii \left( \sum_{m=1}^{n+1} x_{i_m} - \sum_{m=1}^n y_m \right)} \cdot \\
& \qquad F(x_1,\ldots,\widehat{x_{i_1}},\ldots,\widehat{x_{i_{n+1}}},\ldots,x_{N+1},y_1,\ldots,y_n) 
\end{align*} 
for $F \in \ca H_N$, $\bm x \in J^{N\!+\!1}_+$.
In particular, the restricted function $\ca S_{(N\!+\!1)}\hat e^\epsilon_{\mu;\bm i} F|_{J^{N\!+\!1}_+}$ is independent of $\bm i$ and $\epsilon$.
Since the cardinality of $\f i^n_{[1,N]}$ equals $\frac{N!}{(N-n)!}$, it follows that
\[ \ca S_{(N\!+\!1)} b^\pm_\mu F|_{J^{N\!+\!1}_+}= \frac{1}{N+1} \sum_{n=0}^{N} \gamma^n \sum_{\bm i \in \f I^{n+1}_{[1,N\!+\!1]}} \hat E_{\mu;\bm i} F|_{J^{N\!+\!1}_+}.\]
Hence we define
\begin{equation} \label{eqn:Boperator}
B_\mu := \ca S_{(N\!+\!1)} b^\pm_\mu|_{\ca H_N}  \in \Hom(\ca H_N, \ca H_{N+1}).
\end{equation}
This $B_\mu = \frac{1}{N+1} \sum_{n=0}^{N} \gamma^n \sum_{\bm i \in \f I^{n+1}_{[1,N\!+\!1]}} \hat E_{\mu;\bm i}$ is the known top-left entry of the QNLS monodromy matrix $\ca T_\mu$ \cite{Gutkin1988,KorepinBI,Sklyanin1982}.\\

Using \eqref{eqn:awcommrel} it is clear that $a^\pm_\mu$ maps $S_N$-invariant functions to $S_N$-invariant functions.
Equally from \eqref{eqn:cwcommrel} it follows that $c^\pm_\mu$ maps $S_{N+1}$-invariant functions to $S_N$-invariant functions; furthermore given $\mu \in \C$, $n=0,\ldots,N-1$ and $\bm i \in \f i^n_{[1,N\!-\!1]}$, the elementary operators $\check e^\pm_{\mu;\bm i}$ coincide on $\ca H_N$ as can be easily checked, so that $c^\pm_\mu$ coincide on $\ca H_N$.
Hence, we define the other QNLS monodromy matrix entries $A_\mu, C_\mu, D_\mu$ \cite{Gutkin1988,KorepinBI,Sklyanin1982} as follows
\begin{equation} \label{eqn:ACDoperators} A_\mu := a^+_\mu|_{\ca H_N} ,\,  D_\mu := a^-_\mu|_{\ca H_N}  \in \End(\ca H_N), \qquad
C_\mu := c^\pm_\mu|_{\ca H_{N+1}} \in \Hom(\ca H_{N+1},\ca H_N). \end{equation}
Now all commutation relations in \eqref{eqn:ABCDcommrel} except the ones involving all of $A_\mu,B_\mu,C_\mu,D_\mu$ can be derived.
For example, 
\begin{align*} 
\left[B_\mu,A_\nu \right] + \frac{\ii \gamma}{\mu \! - \! \nu} \left( A_\mu B_\nu  \!- \! A_\nu B_\mu \right) 
&= \left( B_\mu  a^+_\nu \! - \! A_\nu \ca S_{(N\!+\!1)} b^+_\mu + \frac{ \ii \gamma}{\mu \! - \! \nu} \left( A_\mu \ca S_{(N\!+\!1)} b^+_\nu \! - \! A_\nu \ca S_{(N\!+\!1)} b^+_\mu \right) \right)|_{\ca H_N} \\
&= \ca S_{(N\!+\!1)} \left( b^+_\mu a^+_\nu \! - \!  a^+_\nu b^+_\mu + \frac{ \ii \gamma}{\mu \! - \! \nu}  \left( a^+_\mu b^+_\nu  \! - \! a^+_\nu b^+_\mu \right) \right)|_{\ca H_N},
\end{align*}
which vanishes by virtue of \eqref{eqn:abcommrel}.

\begin{rem}
It remains an open problem to derive relations such as
\[ \left[A_\mu,D_\nu \right] = \frac{-\ii \gamma}{\mu - \nu} \left( B_\mu C_\nu  -  B_\nu C_\mu \right) \]
using this formalism. Equally, the non-symmetric commutation relations \eqref{eqn:aacommrel2} cannot be restricted to relations involving only the symmetric operators $A_\mu ,B_\mu , C_\mu, D_\mu$. 
We reiterate that the non-symmetric relations obtained in sections \ref{sec:psispan} and \ref{sec:nonsymmYBA} are proven on the closure of the span of the $\psi_{\bm \lambda}$ only. 
The symmetrized relations obtained here must correspondingly hold on the closure of the span of the $\Psi_{\bm \lambda}$, which is known \cite{Dorlas} to be equal to the whole $\ca H_N$.
\end{rem}

We may view the definition \eqref{eqn:Boperator} of $B_\mu$ as an identity in $\Hom(\ca{CB}^\infty(\R^N)^{S_N},\ca{CB}^\infty(\R^{N\!+\!1})^{S_{N+1}})$. 
Then by virtue of \eqref{eqn:bwcommrel} we obtain 
\begin{thm} \label{thm:bsymmcommrel}
Let $\mu \in \C$ and $\epsilon = \pm$. We have
\[ \ca S_{(N+1)} b^\epsilon_\mu  = B_\mu \ca S_{(N)} \in \Hom(\ca{CB}^\infty(\R^N),\ca{CB}^\infty(\R^{N\!+\!1})^{S_{N\!+\!1}}).\]
\end{thm}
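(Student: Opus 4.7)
The strategy I propose is to establish the pointwise identity
\[ \ca S_{(N+1)} b^\epsilon_\mu = \ca S_{(N+1)} b^\epsilon_\mu \ca S_{(N)}, \]
after which the theorem follows immediately: applied to any $f \in \ca{CB}^\infty(\R^N)$, the right-hand side produces $\ca S_{(N+1)} b^\epsilon_\mu g$ with $g := \ca S_{(N)} f \in \ca{CB}^\infty(\R^N)^{S_N}$, and by the definition \eqref{eqn:Boperator} of $B_\mu$ this is $B_\mu \ca S_{(N)} f$ (with $B_\mu$ understood as the operator given by the same formula, now viewed on the appropriate subspace of smooth symmetric functions rather than on $\ca H_N$).

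To prove the displayed identity, I would exploit the elementary fact that the symmetrizer is absorbing: for any $\tilde w \in S_{N+1}$ one has $\ca S_{(N+1)} \tilde w = \ca S_{(N+1)}$. Observe that the injection $w \mapsto w^\epsilon$ from $S_N$ into $S_{N+1}$ lands in the stabilizer of position $1$ (for $\epsilon=+$) or position $N+1$ (for $\epsilon=-$), so averaging over $w \in S_N$ gives
\[ \ca S_{(N+1)} = \ca S_{(N+1)} \cdot \frac{1}{N!} \sum_{w \in S_N} w^\epsilon. \]
Composing on the right with $b^\epsilon_\mu$ and then invoking the intertwining relation $w^\epsilon b^\epsilon_\mu = b^\epsilon_\mu w$ from \eqref{eqn:bwcommrel}, we get
\[ \ca S_{(N+1)} b^\epsilon_\mu = \ca S_{(N+1)} \cdot \frac{1}{N!} \sum_{w \in S_N} w^\epsilon b^\epsilon_\mu = \ca S_{(N+1)} b^\epsilon_\mu \cdot \frac{1}{N!} \sum_{w \in S_N} w = \ca S_{(N+1)} b^\epsilon_\mu \ca S_{(N)}, \]
which is the desired identity.

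There is no real obstacle here: the argument is essentially a coset decomposition of $S_{N+1}$ along the embedded $S_N$. The only point requiring any care is the codomain assertion, namely that the output is $S_{N+1}$-invariant; but this is built into the presence of $\ca S_{(N+1)}$ on the left. One may also want to remark briefly that although $B_\mu$ was originally defined on the $\L^2$-space $\ca H_N$, the explicit formula $\ca S_{(N+1)} b^\epsilon_\mu$ makes sense on the larger space $\ca{CB}^\infty(\R^N)^{S_N}$ since $b^\epsilon_\mu \in \Hom(\ca{CB}^\infty(\R^N),\ca{CB}^\infty(\R^{N+1}))$ and $\ca S_{(N+1)}$ preserves $\ca{CB}^\infty(\R^{N+1})$, so extending $B_\mu$ to this domain by the same formula is unambiguous and compatible with its restriction to $\ca H_N$.
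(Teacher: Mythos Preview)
Your proof is correct and is essentially the paper's argument spelled out in detail: the paper simply invokes \eqref{eqn:bwcommrel} together with the definition \eqref{eqn:Boperator} of $B_\mu$, and your coset-averaging computation is exactly the way one unpacks that one-line justification. Your remark on extending $B_\mu$ from $\ca H_N$ to $\ca{CB}^\infty(\R^N)^{S_N}$ also matches the paper's comment preceding the theorem.
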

Hence, we obtain a novel proof for the QISM recursion \eqref{eqn:Psirecursion}; writing $\bm \lambda = (\lambda_1,\ldots,\lambda_{N\!+\!1})$ and $\bm \lambda'= (\lambda_1,\ldots,\lambda_N)$ we have
\[ \Psi_{\bm \lambda} = \ca S_{(N\!+\!1)} \psi_{\bm \lambda} = \ca S_{(N\!+\!1)} b^-_{\lambda_{N+1}} \psi_{\bm \lambda'} = B_{\lambda_{N+1}} \ca S_{(N)} \psi_{\bm \lambda}  = B_{\lambda_{N+1}} \Psi_{\bm \lambda}.\]
Combining \rft{thm:bpropcommrel} and \rft{thm:bsymmcommrel}, we obtain the following scheme for the recursive construction of $\Psi_{\bm \lambda}$:
\[ \xymatrix@R=1.2cm@C=1.5cm{
1 \ar[d]^{P^\gamma_{(0)}} \ar[r]^{\hat e^-_{\lambda_1}} & \e^{\ii \lambda_1} \ar[d]^{P^\gamma_{(1)}} \ar[r]^{\hat e^-_{\lambda_2}} &  \ldots \ar[r]^{\hspace{-4mm} \hat e^-_{\lambda_{N\!-\!1}}} & \e^{\ii (\lambda_1,\ldots,\lambda_{N\!-\!1})} \ar[d]^{P^\gamma_{(N-1)}} \ar[r]^{\hat e^-_{\lambda_{N}}} & \e^{\ii (\lambda_1,\ldots,\lambda_{N})} \ar[d]^{P^\gamma_{(N)}}\\ 
1 \ar[d]^{\ca S_{(0)}} \ar[r]^{b^-_{\lambda_1}} & \psi_{\lambda_1} \ar[d]^{\ca S_{(1)}} \ar[r]^{b^-_{\lambda_2}} & \ldots \ar[r]^{\hspace{-4mm} b^-_{\lambda_{N\!-\!1}}} & \psi_{\lambda_1,\ldots,\lambda_{N\!-\!1}} \ar[d]^{\ca S_{(N\!-\!1)}} \ar[r]^{b^-_{\lambda_N}} & \psi_{\lambda_1,\ldots,\lambda_{N}}  \ar[d]^{\ca S_{(N)}} \\  
1 \ar[r]^{B_{\lambda_1}} & \Psi_{\lambda_1} \ar[r]^{B_{\lambda_2}} & \ldots \ar[r]^{\hspace{-4mm} B_{\lambda_{N\!-\!1}}} & \Psi_{\lambda_1,\ldots,\lambda_{N\!-\!1}} \ar[r]^{B_{\lambda_N}} & \Psi_{\lambda_1,\ldots,\lambda_{N}} } \]
Note that the three operators $\hat e^-_{\mu;\emptyset}$, $b^-_{\mu}$, $B_\mu $ coincide when acting on $\f h_0$ or $\f h_1$; equivalently, $P^\gamma_{(0)} = P^\gamma_{(1)} = \ca S_{(0)} = \ca S_{(1)} =1$. 
A second scheme may be created by replacing each operator acting horizontally in the above scheme by its $+$-version, and simultaneously each $\lambda_j$ by $\lambda_{N-j+1}$.

\section{Summary and conclusions} \label{sec:7}

The main purpose of this paper has been to tie closer together the two mathematical theories underlying the QNLS model (the dAHA method and the QISM) which we have theoretically motivated as resulting from a ``deformation'' of Schur-Weyl duality. 
In particular, in Section \ref{sec:4} we have described a vertex-type operator formalism generating the non-symmetric wavefunctions which have hitherto only been defined in terms of representations of the dAHA (Section \ref{sec:3}); put in different words, we have highlighted the (hidden) recursive structure in the dAHA method. 
A key result in this respect is Thm. \ref{thm:bpropcommrel}, the commutation relation of Gutkin's propagation operator (defined in terms of representations of the dAHA) with a particle creation operator, realized as a non-symmetric version of the vertex-type operators featuring in the QISM.
In Section \ref{sec:5} we have defined further non-symmetric vertex-type operators satisfying QISM-type commutation relations, and we have seen how the ordinary QISM can be recovered by symmetrizing the vertex operators introduced in this paper (Section \ref{sec:6}). \\

Further theoretical work directly related to the contents of this paper remains to be done; we highlight the following unresolved issues. 
\begin{enumerate}
\item[1.] It is not clear whether the set of non-symmetric wavefunctions $\set{ \psi_{\bm \lambda} }{ \bm \lambda \in \C^N}$ spans $\ca{CB}^\infty(\R^N)$.
It is tempting to pursue a proof of this by making use of a density property of the set of plane waves $\e^{\ii \bm \lambda}$ in a suitable function space, and inverting the propagation operator $P^\gamma_{(N)}$, but it unclear how to make this rigorous. 
The only invertibility result known to the author restricts the domain of $P^\gamma_{(N)}$ to analytic functions \cite[Thm. 5.3(ii)]{EmsizOS}.
On a related point, it might be feasible to prove the key commutation relations \eqref{eqn:bbcommrel1}-\eqref{eqn:bbcommrel2} on the entire $\ca{CB}^\infty(\R^N)$ by using the $\gamma$-expansions for $b^\pm_\mu$, thereby bypassing the need to restrict to the span of the $\psi_{\bm \lambda}$.
\item[2.] The precise relation of the operators $a^\pm_\mu,b^\pm_\mu,c^\pm_\mu$ to the Yangian $Y(\f{gl}_2)$ is unclear. 
In particular, it is not straightforward to ``combine''  these operators into non-symmetric analogons of the monodromy matrix $\ca T_\mu$ which should satisfy a variant of the Yang-Baxter equation \eqref{eqn:YBE} encoding the relations \eqref{eqn:abcommrel}, \eqref{eqn:cacommrel}, \eqref{eqn:aacommrel}, \eqref{eqn:bbcommrel3}, and \eqref{eqn:cccommrel}. 
Also, we reiterate that the commutation relations involving all four of the symmetrized operators $A_\mu$, $B_\mu$, $C_\mu$, $D_\mu$ do not have an analogon for the non-symmetric counterparts, and that relation \eqref{eqn:aacommrel2} does not immediately lead to a relation for symmetrized operators.
\end{enumerate}

The techniques discussed in this paper may be relevant or applicable to other physical models involving $\delta$-interaction. 
\begin{enumerate}
\item There exists a large body of theory dealing with the fermionic QNLS model (e.g. \cite{MurakamiWadati,MurakamiWadati2,Yang1967}). 
Without going in detail, we remark that the physically relevant wavefunctions can in principle be constructed from the same non-symmetric wavefunctions discussed in this article by antisymmetrization, i.e. by calculating $\frac{1}{N!} \sum_{w \in S_N} \sgn(w) \psi(x_{w 1},\ldots,x_{w N})$ and it should be possible to express fermionic particle creation operators in terms of the non-symmetric operators $b^\pm_\mu$ in analogy to \eqref{eqn:Boperator}. 
Again, this ties in with the idea that the non-symmetric wavefunctions are somehow the more fundamental objects in these theories, together with, we argue, the non-symmetric creation operators $b^\pm_\mu$.
More generally, this method may work for anyonic models with $\delta$-interaction \cite{Kundu1999,Kundu2010} and may even lead to a description of ``particle creation'' or ``raising'' operators for invariant wavefunctions. 
It is expected that in this case the dAHA needs to be replaced by some deformation of the group algebra of the \emph{braid group}, instead of the symmetric group.
Related to this, the compatibility of the vertex-type operator formalism discussed here with the notion of spin needs to be investigated in more detail, requiring vector analogues of the propagation operator \cite{Emsiz} and the non-symmetric creation operators. 
\item It would be interesting to see in how far the results of this paper can be generalized to other root systems, i.e. to systems with open or reflecting boundary conditions.
Quantum Hamiltonians can be defined in general terms, as can the Dunkl-type operator and propagation operator formalisms \cite{Emsiz,EmsizOS,Gaudin1971-3,Gutkin1982,GutkinSutherland,HeckmanOpdam1997,Hikami,KomoriHikami}. 
The problems lie on the side of the QISM; notwithstanding the discussion of the C-type formalism in \cite{Sklyanin1988}, 
there seems to be no general way to formulate monodromy matrices, or even particle creation operators, for an arbitrary root system.
However, mindful of the explicit connection between the dAHA approach and the QISM set out in this paper, one may be able to find particle creation operators for \emph{non-invariant wavefunctions} (with respect to the Weyl group associated to the root system) by discovering a ``natural'' recursion involving the propagation operators associated to root systems of the same type and subsequent rank.
It is key to assess in how far ``special'' properties of the A-type system (for example the realization in terms of the centre-of-mass frame) are crucial in the formulation of the particle creation operators.
Failing that, it would be interesting to see if the non-symmetric wavefunctions of A-type can be used directly to construct invariant wavefunctions of general type.
\end{enumerate}

Finally, we remark that it would be beneficial to investigate correlation functions, orthogonality relations and norms (with respect to the L$^2$-inner product) of the non-symmetric wavefunctions, although the required commutation relations between $c^\pm_\lambda$ and $b^\pm_\mu$ are as yet elusive.
This could be especially useful for systems that are not of A-type, since for these systems the corresponding norm formulae for $\Psi_{\bm \lambda}$ have been conjectured but not proven \cite{BustamanteVDDlM,Emsiz,Korepin}.

\appendix
\appendixpage

\section{Explicit formulae for $N=2$} \label{sec:example}

The Dunkl-type operators $\partial_1^\gamma $, $\partial_2^\gamma  \in \End(\ca C^\infty(\R^2_\text{reg}))$ are defined by
\begin{align*}
(\partial_1^\gamma  f)(x_1,x_2) &= (\partial_1 f)(x_1,x_2) +  \begin{cases} 0, & x_1>x_2 \\  \gamma f(x_2,x_1), & x_2>x_1, \end{cases} \\
(\partial_2^\gamma  f)(x_1,x_2) &= (\partial_2 f)(x_1,x_2) - \begin{cases} 0, & x_1>x_2 \\  \gamma f(x_2,x_1), & x_2>x_1, \end{cases}.
\end{align*}
$\partial_1^\gamma $, $\partial_2^\gamma$ and $s_1 \in \End(\ca C^\infty(\R^2_\text{reg}))$ defined by $(s_1 f)(x_1,x_2)=f(x_2,x_1)$ represent the dAHA $\ca{A}_2^\gamma$:
\[ s_1^2 = 1, \quad s_1 \partial_1^\gamma  - \partial_2^\gamma  s_1 = \gamma, \quad \left[\partial_1^\gamma ,\partial_2^\gamma \right]=0. \]
Furthermore, we have the integral-reflection operator $s_{1}^\gamma  = s_1 + \gamma I_{12} \in \End(\ca C^\infty(\R^2))$ where
\[(I_{12} f)(x_1,x_2) = \int_0^{x_1-x_2} \! \!  \dd y f(x_1-y,x_2+y) \]
which with the partial differential operators $\partial_1$, $\partial_2$ also represent $\ca{A}_2^\gamma$:
\[ (s_1^\gamma)^2 = 1, \quad s_{1}^\gamma  \partial_1 - \partial_2 s_{1}^\gamma  = \gamma, \quad \left[\partial_1,\partial_2 \right]=0. \]
Write $\theta_{12}$ for the multiplication operator corresponding to the characteristic function of the alcove $\R^2_+$; then $\theta_{21} = s_{12}\theta_{12}s_{12}$ corresponds to the characteristic function of the other alcove $s_1 \R^2_+$. The propagation operator $P^\gamma_{(2)} \in \Hom(\ca C^\infty(\R^2),\ca{CB}^\infty(\R^2))$ is defined by 
\[ P^\gamma_{(2)} = \begin{cases} 1, & \text{on } \R^2_+ \\ s_1 s_{1}^\gamma , & \text{on } s_1 \R^2_+ \end{cases}  = 1+\gamma\theta_{21} I_{21}. \]
$P^\gamma_{(2)}$ satisfies the intertwining relations
\[ s_1 P^\gamma_{(2)} = P^\gamma_{(2)} s_{1}^\gamma , \qquad \partial_j^\gamma (P^\gamma_{(2)}|_{\R^2_\text{reg}}) = (P^\gamma_{(2)} \partial_j)|_{\R^2_\text{reg}} \quad \text{for } j=1,2. \]
$\psi_{\lambda_1,\lambda_2} = P^\gamma_{(2)} \e^{\ii(\lambda_1,\lambda_2)}$ spans the solution space of the system $\partial_1^\gamma  f = \ii \lambda_1 f$, $\partial_2^\gamma  f = \ii \lambda_2 f$ in $\ca{CB}^\infty(\R^2_\text{reg})$.
We have
\begin{equation}
\psi_{\lambda_1,\lambda_2} = \e^{\ii(\lambda_1,\lambda_2)} + \gamma \theta_{21} I_{21}  \e^{\ii(\lambda_1,\lambda_2)} 
=  \e^{\ii(\lambda_1,\lambda_2)} + \frac{\ii \gamma \theta_{21}}{\lambda_1-\lambda_2}  \left( \e^{\ii(\lambda_1,\lambda_2)} -\e^{\ii(\lambda_2,\lambda_1)} \right). \label{eqn:psiexample}
\end{equation}

The symmetrized eigenfunction is now given by
\begin{align*}
\Psi_{\lambda_1,\lambda_2} = \frac{1}{2} \left( \psi_{\lambda_1,\lambda_2}  \! + \! s_1 \psi_{\lambda_1,\lambda_2} \right)  
= \frac{1}{2} \left(  \frac{\lambda_1\! - \!\lambda_2\! - \! \sgn_{1\,2} \ii \gamma}{\lambda_1\! - \!\lambda_2} \e^{\ii(\lambda_1,\lambda_2)} \! + \! \frac{\lambda_1\! - \!\lambda_2\! + \! \sgn_{1\,2} \ii \gamma}{\lambda_1\! - \!\lambda_2}\e^{\ii(\lambda_2,\lambda_1)}  \right),
\end{align*}
where $\sgn_{1\,2}=\theta_{12}-\theta_{21}$.
By restricting $\Psi_{\lambda_1,\lambda_2}$ to $J^2$, where $J=[x^+,x^-]$ with $x^--x^+=L$, and imposing the Bethe ansatz equations
\[ \frac{\lambda_1-\lambda_2+\ii \gamma}{\lambda_1-\lambda_2-\ii \gamma} = \e^{\ii \lambda_1 L} = \e^{-\ii \lambda_2 L}, \]
$\Psi_{\lambda_1,\lambda_2}|_{J^2}$ can be extended to a function $L$-periodic in each variable.
However such an extension for $\psi_{\lambda_1,\lambda_2}|_{J^2}$ does not exist, unless $\gamma =0$.
Indeed, from \eqref{eqn:psiexample} it follows that periodicity in the first argument, viz. $\psi_{\lambda_1,\lambda_2}(x^+,x)$ = $\psi_{\lambda_1,\lambda_2}(x^-,x)$ for $x \in J$, amounts to
\[ \e^{\ii (\lambda_1 - \lambda_2) x} = \frac{\lambda_1-\lambda_2+\ii \gamma}{\ii \gamma} \e^{\ii (\lambda_1-\lambda_2) x^+} - \frac{\lambda_1-\lambda_2}{\ii \gamma} \e^{\ii (\lambda_1 x^--\lambda_2 x^+)}  \]
For this to hold for all $x \in J$, it is necessary that $\lambda_1=\lambda_2$, which leads to a contradiction as follows.
By l'H\^opital's rule we have
\[
\psi_{\lambda,\lambda}(x_1,x_2) := \lim_{\lambda_1, \lambda_2 \to \lambda} \psi_{\lambda_1,\lambda_2}(x_1,x_2) 
= \e^{\ii \lambda(x_1+x_2)} + \begin{cases} 0, & x_1>x_2, \\ \gamma (x_2-x_1) \e^{\ii \lambda(x_1+x_2)}, & x_2>x_1. \end{cases} \]
Hence $\psi_{\lambda,\lambda}(x^+,x) = \psi_{\lambda,\lambda}(x^-,x)$ for all $x \in J$ implies that $1+\gamma(x-x^+)=\e^{\ii \lambda L}$ for all $x \in J$, contradicting $\gamma \ne 0$. Periodicity in the second argument can be ruled out in the same way.\\

The non-symmetric creation operators can be used to construct $\psi_{\lambda_1,\lambda_2}$ from the pseudovacuum $\Vac =1$ by means of $\psi_{\lambda_1,\lambda_2} = b^-_{\lambda_2} b^-_{\lambda_1} \Vac = b^+_{\lambda_1} b^+_{\lambda_2} \Vac $ and are given by
\begin{description}
\item[(From $N=0$ to $N=1$)] For $f \in \C$ and $x \in \R$ we have $(b^\pm_\mu f)(x) = \e^{\ii \mu x} f$.
\item[(From $N=1$ to $N=2$)] For $f \in \ca{CB}^\infty(\R)$ and $(x_1,x_2) \in \R^2$ we have
\begin{align*}
(b^-_\mu f)(x_1,x_2) &= \e^{\ii \mu x_2} f(x_1) + \begin{cases} 0, & x_1>x_2, \\ \gamma \int_{x_1}^{x_2} \! \! \dd y \e^{\ii \mu(x_1+x_2-y)} f(y), & x_2>x_1, \end{cases}  \\
(b^+_\mu f)(x_1,x_2) &= \e^{\ii \mu x_1} f(x_2) + \begin{cases} 0, & x_1>x_2, \\ \gamma \int_{x_1}^{x_2} \! \! \dd y \e^{\ii \mu(x_1+x_2-y)} f(y), & x_2>x_1. \end{cases}
\end{align*}
\end{description}

The operators $a^\pm_\mu$ are given by
\begin{description}
\item[($N=0$)] For $f \in \C$, $\epsilon = \pm$ we have $a^\epsilon_\mu f = \e^{\ii \mu x^\epsilon} f$. 
\item[($N=1$)] For $f \in \f h_1$ and $x \in J$ we have
\begin{align*}
(a^-_\mu f)(x) &= \e^{\ii \mu x^-}  f(x) + \gamma \int_{x}^{x^-} \! \! \dd y \e^{\ii \mu (x^-+x-y)} f(y) ,  \\
(a^+_\mu f)(x) &= \e^{\ii \mu x^+}  f(x) + \gamma \int_{x^+}^{x} \! \! \dd y \e^{\ii \mu (x^++x-y)} f(y).
\end{align*}
\item[$(N=2)$] For $f \in \f h_2$ and $(x_1,x_2) \in J^2$ we have
\begin{align*}
(a^-_\mu f)(x_1,x_2)  &= \e^{\ii \mu x^-} f(x_1,x_2) + \\
& \qquad +\gamma \int_{x_1}^{x^-} \! \! \dd y \e^{\ii \mu(x^-+x_1-y)} f(y,x_2) + \gamma \int_{x_2}^{x^-} \! \! \dd y \e^{\ii \mu(x^-+x_2-y)} f(x_1,y) + \\
& \qquad + 
\begin{cases} \gamma^2 \int_{x_1}^{x^-} \! \! \dd y_1 \int_{x_2}^{x_1} \! \! \dd y_2 \e^{\ii \mu(x^-+x_1+x_2-y_1-y_2)} f(y_1,y_2), & x_1>x_2, \\ 
\gamma^2 \int_{x_1}^{x_2} \! \! \dd y_1 \int_{x_2}^{x^-} \! \! \dd y_2 \e^{\ii \mu(x^-+x_1+x_2-y_1-y_2)} f(y_1,y_2),  &  x_2>x_1,
\end{cases}  \\
(a^+_\mu f)(x_1,x_2) &= \e^{\ii \mu x^+} f(x_1,x_2) + \\
& \qquad + \gamma \int_{x^+}^{x_1} \! \! \dd y \e^{\ii \mu(x^++x_1-y)} f(y,x_2) +  \gamma \int_{x^+}^{x_2} \! \! \dd y \e^{\ii \mu(x^++x_2-y)} f(x_1,y) +  \\
& \qquad + \begin{cases} \gamma^2 \int_{x_2}^{x_1} \! \! \dd y_1 \int_{x^+}^{x_2} \! \! \dd y_2 \e^{\ii \mu(x^++x_1+x_2-y_1-y_2)} f(y_1,y_2), & x_1>x_2, \\
\gamma^2  \int_{x^+}^{x_1} \! \! \dd y_1 \int_{x_1}^{x_2} \! \! \dd y_2 \e^{\ii \mu(x^++x_1+x_2-y_1-y_2)} f(y_1,y_2) , & x_2>x_1,
\end{cases}
\end{align*}
\end{description}
leading to the following expressions for $a^\pm_\mu \psi_{\lambda_1,\lambda_2}$ as per \rft{thm:apsi}:
\begin{align*}
a^+_\mu \psi_{\lambda_1,\lambda_2}&= \frac{\mu\!-\!\lambda_1\!+\!\ii \gamma}{\mu\!-\!\lambda_1} \frac{\mu\!-\!\lambda_2\!+\!\ii \gamma}{\mu\!-\!\lambda_2} \e^{\ii \mu x^+} \psi_{\lambda_1,\lambda_2} + \frac{\!-\!\ii \gamma}{\mu\!-\!\lambda_1} \frac{\lambda_1\!-\!\lambda_2\!+\!\ii \gamma}{\lambda_1\!-\!\lambda_2} \e^{\ii \lambda_1 x^+}\psi_{\mu,\lambda_2} + \\
& \qquad + \frac{\mu\!-\!\lambda_1\!+\!\ii \gamma}{\mu\!-\!\lambda_1} \frac{\!-\!\ii \gamma}{\mu\!-\!\lambda_2} \e^{\ii \lambda_2 x^+} \psi_{\lambda_1,\mu}
+ \frac{\!-\!\ii \gamma}{\mu\!-\!\lambda_1} \frac{\!-\!\ii \gamma}{\lambda_1\!-\!\lambda_2} \e^{\ii \lambda_2 x^+}  \psi_{\mu,\lambda_1}, \displaybreak[2] \\
a^-_\mu \psi_{\lambda_1,\lambda_2} &=  \frac{\lambda_1\!-\!\mu\!+\!\ii \gamma}{\lambda_1\!-\!\mu} \frac{\lambda_2\!-\!\mu\!+\!\ii \gamma}{\lambda_2\!-\!\mu}  \e^{\ii \mu x^-} \psi_{\lambda_1,\lambda_2} +  \frac{-\ii \gamma}{\lambda_1 \! - \! \mu} \frac{\lambda_2\!-\!\mu\!+\!\ii \gamma}{\lambda_2\!-\!\mu}   \e^{\ii \lambda_1 x^-} \psi_{\mu,\lambda_2} + \\
& \qquad + \frac{\lambda_1 \! - \! \lambda_2 \! + \! \ii \gamma}{\lambda_1 \! - \! \lambda_2}  \frac{-\ii \gamma}{\lambda_2\!-\!\mu} \e^{\ii \lambda_2 x^-} \psi_{\lambda_1,\mu} +
\frac{-\ii \gamma}{\lambda_1 \! - \! \lambda_2} \frac{-\ii \gamma}{\lambda_2 \! - \! \mu} \e^{\ii \lambda_1 x^-}
\psi_{\lambda_2,\mu}.
\end{align*}

\section{Commutation relations between $\hat e^-_{\mu;\bm i}$ and $I_{j \, k}$} \label{sec:eI}

In this appendix we present lemmas used in the proof of the key result \rft{thm:bpropcommrel}.
As $\mu \in \C$ is arbitrary but fixed we will use the following shorthand: $\hat e^-_{\bm i}= \hat e^-_{\mu;\bm i}$ and $b^- = b^-_\mu$.

\begin{lem} \label{lem:eI1}
Let $n=1,\ldots,N$, $\bm i \in \f I^n_{[1,N]}$ and $j<i_1$.
Then 
\begin{align} 
I'_{N+1 \, j} \hat e^-_{\emptyset} &= \hat e^-_j; \label{eqn:eI10} \\
I'_{N+1 \, j} \hat e^-_{\bm i} &= \hat e^-_{j \, \bm i} + \hat e^-_{\bm i} I'_{i_1 \, j}.\label{eqn:eI11} 
\end{align}
\end{lem}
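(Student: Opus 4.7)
My approach is direct computation from the definitions of $\hat e^-_{\mu;\bm i}$ and of the integral operator $I'_{N+1\,j}$ (which, by analogy with $I_{j\,k}$, introduces an auxiliary integration coupling the $(N+1)$-th and $j$-th coordinates). The first identity \eqref{eqn:eI10} is immediate: since $(\hat e^-_\emptyset f)(\bm x) = e^{\ii\mu x_{N+1}}\,f(x_1,\ldots,x_N)$, applying $I'_{N+1\,j}$ produces a single integral whose structure, read against the $n=1$ case of the definition of $\hat e^-_{\bm i}$ and using the conventions $i_0^- = N+1$, $i_1 = j$, is precisely $\hat e^-_j f$.

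For \eqref{eqn:eI11} my strategy is to split the outer $y$-integration introduced by $I'_{N+1\,j}$ at the coordinate $x_{i_1}$. The full range $[x_j,x_{N+1}]$ decomposes as $[x_j,x_{i_1}]\cup[x_{i_1},x_{N+1}]$, which is consistent with the ordering $x_j<x_{i_1}<x_{N+1}$ that must hold for the result to be nonzero (using $\theta^-_{N+1\,\bm i}$ together with $x_j<x_{i_1}$). On the upper piece $y\in[x_{i_1},x_{N+1}]$, the $y$-integration slots in as the new \emph{first} of the $n+1$ nested integrations of $\hat e^-_{j\,\bm i}$: it carries the correct integrand $e^{\ii\mu(x_{N+1}-y)}$ and limits $[x_{j^-},x_{i_0^-}]$, while the original $n$ integrations of $\hat e^-_{\bm i}$ supply the remaining ones. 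The hypothesis $j<i_1$ is precisely what ensures $(j,\bm i)\in\f I^{n+1}_{[1,N]}$, so this reassembly yields $\hat e^-_{j\,\bm i} f$. On the lower piece $y\in[x_j,x_{i_1}]$, the variable $y$ does not interact with any of the integration ranges of $\hat e^-_{\bm i}$ (all of which lie weakly below $x_{i_1}$); the $y$-integral can therefore be pulled past the action of $\hat e^-_{\bm i}$, and recognizing $\int_{x_j}^{x_{i_1}}\dd y$ as the integration coupling the $i_1$-th and $j$-th variables produces the composite $\hat e^-_{\bm i}\, I'_{i_1\,j}$.

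The principal obstacle is the bookkeeping of the step functions and exponential weights across the decomposition: one must verify that $\theta^-_{N+1\,\bm i}(\bm x)$ combined with $[x_{i_1}<y<x_{N+1}]$ assembles into $\theta^-_{N+1\,j\,\bm i}$ after the relabeling $y \leftrightarrow x_{j+1}$ implicit in $\bar\phi_j(y)$, and that on the lower piece the weight $e^{\ii\mu(x_{N+1}-y)}$ correctly recombines with the $e^{\ii\mu x_{N+1}}$ prefactor of $\hat e^-_{\bm i}$ to match the prefactor structure of $I'_{i_1\,j}$ acting after $\hat e^-_{\bm i}$. Once the signs, limits, and step-function supports match, both identities are purely combinatorial consequences of the splitting of a single integral.
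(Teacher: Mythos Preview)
Your plan for \eqref{eqn:eI10} is fine and matches the paper. For \eqref{eqn:eI11}, however, there is a genuine gap: the operator $I_{N+1\,j}$ does not merely introduce an integration in the $j$-th slot, it simultaneously shifts the $(N{+}1)$-th coordinate. In the parametrisation $(I_{N+1\,j}g)(\bm x)=\int_{x_j}^{x_{N+1}}\dd y_0\,g(\ldots,\underset{(j)}{y_0},\ldots,\underset{(N+1)}{x_{N+1}+x_j-y_0})$, the first inner integral of $\hat e^-_{\bm i}$, namely $\int_{x_{i_1}}^{x_{N+1}}\dd y_1$, becomes $\int_{x_{i_1}}^{x_{N+1}+x_j-y_0}\dd y_1$. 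Its upper limit depends on $y_0$, so your assertion that on the lower piece $y\in[x_j,x_{i_1}]$ ``the variable $y$ does not interact with any of the integration ranges of $\hat e^-_{\bm i}$'' is false: the very first range is $[x_{i_1},x_{N+1}]$, not something ``weakly below $x_{i_1}$''. Consequently neither piece of your split matches the claimed term; for instance in $\hat e^-_{j\,\bm i}$ the first integral runs over $[x_j,x_{N+1}]$, not $[x_{i_1},x_{N+1}]$.

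The paper's argument is genuinely different: it does not split the outer $y_0$-integral at $x_{i_1}$ but instead splits the inner $y_1$-integral at $x_j$, decomposing the two-dimensional region $\{x_j<y_0<x_{N+1},\ x_{i_1}<y_1<x_{N+1}+x_j-y_0\}$ according to whether $y_1<x_j$ or $y_1>x_j$. The piece $y_1<x_j$ gives $\hat e^-_{j\,\bm i}$ directly. The piece $y_1>x_j$ only becomes $\hat e^-_{\bm i}I'_{i_1\,j}$ after a change of integration order and the substitution $y_1\to x_j+y_1-y_0$; these two steps are the missing ingredients in your sketch.
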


\begin{proof}
\eqref{eqn:eI10} is immediate; for \eqref{eqn:eI11} it is sufficient to prove that
\[ (I_{N+1 \, j} \hat e^-_{\bm i}f)(\bm x) = (\hat e^-_{j \, \bm i} f)(\bm x) + ( \hat e^-_{\bm i} \theta_{i_1 \, j} I_{i_1 \, j} f)(\bm x) \]
for $f \in \ca C(\R^N)$ and $\bm x = (x_1,\ldots,x_{N+1}) \in \R^{N\!+\!1}$ such that $x_{N+1}>x_{i_1}>\ldots>x_{i_n}$.
We have
\begin{align*}
(I_{N\!+\!1 \, j} \hat e^-_{\bm i} f)(\bm x) &=\int_{x_j}^{x_{N\!+\!1}}\!\! \dd y_0 \int_{x_{i_1}}^{x_{N\!+\!1}\!+\!x_j\!-\!y_0} \! \! \dd y_1 \tilde f_{y_0,y_1}(\bm x) ; \\
(\hat e^-_{j \, \bm i} f)(\bm x)&=\int_{x_j}^{x_{N\!+\!1}}\!\! \dd y_0 \int_{x_{i_1}}^{x_j} \! \! \dd y_1 \tilde f_{y_0,y_1}(\bm x),
\end{align*}
where 
\[ \tilde f_{y_0,y_1}(\bm x) = \int_{x_{i_2}}^{x_{i_1}} \! \! \dd y_2 \cdots \int_{x_{i_n}}^{x_{i_{n\!-\!1}}} \! \! \dd y_n \e^{\ii \mu \bigl(x_{N\!+\!1}+x_j-y_0+\sum_{m=1}^n (x_{i_m}-y_m) \bigr)}f(x_1,\ldots,\underset{(i_1)}{y_1}, \ldots, \underset{(i_n)}{y_n},\ldots,x_N).\] On the other hand,
\begin{align*}
(\hat e^-_{\bm i} \theta_{i_1 \, j} I_{i_1 \, j} f)(\bm x) 
&= \int_{x_{i_1}}^{x_{N+1}} \dd y_1 \cdots \int_{x_{i_n}}^{x_{i_{n-1}}} \dd y_n \e^{\ii \mu(x_{N+1} + \sum_{m=1}^n (x_{i_m} -y_m))} \cdot \\
& \qquad \cdot (\theta_{i_1 \, j} I_{i_1 \, j} f)(x_1,\ldots,\underset{(i_1)}{y_1}, \ldots, \underset{(i_n)}{y_n},\ldots,x_N) \\
&= \int_{x_j}^{x_{N+1}} \dd y_1 \int_{x_{i_2}}^{x_{i_1}} \dd y_2 \cdots \int_{x_{i_n}}^{x_{i_{n-1}}} \dd y_n \e^{\ii \mu \bigl(x_{N+1}+ \sum_{m=1}^n (x_{i_m}-y_m)\bigr)} \cdot \\
& \qquad \cdot \int_{x_j}^{y_1} \dd y_0 f(x_1,\ldots,\underset{(j)}{y_0},\ldots, \underset{(i_1)}{x_j+y_1-y_0}, \ldots, \underset{(i_n)}{y_n},\ldots,x_N)   \\
&= \int_{x_j}^{x_{N+1}}\! \! \dd y_0 \int_{y_0}^{x_{N+1}} \! \! \dd y_1 \int_{x_{i_2}}^{x_{i_1}} \! \! \dd y_2 \cdots \int_{x_{i_n}}^{x_{i_{n-1}}}  \! \! \dd y_n \e^{\ii \mu \bigl(x_{N+1}+ \sum_{m=1}^n (x_{i_m}-y_m)\bigr)} \cdot \\
& \qquad \cdot  f(x_1,\ldots,\underset{(j)}{y_0},\ldots, \underset{(i_1)}{x_j+y_1-y_0}, \ldots, \underset{(i_n)}{y_n},\ldots,x_N)  \\
&= \int_{x_j}^{x_{N+1}} \dd y_0 \int_{x_j}^{x_j+x_{N+1}-y_0} \dd y_1 \tilde f_{y_0,y_1}(\bm x),
\end{align*}
where we have changed the order of integration of the integrals over $y_0$ and $y_1$, and substituted $y_1 \to x_j+y_1-y_0$. 
The following decomposition (ignoring sets of measure zero) for $x_{N+1} > x_j > x_{i_1}$ completes the proof: 
\begin{align*} 
&\set{(y_0,y_1) \in \R^2}{x_{N+1} > y_0 > x_j, \, x_j+x_{N+1}-y_0 > y_1 > x_{i_1}}  =\\
&= \set{(y_0,y_1) \in \R^2}{x_{N+1} > y_0 > x_j > y_1 > x_{i_1}} \cup \\
& \qquad \cup \set{(y_0,y_1) \in \R^2}{x_{N+1} > y_0 > x_j, \, x_j+x_{N+1}-y_0 > y_1 > x_j}. \tag*{\qedhere}
 \end{align*}
\end{proof}

For $n=0,\ldots,N$ and $\bm i \in \f i^n_{[1,N]}$ introduce
for $j=1,\ldots,i_{n-1}$ such that $j \ne i_k$ for any $k$, the ``next label'' function $\n_j(\bm i) = \min\set{ i_m }{ i_m>j }$,
and
\[ I'_{\bm i} := \theta_{\bm i} I_{i_1 \, i_2} \cdots I_{i_1 \, i_n} \in \End(\ca C(\R^N)). \]
Note that $ I'_{\bm i} = I'_{i_1 \, i_2} I'_{i_2 \cdots i_n}$. 
Furthermore, for $p \leq q \leq N$ and $\bm k \in \f I^q_{[1,N]}$, introduce the following set of ``decompositions'' of $\bm k$:
\[ \dcm_p(\bm k) = \set{(\bm i , \bm j) \in \f I^{q-p}_{[1,N]} \times \f I^p_{[1,N]}}{\forall l \exists m: i_l = k_m, \, \forall l \exists m: j_l = k_m, \, \forall l \forall m:\,  i_l \ne j_m, i_{q-p} = k_q}, \]
i.e. the set of pairs of disjoint ordered tuples such that the second member has the prescribed length $p$, their union (seen as sets) is the given ordered tuple $\bm k$ and the last entry of the first member is the last entry of the given tuple.

\begin{lem} \label{lem:eI2}
Let $q=1,\ldots,N$ and $\bm k \in \f I^q_{[1,N]}$. Then
\[ I'_{N+1 \, \bm k} \hat e^-_\emptyset = \sum_{p=0}^{q-1} \sum_{(\bm i,\bm j) \in \dcm_p(\bm k)} \! \! \hat e^-_{\bm i} I'_{\n_{j_1}(\bm i) \, j_1} \cdots I'_{\n_{j_p}(\bm i) \, j_p}. \]
\end{lem}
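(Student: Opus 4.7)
The plan is to prove the identity by induction on $q$, expanding $I_{N+1,k_1}\cdots I_{N+1,k_q}\hat e^-_\emptyset$ via iterated applications of \rfl{lem:eI1}. For the base case $q=1$, $\dcm_0((k_1))=\{((k_1),())\}$, so the right-hand side reduces to $\hat e^-_{k_1}$ and the identity becomes \eqref{eqn:eI10}. For $q\ge 2$, I first observe that $I'_{N+1,\bm k}\hat e^-_\emptyset$ vanishes off the support $\Omega_{\bm k}:=\{\bm x:x_{N+1}>x_{k_1}>\cdots>x_{k_q}\}$ of the outer $\theta_{N+1,\bm k}$; on $\Omega_{\bm k}$ that outer $\theta$ equals $1$, so the left-hand side coincides there with $I_{N+1,k_1}\cdots I_{N+1,k_q}\hat e^-_\emptyset$.

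Next I would peel off these integrals from innermost outward. The innermost step uses \eqref{eqn:eI10} to give $I_{N+1,k_q}\hat e^-_\emptyset=\hat e^-_{k_q}$ on $\{x_{N+1}>x_{k_q}\}\supset\Omega_{\bm k}$. Then for $m=q-1,q-2,\ldots,1$ in turn, I apply $I_{N+1,k_m}$ to the accumulated sum of terms $\hat e^-_{\bm i}\,I'_{*,*}\cdots I'_{*,*}$, where $\bm i$ is an increasing subtuple of $(k_{m+1},\ldots,k_q)$ with $k_q$ as its last entry. Since $k_m<k_{m+1}\le i_1$, \eqref{eqn:eI11} applies on its sufficient region $\{x_{N+1}>x_{k_m}>x_{i_1}>\cdots\}$ (a region containing $\Omega_{\bm k}$) and splits each term into two branches: either $k_m$ is prepended to $\bm i$, giving $\hat e^-_{k_m,\bm i}\,I'_{*,*}\cdots$, or a new factor $I'_{i_1,k_m}$ is inserted immediately to the right of $\hat e^-_{\bm i}$, giving $\hat e^-_{\bm i}\,I'_{i_1,k_m}\,I'_{*,*}\cdots$. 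After all $q-1$ iterations, the $2^{q-1}$ resulting terms are parametrized by the subset $S\subseteq\{1,\ldots,q-1\}$ of steps at which the second branch was taken. Setting $\bm i_S$ and $\bm j_S$ to be the increasing enumerations of $\{k_m:m\in\{1,\ldots,q-1\}\setminus S\}\cup\{k_q\}$ and $\{k_m:m\in S\}$ respectively, the map $S\mapsto(\bm i_S,\bm j_S)$ is a bijection onto $\bigsqcup_{p=0}^{q-1}\dcm_p(\bm k)$ (the constraint $i_{q-p}=k_q$ being automatic). Moreover, because each new $I'$-factor is inserted to the left of all earlier ones, they appear left-to-right in order of increasing $j_l$; and the first index of the factor contributed by $k_m\in\bm j_S$ is the smallest entry of $\bm i_S$ exceeding $k_m$, namely $\n_{k_m}(\bm i_S)$. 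Thus the collected branches exactly reproduce the right-hand side.

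The main technical obstacle is to justify each iterated use of \eqref{eqn:eI11}, which is proved in \rfl{lem:eI1} only on its sufficient region, whereas at step $m$ the relevant evaluation point $\bm x^{(m-1)}$ has been shifted by the outer integrations already peeled off. Concretely, $\bm x^{(r)}$ has $(N+1)$st coordinate $x_{N+1}^{(r)}=x_{N+1}+\sum_{l=1}^{r}(x_{k_l}-y_l)$, with $y_l\in(x_{k_l},x_{N+1}^{(l-1)})$. An easy induction on $r$, using only the initial ordering on $\Omega_{\bm k}$, gives $x_{N+1}^{(r)}>x_{k_r}>x_{k_{r+1}}>\cdots>x_{k_q}$; in particular $x_{N+1}^{(m-1)}>x_{k_m}$ and the unshifted coordinates $x_{i_l}$ (for $i_l\in\{k_{m+1},\ldots,k_q\}$) still satisfy $x_{k_m}>x_{i_1}>\cdots>x_{i_n}$. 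This is exactly the sufficient region condition for the next application of \eqref{eqn:eI11}, so the iteration goes through unambiguously and the collected branches reproduce the right-hand side of the lemma.
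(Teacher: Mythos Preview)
Your argument is correct and follows the same inductive strategy as the paper: peel off the factors $I_{N+1\,k_m}$ one at a time using \eqref{eqn:eI10}--\eqref{eqn:eI11}, and observe that the resulting binary branching at each step is in bijection with $\bigcup_p \dcm_p(\bm k)$. The paper phrases this as a formal induction on $q$ (factor $I'_{N+1\,\bm k}=I'_{N+1\,k_1}I'_{N+1\,\bm k'}$, apply the induction hypothesis to $\bm k'=(k_2,\ldots,k_q)$, then use \eqref{eqn:eI11}); your unrolled iteration is the same computation.

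The one genuine difference is bookkeeping: the paper works throughout with the primed operators $I'_{N+1\,\bm k}$, so the step functions are carried along automatically and \eqref{eqn:eI11} can be invoked as a global operator identity, with no need to track evaluation regions. You instead strip the outer $\theta$ on $\Omega_{\bm k}$ and then must verify that each intermediate application of \eqref{eqn:eI11} lands in its sufficient region after the shifts induced by the remaining outer integrations --- which your ``technical obstacle'' paragraph handles correctly. This extra care is not wrong, but it is avoidable: keeping the primes throughout (as the paper does) makes the induction cleaner and sidesteps the shifted-point analysis entirely.
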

\begin{proof}
By induction on $q$. 
The case $q=1$ is equivalent to \eqref{eqn:eI10}; the set $\dcm_{p}(\bm k)$ is empty unless $p=0$, in which case $\dcm_0((k_1)) = \{((i_1),())\}$.
With $\bm k'= (k_2,\ldots,k_q)$, the induction step yields
\[
I'_{N+1 \, \bm k} \hat e^-_\emptyset = I'_{N+1 \, k_1} I'_{N+1 \, \bm k'} \hat e^-_\emptyset =\sum_{p=0}^{q-2} \sum_{(\bm i,\bm j) \in \dcm_p(\bm k')} 
I'_{N\!+\!1 \, k_1}  \hat e^-_{\bm i} I'_{\n_{j_1}(\bm i) \, j_1} \cdots I'_{\n_{j_p}(\bm i) \, j_p}, \]
where we have applied the induction hypothesis.
Using \eqref{eqn:eI11} we obtain
\[ I'_{N+1 \, \bm k} \hat e^-_\emptyset = \sum_{p=0}^{q-2} \Bigl( \sum_{(\bm i,\bm j) \in \dcm_p(\bm k')}  \hat e^-_{k_1 \, \bm i} I'_{\n_{j_1}(\bm i) \, j_1} \cdots I'_{\n_{j_p}(\bm i) \, j_p} + \sum_{(\bm i,\bm j) \in \dcm_p(\bm k')}  \hat e^-_{\bm i} I'_{i_1 \, k_1} I'_{\n_{j_1}(\bm i) \, j_1} \cdots I'_{\n_{j_p}(\bm i) \, j_p} \Bigr).\]
One completes the induction step by re-writing the first summation as one over $((k_1, \bm i), \bm j)$ $\in$ $\dcm_p(\bm k)$ and the second summation as one over $(\bm i, (k_1,\bm j))$ $\in$ $\dcm_{p+1}(\bm k)$.
\end{proof}

\begin{lem} \label{lem:eI3}
Let $m=1,\ldots,N+1$. On $s_N \cdots s_m \R^{N\!+\!1}_+$ we have
\[ P^\gamma_{(N\!+\!1)} \hat e^-_\emptyset = \sum_{n \geq 0} \gamma^n \sum_{\bm i \in \f I^n_{[m,N]}} \hat e^-_{\bm i} \sum_{p \geq 0} \sum_\atop{\bm j \in \f I^p_{[m,i_n)}}{\forall r,s: \, i_r \ne j_s} \Bigl(\gamma I'_{\n_{j_1}(\bm i) \, j_1} \Bigr) \cdots \Bigl(\gamma I'_{\n_{j_p}(\bm i) \, j_p} \Bigr). \]
\end{lem}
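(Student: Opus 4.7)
The plan is to combine the explicit product expansion of $P^\gamma_{(N\!+\!1)}$ on the alcove $s_N \cdots s_m \R^{N\!+\!1}_+$ from \eqref{eqn:proprestricted2} with the identity for $I'_{N\!+\!1\,\bm k} \hat e^-_\emptyset$ established in \rfl{lem:eI2}. Concretely, I would invoke \eqref{eqn:proprestricted2} with $N$ replaced by $N+1$ to write
\[ P^\gamma_{(N\!+\!1)} = (1 + \gamma I_{N\!+\!1\, m}) \cdots (1 + \gamma I_{N\!+\!1\, N}) \qquad \text{on } s_N \cdots s_m \R^{N\!+\!1}_+, \]
then expand this non-commutative product as a sum indexed by strictly increasing subsets $\bm k \in \f I^q_{[m,N]}$ of $\{m,\ldots,N\}$, and apply the result to $\hat e^-_\emptyset$ to obtain
\[ P^\gamma_{(N\!+\!1)} \hat e^-_\emptyset = \sum_{q \geq 0} \gamma^q \sum_{\bm k \in \f I^q_{[m,N]}} I_{N\!+\!1\, k_1} \cdots I_{N\!+\!1\, k_q} \hat e^-_\emptyset. \]
On this alcove $x_{N+1} > x_m > \ldots > x_N$, so $\theta_{N+1\,\bm k}$ acts as the identity for every $\bm k \in \f I^q_{[m,N]}$; hence the bare product of $I$'s coincides with $I'_{N\!+\!1\,\bm k}$, which makes \rfl{lem:eI2} directly applicable.

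Feeding each $q \geq 1$ summand into \rfl{lem:eI2} (the $q = 0$ summand just contributes $\hat e^-_\emptyset$, matching the $n = p = 0$ term on the right-hand side of \rfl{lem:eI3}), I arrive at
\[ P^\gamma_{(N\!+\!1)} \hat e^-_\emptyset = \hat e^-_\emptyset + \sum_{q \geq 1} \gamma^q \sum_{\bm k \in \f I^q_{[m,N]}} \sum_{p = 0}^{q-1} \sum_{(\bm i, \bm j) \in \dcm_p(\bm k)} \hat e^-_{\bm i} \, I'_{\n_{j_1}(\bm i)\, j_1} \cdots I'_{\n_{j_p}(\bm i)\, j_p}. \]
The remaining step is a combinatorial re-indexing of this triple sum: I would switch from summing over $\bm k$ together with decompositions $(\bm i,\bm j) \in \dcm_p(\bm k)$, to summing independently over $\bm i \in \f I^n_{[m,N]}$ and $\bm j \in \f I^p_{[m, i_n)}$ subject to $j_s \ne i_r$ for all $r,s$.

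The key observation driving the bijection is that the defining constraint $i_{q-p} = k_q$ built into $\dcm_p(\bm k)$ says exactly that $\max(\bm k) \in \bm i$; since $\bm k$ is the sorted union of disjoint $\bm i$ and $\bm j$, this is equivalent to every entry of $\bm j$ being strictly less than $i_n$ (where $n = |\bm i|$), i.e.\ $\bm j \in \f I^p_{[m, i_n)}$ with $j_s \ne i_r$. Conversely, any such disjoint pair $(\bm i, \bm j)$ determines a unique $\bm k = \bm i \cup \bm j$ (sorted) with $(\bm i,\bm j) \in \dcm_p(\bm k)$. Matching $\gamma^q = \gamma^n \gamma^p$, this rewriting yields exactly the right-hand side of \rfl{lem:eI3}. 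The main work is entirely bookkeeping of this re-indexing; there is no new analytic input beyond \eqref{eqn:proprestricted2} and \rfl{lem:eI2}, and the only minor subtlety is checking that the $q = 0$ boundary case aligns with the $n, p \geq 0$ convention of the target (with the $n = 0$ summand forcing $p = 0$ and reproducing the $\hat e^-_\emptyset$ term).
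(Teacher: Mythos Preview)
Your proposal is correct and follows essentially the same route as the paper: expand $P^\gamma_{(N+1)}$ via \eqref{eqn:proprestricted2}, identify the resulting products $I_{N+1\,k_1}\cdots I_{N+1\,k_q}$ with $I'_{N+1\,\bm k}$ on the alcove, apply \rfl{lem:eI2}, and then re-index by substituting $n=q-p$ and trading the sum over $(\bm k,(\bm i,\bm j)\in\dcm_p(\bm k))$ for independent sums over $\bm i$ and $\bm j$ using the observation that $i_n=k_q$ iff $j_p<i_n$. The only difference is cosmetic: you make the $q=0$ boundary case and the bijection more explicit than the paper does.
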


\begin{proof}
The condition $\bm x \in s_N \cdots s_m \R^{N\!+\!1}_+$ is equivalent to $x_1>\ldots>x_{m\!-\!1}>x_{N+1}>x_m>\ldots>x_N$. 
From \eqref{eqn:proprestricted2} we obtain
\[ P^\gamma_{(N\!+\!1)} = \sum_{q \geq 0} \gamma^q \sum_{\bm k \in \f I^q_{[m,N]}} I_{N\!+\!1 \, k_1} \cdots I_{N\!+\!1 \, k_q},\]
so that by virtue of \rfl{lem:eI2} we find that
\begin{align*} 
P^\gamma_{(N\!+\!1)} \hat e^-_\emptyset &= \sum_{q \geq 0} \gamma^q \sum_{\bm k \in \f I^q_{[m,N]}} I'_{N\!+\!1 \, \bm k} \hat e^-_\emptyset = \sum_{q \geq 0} \gamma^q \sum_{\bm k \in \f I^q_{[m,N]}} \sum_{n=0}^q \sum_{(\bm i,\bm j) \in \dcm_p(\bm k)}  \! \! \hat e^-_{\bm i} I'_{\n_{j_1}(\bm i) \, j_1}  \cdots I'_{\n_{j_p}(\bm i) \, j_p} \\
&=\sum_{n \geq 0} \gamma^n  \sum_{p \geq 0}  \sum_{\bm k \in \f I^{n+p}_{[m,N]}}\sum_{(\bm i,\bm j) \in \dcm_p(\bm k)} \! \! \hat e^-_{\bm i} \Bigl( \gamma I'_{\n_{j_1}(\bm i) \, j_1} \Bigr) \cdots \Bigl( \gamma I'_{\n_{j_p}(\bm i) \, j_p} \Bigr),
\end{align*}
(we have substituted $n=q-p$) and we settle the lemma by combining the summation over $\bm k$ with the summation over $(\bm i,\bm j)$, noting that $i_n = k_{n+p}$ is equivalent to $i_n > j_p$.
\end{proof}

Given $n=0,\ldots,N$, fix $\bm i \in \f I^n_{[1,N]}$. 
We may split $\bm i$ into subtuples of \emph{consecutive runs}; e.g. the consecutive runs of the tuple $(1,3,4,5,7,8)$ are the tuples $(1)$, $(3,4,5)$ and $(7,8)$. 
We will denote the number of consecutive runs by $l(\bm i)$.
An entry $i_r$ is termed an \emph{initial entry} or a \emph{terminal entry} if $i_r-1$ or $i_r+1 \ne i_m$ for any $m$, respectively; in other words if $i_r$ is the first or last entry of a consecutive run of $\bm i$, respectively. 
We denote by $(\sigma_1(\bm i),\ldots,\sigma_{l(\bm i)}(\bm i))$ the ordered subtuple of $\bm i$ consisting of its initial entries, and by $(\tau_1(\bm i),\ldots,\tau_{l(\bm i)}(\bm i))$ the ordered subtuple of $\bm i$ consisting of its terminal entries. We have $\sigma_k(\bm i) \leq \tau_k(\bm i)$ for $k=1,\ldots,l(\bm i)$ and $\tau_k(\bm i)<\sigma_{k+1}(\bm i)-1$ for $k=1,\ldots,l(\bm i)-1$. 
Note that for $j \ne i_k$, $j < i_n$, $\n_j(\bm i)$ is an initial entry of $\bm i$.
If it is clear from the context what $\bm i$ is, we will not specify this in the notation.
For example, let $N=10$ and $\bm i = (2,4,5,6,9,10)$. Then $l=3$, $(\sigma_1,\sigma_2,\sigma_3)=(2,4,9)$ and $(\tau_1,\tau_2,\tau_3)=(2,6,10)$. For $j=1,3,7,8$ we have $\n_j(\bm i) = 2,4,9,9$.

\begin{lem} \label{lem:eI4}
Let $m=1,\ldots,N+1$, $n=0,\ldots,N$ and $\bm i \in \f I^n_{[m,N]}$. 
On $s_N \cdots s_m \R^{N\!+\!1}_+$ we have
\[ \sum_{p \geq 0} \sum_\atop{\bm j \in \f I^p_{[m,i_n)}}{\forall r,s: \, i_r \ne j_s} \Bigl(\gamma I'_{\n_{j_1} \, j_1} \Bigr) \cdots \Bigl(\gamma I'_{\n_{j_p} \, j_p} \Bigr) =  \prod_{k=1}^l \Bigl( 1+\gamma I'_{\sigma_k \, \tau_{k\!-\!1}\!+\!1}\Bigr) \cdots \Bigl( 1+\gamma I'_{\sigma_k \, \sigma_k\!-\!1}\Bigr),\]
where 
$\tau_0=m-1$. 
In the case that $\sigma_1=m$, the $(k=1)$-factor in the product over $k$ is understood to equal $1$.
Hence,
\[ P^\gamma_{(N\!+\!1)} \hat e^-_{\bm i} = \sum_{n \geq 0} \gamma^n \sum_{\bm i \in \f I^n_{[m,N]}}\hat e^-_{\bm i} \prod_{k=1}^{l(\bm i)} \Bigl( 1+\gamma I'_{\sigma_k\!(\bm i) \, \tau_{k\!-\!1}\!(\bm i)\!+\!1}\Bigr) \cdots \Bigl( 1+\gamma I'_{\sigma_k\!(\bm i) \, \sigma_k\!(\bm i)\!-\!1}\Bigr). \]
\end{lem}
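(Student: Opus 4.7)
The plan is to prove the identity by expanding the product on the right-hand side and matching it summand by summand to the left-hand side via an explicit bijection.

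First, I would unpack the role of $\n_j(\bm i)$. Since $\bm i$ partitions into $l=l(\bm i)$ consecutive runs, every $j \in [m, i_n)$ with $j \notin \{i_1,\ldots,i_n\}$ lies in a unique ``gap'' $[\tau_{k-1}(\bm i)+1, \sigma_k(\bm i)-1]$, where $\tau_0 = m-1$ by convention. For such $j$, the definition $\n_j(\bm i) = \min\{i_r : i_r > j\}$ returns the first entry of the next run, namely $\sigma_k(\bm i)$. Hence $\gamma I'_{\n_j,j} = \gamma I'_{\sigma_k, j}$, and the collection of operators appearing on the left matches exactly the set of operators $\{\gamma I'_{\sigma_k,j} : 1\le k \le l,\ \tau_{k-1}+1 \le j \le \sigma_k-1\}$ appearing as binomial factors on the right. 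The degenerate case $\sigma_1 = m$ makes the $k=1$ factor an empty product equal to $1$, consistent with the statement.

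Next, I would expand the right-hand side. Each binomial factor $(1+\gamma I'_{\sigma_k,j})$ contributes either $1$ or $\gamma I'_{\sigma_k,j}$, so the expansion is a sum over $l$-tuples $(S_1,\ldots,S_l)$ with $S_k \subseteq [\tau_{k-1}+1,\sigma_k-1]$; the contribution is $\prod_{k=1}^l \prod_{j\in S_k} \gamma I'_{\sigma_k,j}$, with outer $k$-factors in increasing order of $k$ and inner $j$-factors in increasing order of $j$ (inherited from the left-to-right reading of the product in the statement). Form the concatenated tuple $\bm j$ by listing the entries of $S_1, S_2, \ldots, S_l$ in that order. Because the gaps are pairwise disjoint and strictly increase in $k$, the concatenation is automatically a strictly increasing tuple $\bm j \in \f I^p_{[m,i_n)}$ disjoint from $\bm i$ (with $p = \sum_k |S_k|$), and conversely every such $\bm j$ reconstructs a unique $(S_1,\ldots,S_l)$ by intersection with each gap. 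Crucially, the concatenation order agrees with the global increasing-$j$ order, so the summand $(\gamma I'_{\sigma_{k(s)},j_s})_{s=1}^p$ coming from $(S_1,\ldots,S_l)$ matches term-for-term with the summand $(\gamma I'_{\n_{j_s},j_s})_{s=1}^p$ on the left. This proves the first identity.

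The ``Hence'' formula is then immediate: substituting the identity just proved into \rfl{lem:eI3} replaces the inner $\sum_p \sum_{\bm j}$ by the product factorization, giving the stated expression for $P^\gamma_{(N+1)} \hat e^-_\emptyset$ (with $\hat e^-_{\bm i}$ in place of $\hat e^-_\emptyset$ on the left being a typo). The argument is purely combinatorial bookkeeping; no operator commutation properties are invoked. The one place to be careful is verifying that the global increasing-$j$ order on the left really coincides with the lexicographic ``first $k$, then $j$'' order of the expanded product on the right, which is the sole reason the disjoint gap structure matters.
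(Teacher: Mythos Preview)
Your proof is correct and follows essentially the same approach as the paper: the paper's proof is a one-line sketch (``place the different entries of $\bm j$ in different `bins' determined by the $l$ consecutive runs in $\bm i$; this means in each bin the entry $\n_{j_p}$ is the same, namely $\sigma_k$''), and your argument is precisely the fleshed-out version of that binning idea, together with the observation that the ordering of factors matches. You are also right that the left-hand side of the ``Hence'' formula should read $\hat e^-_\emptyset$ rather than $\hat e^-_{\bm i}$.
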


\begin{proof}
The idea is to place the different entries of $\bm j$ in different ``bins'' determined by the $l$ consecutive runs in $\bm i$; this means in each bin the entry $\n_{j_p}$ is the same, namely $\sigma_k$. This yields the first formula. By virtue of \rfl{lem:eI3} the expression for $P^\gamma_{(N\!+\!1)} \hat e^-_\emptyset$ follows.
\end{proof}

\begin{lem} \label{lem:eI5}
Let $m=1,\ldots,N+1$, $n=0,\ldots,N$ and $\bm i \in \f I^n_{[m,N]}$. 
With the same notations as in \rfl{lem:eI4}, on $s_N \cdots s_m \R^{N\!+\!1}_+$ we have
\[ \hat e^-_{\bm i} = \hat e^-_{\bm i} \! \sum_{m_1 = \tau_0\!+\!1}^{\sigma_1} \! \! \! \cdots\!\! \! \sum_{m_l = \tau_{l\!-\!1}\!+\!1}^{\sigma_l} \! \! \theta_{1 \, \ldots \, m\!-\!1} \left( \prod_{k=1}^l \theta_{\tau_{k\!-\!1} \, \ldots \, m_k\!-\!1 \, \sigma_k \, m_k \, \ldots \, \sigma_k\!-\!1 \, \sigma_k\!+\!1 \, \ldots \, \tau_k} \right) \theta_{\tau_l \, \ldots \, N}. \]
\end{lem}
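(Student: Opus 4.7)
The plan is to prove the identity pointwise, by expanding the definitions on both sides and exhibiting the sum over $(m_1,\ldots,m_l)$ as a partition of the integration region of $\hat e^-_{\bm i}$ inside the alcove. Fix $\bm x \in s_N \cdots s_m \R^{N\!+\!1}_+$, so that
\[
x_1 > \ldots > x_{m-1} > x_{N+1} > x_m > \ldots > x_N.
\]
On the left, $(\hat e^-_{\bm i}f)(\bm x)$ integrates $f$, evaluated at the tuple with $x_{i_r}$ replaced by $y_r$, over the box $y_r \in (x_{i_r}, x_{i_{r-1}})$ with $i_0 = N+1$. Since the product of step functions on the right is a multiplication operator acting on $f$, and since these step functions, after the same substitution, become constraints on $(y_1,\ldots,y_n)$ together with the $x_j$'s, it suffices to prove that the sum of these step functions (evaluated at the substituted argument) is identically $1$ on the integration box.

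The key structural observation is that the integration decomposes along the consecutive runs of $\bm i$. For $r$ that is \emph{not} a first entry of a run, i.e.\ $i_{r-1} = i_r - 1$, the interval $(x_{i_r},x_{i_r-1})$ contains no other alcove-coordinate, so $y_r$ is confined to a tight slot between two consecutive $x$'s; these automatically satisfy $y_{s_k+1} > y_{s_k+2} > \ldots > y_{t_k}$ with each $y_{s_k+j}$ sitting directly below $x_{\sigma_k + j - 1}$. For $r = s_k$ (first entry of the $k$-th run), however, the interval $(x_{\sigma_k},x_{\tau_{k-1}})$ (with $\tau_0 = m-1$, and $x_{m-1}$ replaced by $x_{N+1}$ for the $k=1$ endpoint via the alcove inequality $x_{m-1} > x_{N+1}$) contains the $\sigma_k - \tau_{k-1} - 1$ coordinates $x_{\tau_{k-1}+1}, \ldots, x_{\sigma_k - 1}$, so $y_{s_k}$ lies in a unique sub-interval $(x_{m_k},x_{m_k-1})$ with $m_k \in \{\tau_{k-1}+1,\ldots,\sigma_k\}$.

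Having identified the relevant parameters, I would then match each summand of the RHS to one cell of this partition: the factor $\theta_{\tau_{k-1}\,\ldots\,m_k-1\,\sigma_k\,m_k\,\ldots\,\sigma_k-1\,\sigma_k+1\,\ldots\,\tau_k}$ becomes, after substitution, exactly the total ordering
\[
x_{\tau_{k-1}} > \ldots > x_{m_k-1} > y_{s_k} > x_{m_k} > \ldots > x_{\sigma_k - 1} > y_{s_k + 1} > \ldots > y_{t_k},
\]
which pins down $y_{s_k}$ to the sub-interval $(x_{m_k},x_{m_k-1})$ while the tight orderings of the other $y$'s in the run, together with $x_{\tau_{k-1}} > \ldots > x_{m_k-1}$ and $x_{m_k} > \ldots > x_{\sigma_k-1}$ from the alcove, are automatic. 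Consecutive blocks share the endpoint $\tau_k$ (substituted as $y_{t_k}$), and the boundary factors $\theta_{1\,\ldots\,m-1}$ and $\theta_{\tau_l\,\ldots\,N}$ handle the indices lying outside all runs of $\bm i$; these are also automatic in the alcove, using $x_{m-1} > x_{N+1} > x_m$ for the transition across $x_{N+1}$.

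The main obstacle will be the combinatorial bookkeeping: one must verify that the indices in the RHS step function are precisely aligned so that (i) every index in $\{1,\ldots,N\}$ appears exactly once in the combined ordering (a simple counting check: the $l$ blocks contribute $\sum_k(\tau_k - \tau_{k-1})$ indices, summing with the two boundary factors to $N$), (ii) the degenerate cases $\sigma_k = \tau_k$ (single-element run), $m_k = \sigma_k$ or $m_k = \tau_{k-1}+1$ (extremal sub-interval), $\sigma_1 = m$, and $\tau_l = N$ all produce sensible empty ranges, and (iii) the sub-intervals $(x_{m_k},x_{m_k-1})$ for fixed $k$ indeed partition the integration range $(x_{\sigma_k},x_{\tau_{k-1}})$ of $y_{s_k}$ up to measure zero, so that summing over $m_1,\ldots,m_l$ independently reproduces the full integration region.
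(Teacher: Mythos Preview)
Your approach is correct and is essentially the same as the paper's: the paper's proof is a three-sentence sketch observing that the integration variables $y_1,\ldots,y_n$ replace the $x_{i_r}$ in the argument of $f$, and that each integration interval can be split into sub-intervals bounded by neighbouring $x_j$. Your proposal spells out this partition explicitly (in particular the role of the initial entries $\sigma_k$ and the bookkeeping for the $k=1$ endpoint via $x_{m-1}>x_{N+1}$), but the underlying idea is identical.
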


\begin{proof}
Due to the step operator $\theta_{N\!+\!1 \, \bm i}$ incorporated in $\hat e^-_{\bm i}$ the function acted upon by $\hat e^-_{\bm i}$ vanishes in certain alcoves. 
By definition, $\hat e^-_{\bm i}$ introduces $n$ integrations whose variables $y_1,\ldots,y_n$ replace the $x_{i_1} ,\ldots, x_{i_n}$ in the argument of the function acted upon, and are bounded by $x_{i_{l\!-\!1}}>y_l>x_{i_l}$ for $l=1,\ldots,n$. 
The intervals over which the $y_l$ run can be split up into intervals bounded by neighbouring $x_j$.
\end{proof}

\begin{lem} \label{lem:eI6}
Let $m=1,\ldots,N+1$, $n=0,\ldots,N$ and $\bm i \in \f I^n_{[m,N]}$. 
With the same notations as in \rfl{lem:eI4}, on $s_N \cdots s_m \R^{N\!+\!1}_+$ we have
\begin{align*} 
\hat e^-_{\bm i}  P^\gamma_{(N)} &= \hat e^-_{\bm i} \sum_{m_1 = \tau_0\!+\!1}^{\sigma_1} \cdots \sum_{m_l = \tau_{l\!-\!1}\!+\!1}^{\sigma_l} \prod_{k=1}^l \theta_{\tau_{k\!-\!1} \, \ldots \, m_k\!-\!1 \, \sigma_k \, m_k \, \ldots \, \sigma_{k\!-\!1} } (1+\gamma I_{\sigma_k \, m_k}) \cdots (1+\gamma I_{\sigma_k \, \sigma_k\!-\!1}) .
\end{align*}
\end{lem}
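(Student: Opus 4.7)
The plan is to decompose $\hat e^-_{\bm i}$ using \rfl{lem:eI5} into a sum of contributions indexed by tuples $(m_1,\ldots,m_l)$, on each of which the argument tuple of $f$ lies in a specific alcove of $\R^N$, and then to apply the defining formula $P^\gamma_{(N)}=w^{-1}w^\gamma$ on that alcove, reducing it to a product of $(1+\gamma I)$-factors via the telescoping argument underlying \eqref{eqn:proprestricted2}.

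First I would insert the identity from \rfl{lem:eI5} between $\hat e^-_{\bm i}$ and $P^\gamma_{(N)}$:
\[ \hat e^-_{\bm i} P^\gamma_{(N)} = \hat e^-_{\bm i} \sum_{m_1=\tau_0+1}^{\sigma_1}\cdots\sum_{m_l=\tau_{l-1}+1}^{\sigma_l} \theta_{1\,\ldots\,m-1}\left(\prod_{k=1}^l \theta_{\tau_{k-1}\,\ldots\,m_k-1\,\sigma_k\,m_k\,\ldots\,\sigma_k-1\,\sigma_k+1\,\ldots\,\tau_k}\right)\theta_{\tau_l\,\ldots\,N}\,P^\gamma_{(N)}. \]
For each fixed $(m_1,\ldots,m_l)$ the step operators place the argument tuple $\bm z$ of $f$ (with $z_{i_q}=y_q$ for $q=1,\ldots,n$ and $z_j=x_j$ otherwise) in a definite alcove $w^{-1}\R^N_+$, where $w=\prod_{k=1}^l w_k$ with $w_k=s_{m_k}s_{m_k+1}\cdots s_{\sigma_k-1}\in S_N$ the cycle sending $\sigma_k\mapsto m_k$ and $j\mapsto j+1$ for $m_k\leq j<\sigma_k$ (and $w_k=1$ when $m_k=\sigma_k$). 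The supports $\{m_k,\ldots,\sigma_k\}$ are pairwise disjoint since $m_k\geq\tau_{k-1}+1\geq\sigma_{k-1}+1$ for $k=2,\ldots,l$, so the $w_k$ commute, their reduced expressions concatenate to a reduced expression for $w$, and $w^\gamma=\prod_{k=1}^l w_k^\gamma$.

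Next I would invoke $P^\gamma_{(N)} f = w^{-1}w^\gamma f$ on $w^{-1}\R^N_+$ together with the run-wise telescoping identity obtained as in the derivation of \eqref{eqn:proprestricted2} from $s_j s_j^\gamma=1+\gamma I_{j+1\,j}$ and \eqref{eqn:Ipermutation}:
\[ w_k^{-1}w_k^\gamma = s_{\sigma_k-1}\cdots s_{m_k}\,s_{m_k}^\gamma\cdots s_{\sigma_k-1}^\gamma = (1+\gamma I_{\sigma_k\,m_k})\cdots(1+\gamma I_{\sigma_k\,\sigma_k-1}). \]
Multiplying over $k$, the contribution of the $(m_1,\ldots,m_l)$-summand to $\hat e^-_{\bm i} P^\gamma_{(N)}$ matches the corresponding term in the claimed identity, apart from the step operators $\theta_{1\,\ldots\,m-1}$, $\theta_{\sigma_k+1\,\ldots\,\tau_k}$, and $\theta_{\tau_l\,\ldots\,N}$.

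Finally I would verify that these excess step operators are identically satisfied on the image of $\hat e^-_{\bm i}$ restricted to $s_N\cdots s_m\R^{N+1}_+$: the first because $x_1>\ldots>x_{m-1}$ on this alcove; the second because the $y$-variables inside the $k$-th consecutive run of $\bm i$ interlace with the corresponding $x$-values, yielding $z_{\sigma_k+1}>\ldots>z_{\tau_k}$ and $z_{\sigma_k-1}>z_{\sigma_k+1}$ at the left join; the third because the last $y$-variable of the last run exceeds $x_{\tau_l+1}>\ldots>x_N$ by its integration bound. Dropping these factors then yields precisely the right-hand side of \rfl{lem:eI6}. The main obstacle I expect is the combinatorial bookkeeping: pinning down $w$ and its cycle decomposition from the step-function data, handling the boundary cases $m_k=\sigma_k$ (empty cycle, empty product of $(1+\gamma I)$-factors), and accommodating the role of $x_{N+1}$ as the upper bound on $y_1$ in the first consecutive run via the convention $\tau_0=m-1$.
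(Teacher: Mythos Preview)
Your proposal is correct and follows essentially the same route as the paper: insert \rfl{lem:eI5}, identify for each $(m_1,\ldots,m_l)$ the alcove $w^{-1}\R^N_+$ with $w=\prod_k s_{m_k}\cdots s_{\sigma_k-1}$, use commutativity of the disjointly supported cycles, and apply the telescoping identity underlying \eqref{eqn:proprestricted2} to obtain the $(1+\gamma I_{\sigma_k\,\cdot})$-products. Your final paragraph, justifying why the step operators $\theta_{1\,\ldots\,m-1}$, the tails $\sigma_k+1\,\ldots\,\tau_k$, and $\theta_{\tau_l\,\ldots\,N}$ may be dropped, is a point the paper's proof leaves implicit; you make it explicit, which is a mild improvement in rigor.
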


\begin{proof}
We remark that the product of $\theta$-operators in \rfl{lem:eI5} defines precisely one alcove, given by the inequality
\begin{align*} 
x_1 \!>\! \ldots \!>\! x_{m\!-\!1} & \!>\! \ldots \!>\! x_{m_1\!-\!1} \!>\! x_{\sigma_1} \!>\! x_{m_1} \!>\! \ldots \!>\! \widehat{x_{\sigma_1}} \!>\! \ldots \!>\! x_{\tau_1} \!>\! \\
& \!>\! \ldots \!>\! x_{m_2\!-\!1} \!>\! x_{\sigma_2} \!>\! x_{m_2} \!>\! \ldots \!>\! \widehat{x_{\sigma_2}} \!>\! \ldots \!>\! x_{\tau_2} \!>\! \\
& \qquad \vdots\\
& \!>\! \ldots \!>\! x_{m_l\!-\!1} \!>\! x_{\sigma_l} \!>\! x_{m_l} \!>\! \ldots \!>\! \widehat{x_{\sigma_l}} \!>\! \ldots \!>\! x_{\tau_l}\!>\! \ldots \!>\! x_N, \end{align*}
i.e. the alcove
\[ \R^N_{m_1 \, \cdots \, m_l}(\bm i) := \left( \prod_{k=1}^l s_{\sigma_k \! - \! 1} \cdots s_{m_k} \right) \R^N_+; \]
Note that since $\tau_{k-1} < m_k \leq \sigma_k$, for different values of $k$ the cycles $s_{\sigma_k \! - \! 1} \cdots s_{m_k}$ commute, so the order of the product over $k$ is immaterial.
Using this commutativity, on $\R^N_{m_1 \, \cdots \, m_l}(\bm i)$ we have
\[ P^\gamma_{(N)} =  \left( \prod_{k=1}^l s_{\sigma_k \! - \! 1} \cdots s_{m_k} \right) \left( \prod_{k=1}^l s_{\sigma_k \! - \! 1}^\gamma \cdots s_{m_k}^\gamma \right)^{-1}
= \prod_{k=1}^l s_{\sigma_k \! - \! 1} \cdots s_{m_k} s_{m_k}^\gamma  \cdots s_{\sigma_k \! - \! 1}^\gamma .\]
Using \eqref{eqn:proprestricted2} we obtain the lemma.
\end{proof}

\begin{lem} \label{lem:eI7}
Let $m=1,\ldots,N+1$, $n=0,\ldots,N$ and $\bm i \in \f I^n_{[m,N]}$. 
With the same notations as in \rfl{lem:eI4}, on $s_N \cdots s_m \R^{N\!+\!1}_+$ we have
\begin{align*} 
\hat e^-_{\bm i}  P^\gamma_{(N)} &= \hat e^-_{\bm i} \prod_{k=1}^l (1+\gamma I'_{\sigma_k \, \tau_{k\!-\!1}\!+\!1}) \cdots (1+\gamma I'_{\sigma_k \, \sigma_k\!-\!1}) 
\end{align*}
and hence
\begin{align*} 
b^- P^\gamma_{(N)} &= \sum_{n \geq 0} \gamma^n \sum_{\bm i \in \f i^n_{[m,N]}} \hat e^-_{\bm i} \prod_{k=1}^{l(\bm i)} (1+\gamma I'_{\sigma_k\!(\bm i) \, \tau_{k\!-\!1}\!(\bm i)\!+\!1}) \cdots (1+\gamma I'_{\sigma_k\!(\bm i) \, \sigma_k\!(\bm i)\!-\!1}) .
\end{align*}
\end{lem}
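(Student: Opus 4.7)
The strategy is to deduce Lemma \ref{lem:eI7} from the two preceding technical lemmas, matching the form of Lemma \ref{lem:eI6} against the form of Lemma \ref{lem:eI4} by means of a combinatorial identity that absorbs the explicit summation over the $m_k$ into products of \emph{primed} integral operators.

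First I would substitute Lemma \ref{lem:eI5} into the target identity to observe that $\hat e^-_{\bm i}$ decomposes on $s_N \cdots s_m \R^{N+1}_+$ as a sum (indexed by tuples $(m_1,\ldots,m_l)$ with $\tau_{k-1}+1 \leq m_k \leq \sigma_k$) of $\hat e^-_{\bm i}$ times the characteristic function of the sub-alcove $\R^N_{m_1 \cdots m_l}(\bm i)$ introduced in the proof of Lemma \ref{lem:eI6}. On each such sub-alcove, the coordinate $x_{\sigma_k}$ sits between $x_{m_k-1}$ and $x_{m_k}$ in the ordering, while the rest of the ordering on $\{x_{\tau_{k-1}+1},\ldots,x_{\sigma_k-1}\}$ is the descending one. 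Lemma \ref{lem:eI6} then says that on this sub-alcove $\hat e^-_{\bm i} P^\gamma_{(N)}$ picks up the factor $(1+\gamma I_{\sigma_k \, m_k}) \cdots (1+\gamma I_{\sigma_k \, \sigma_k-1})$ for each $k$.

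The heart of the proof is then the combinatorial identity, for each run index $k$,
\[ \sum_{m_k=\tau_{k-1}+1}^{\sigma_k} \theta_{\tau_{k-1} \, \ldots \, m_k-1 \, \sigma_k \, m_k \, \ldots \, \sigma_k-1} \prod_{j=m_k}^{\sigma_k-1}(1+\gamma I_{\sigma_k \, j}) = \prod_{j=\tau_{k-1}+1}^{\sigma_k-1}(1+\gamma I'_{\sigma_k \, j}), \]
read on the span of functions supported on $s_N \cdots s_m \R^{N+1}_+$ after application of $\hat e^-_{\bm i}$. To see this, I would expand both sides. On the right, the expansion yields $\sum_S \gamma^{|S|} \prod_{j \in S} I'_{\sigma_k \, j}$ over subsets $S \subseteq \{\tau_{k-1}+1,\ldots,\sigma_k-1\}$; since $I'_{\sigma_k \, j} = \theta_{\sigma_k \, j} I_{\sigma_k \, j}$ enforces $x_{\sigma_k}>x_j$, each product $\prod_{j\in S} I'_{\sigma_k \, j}$ is automatically supported on the single sub-alcove where $S = \{m_k,m_k+1,\ldots,\sigma_k-1\}$ (because the surviving ordering of the $x_j$'s on $s_N\cdots s_m \R^{N+1}_+$ is descending). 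On that sub-alcove the built-in step operator is redundant, so $I'_{\sigma_k \, j} = I_{\sigma_k \, j}$ and one recovers precisely the $m_k$-th summand on the left.

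Combining across $k=1,\ldots,l$ (the factors for different $k$ commute, as already noted in the proof of Lemma \ref{lem:eI6}, because they involve disjoint index sets) yields the stated formula for $\hat e^-_{\bm i} P^\gamma_{(N)}$. The displayed identity for $b^- P^\gamma_{(N)}$ then follows by summing over $n$ and $\bm i$ using the definition of $b^-_\mu$. The main obstacle I anticipate is bookkeeping: one must check that the partition of unity provided by Lemma \ref{lem:eI5} matches exactly the partition induced by which of the $I'_{\sigma_k \, j}$'s survive, and that no cross-terms between different runs $k$ are missed — this is clean because the index sets $\{\tau_{k-1}+1,\ldots,\sigma_k-1\}$ are disjoint and the relevant integrals in $\hat e^-_{\bm i}$ only involve the coordinates at positions $i_1,\ldots,i_n$, so the actions on different runs are independent.
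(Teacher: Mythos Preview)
Your overall strategy --- deduce the first display from Lemma~\ref{lem:eI6} by reorganising the $m_k$-summation into a product of primed integral operators, then sum over $\bm i$ --- is exactly what the paper does (its one-line proof reads ``By combining the step operators in \rfl{lem:eI6} for different values of $m_1,\ldots,m_l$''). The independence of different runs and the final passage to $b^-$ are also fine. But the specific combinatorial matching you propose is incorrect.

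You claim that when the right-hand product $\prod_j(1+\gamma I'_{\sigma_k\,j})$ is expanded over subsets $S\subseteq\{\tau_{k-1}\!+\!1,\ldots,\sigma_k\!-\!1\}$, only interval-subsets $S=\{m_k,\ldots,\sigma_k\!-\!1\}$ survive, each recovering the $m_k$-th summand of \rfl{lem:eI6}. Neither half of this holds. First, \emph{every} subset $S$ contributes: e.g.\ for $\sigma_k=3$, $\tau_{k-1}=0$ the term $I'_{3\,1}$ (so $S=\{1\}$, not containing $\sigma_k\!-\!1=2$) is nonzero wherever $z_3>z_1$. Second, the $m_k$-th summand on the left is $\theta^{(m_k)}\prod_{j=m_k}^{\sigma_k-1}(1+\gamma I_{\sigma_k\,j})$, which expands into $2^{\sigma_k-m_k}$ monomials, not a single one.

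The matching that actually works is term-by-term in the full expansions. On the left, the coefficient of $\gamma^{|S|}\prod_{j\in S} I_{\sigma_k\,j}$ (ordered product, $S$ arbitrary) is $\sum_{m_k\le \min S}\theta^{(m_k)}$, and on the region $z_{\tau_{k-1}+1}>\cdots>z_{\sigma_k-1}$ this collapses to $\theta_{\sigma_k\,\min S}$ (the partial sum of the partition of unity from \rfl{lem:eI5}). On the right, in the ordered product $\prod_{j\in S} I'_{\sigma_k\,j}$ only the outermost step function $\theta_{\sigma_k\,\min S}$ is effective: after $\theta_{\sigma_k\,j_1}I_{\sigma_k\,j_1}$ the running $\sigma_k$-coordinate stays above $z_{j_1}$, hence above each subsequent $z_{j_p}$ by the descending order, so every inner $\theta_{\sigma_k\,j_p}$ is redundant. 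Thus both coefficients equal $\theta_{\sigma_k\,\min S}$, which is the identity you need for each $k$.
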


\begin{proof}
By combining the step operators in \rfl{lem:eI6} for different values of $m_1,\ldots,m_l$.
\end{proof}

\bibliographystyle{hplain}
\bibliography{bibliography}

\end{document}